\pdfoutput=1
\documentclass[
affil-it, 
auth-lg, 
british, 
compression, 
contents, 
custom-packages={}, 
references={bigone}, 
]{lib/preprint}

%
%

\newcommand{\boundary}{\flat}
\newcommand{\grp}[1]{\underline{#1}}
\renewcommand{\Re}{\mathrm{Re}}
\renewcommand{\Im}{\mathrm{Im}}
\newcommand{\coker}{\mathrm{coker}}
\renewcommand{\pr}{\mathrm{pr}}

%
%


%
%

\begin{document}
    
    \hypersetup{
        pdftitle = {Explicit Non-Abelian Gerbes with Connections},
        pdfauthor = {Dominik Rist,Christian Saemann,Martin Wolf},
        pdfkeywords = {},
    }
    
    \date{\today}
    
    \email{dr.dominik.rist@gmail.com,c.saemann@hw.ac.uk,m.wolf@surrey.ac.uk}
    
    \preprint{EMPG--22--02,DMUS--MP--22/01}
    
    \title{Explicit Non-Abelian Gerbes with Connections} 
    
    \author[a]{Dominik~Rist\,\orcidlink{0000-0002-1817-3458}\,}
    \author[a]{Christian~Saemann\,\orcidlink{0000-0002-5273-3359}\,}
    \author[b]{Martin~Wolf\,\orcidlink{0009-0002-8192-3124}\,}
    
    \affil[a]{Maxwell Institute for Mathematical Sciences, Department of Mathematics,\\ Heriot--Watt University, Edinburgh EH14 4AS, United Kingdom}
    \affil[b]{School of Mathematics and Physics,\\ University of Surrey, Guildford GU2 7XH, United Kingdom}
    
    \abstract{We define the notion of adjustment for strict Lie 2-groups and provide the complete cocycle description for non-Abelian gerbes with connections whose structure 2-group is an adjusted 2-group. Most importantly, we depart from the common fake-flat connections and employ adjusted connections. This is an important generalisation that is needed for physical applications especially in the context of supergravity. We give a number of explicit examples; in particular, we lift the spin structure on $S^4$, corresponding to an instanton--anti-instanton pair, to a string structure, a 2-group bundle with connection. We also outline how categorified forms of Bogomolny monopoles known as self-dual strings can be obtained via a Penrose--Ward transform of string bundles over twistor space.}
    
    \acknowledgements{We thank Hyungrok Kim and Paul Skerritt for fruitful discussions. We are particularly grateful to David Roberts for his contribution to the `Workshop on Higher Gauge Theory and Higher Quantization' in 2014 and for a number of helpful  discussions related to signs in the literature. We are also grateful to the anonymous referees, whose comments have led to many improvements.}
    
    \declarations{
        \textbf{Funding.}
        The work of CS was partially supported by the Leverhulme Research Project Grant RPG-2018-329 `The Mathematics of M5-Branes'.\\[5pt]
        \textbf{Conflict of interest.}
        The authors have no relevant financial or non-financial interests to disclose.\\[5pt]
        \textbf{Data statement.}
        No additional research data beyond the data presented and cited in this work are needed to validate the research findings in this work.\\[5pt]
        \textbf{Licence statement.}
        For the purpose of open access, the authors have applied a Creative Commons Attribution (CC-BY) license to any author-accepted manuscript version arising.
    }
    
    \begin{body}
        
        \section{Introduction and results}   
        
        Higher forms arise as local connection forms or gauge potentials in a number of contexts within physics. The prime example is the Kalb--Ramond $B$-field of string theory, which is then found again in the low-energy supergravity limits. A whole family of higher form gauge potentials arises in the tensor hierarchies of gauged supergravity theories and, consequentially, double and exceptional field theory. Mathematically, these gauge potentials are connections on higher or categorified principal bundles also known as gerbes. This perspective is evidently crucial when supergravity theories are to be considered on topologically non-trivial spaces. 
        
        In the Abelian case, the theory of gerbes is well-established and used in many contexts. There are a number of equivalent descriptions, such as the geometrically appealing bundle gerbes~\cite{Murray:9407015,Murray:2007ps} or the computationally useful Hitchin--Chatterjee gerbes~\cite{Hitchin:1999fh,Chatterjee:1998}. Although the Kalb--Ramond $B$-field is locally an ordinary differential 2-form, Abelian gerbes are not fully sufficient for its description. In the presence of additional gauge bundles, the connections on these mix non-trivially with the connection on the Abelian gerbe. This is the case, e.g., in gauged supergravity, and it was in this context that what are now called connections on string bundles had been observed for the first time~\cite{Bergshoeff:1981um,Chapline:1982ww} in their local, infinitesimal form. Similarly, in the context of T-duality, one often wants to describe string theory backgrounds which correspond to Abelian gerbes on the total space of principal torus bundles. These can be conveniently captured by non-Abelian generalisations of gerbes at the topological level~\cite{Nikolaus:2018qop}. A differential refinement, leading to the full Buscher rules is also possible~\cite{Kim:2022opr}. Moreover, there is reason to believe that aspects of the dynamics of multiple M5-branes can be captured by non-Abelian gerbes~\cite{Saemann:2017zpd,Saemann:2019dsl,Rist:2020uaa}; see also~\cite{Fiorenza:2020hiq} for recent evidence that the string group $\sfString(3)$ naturally emerges on a single M5-brane.\footnote{See also~\cite{Jurco:2019woz} for a general review on higher structures in M-theory.} Finally, several kinds of non-Abelian gerbes have been used in a number of string-theory inspired twistor constructions~\cite{Saemann:2012uq,Saemann:2013pca,Jurco:2014mva,Jurco:2016qwv,Samann:2017dah}.
        
        Non-Abelian gerbes have been introduced in various forms, and all familiar definitions of principal bundles have found higher generalisations, cf.~\cite{Nikolaus:2011ag} and references therein. One of the earliest forms is perhaps the differential cocycle description given in~\cite{Breen:math0106083} and~\cite{Aschieri:2003mw}, and because of its computational usefulness, we shall focus mostly on this picture in the following. This form of non-Abelian gerbes generalises the notion of a principal fibre bundle by lifting the cocycle relations for transition functions and local connection forms to hold up to homotopies, which are then encoded in higher components of the cocycle. 
        
        Whilst the topological description of non-Abelian gerbes is relatively straightforward, its differential refinement by a connection is subtle. An explicit cocycle description for such a differential refinement was obtained in~\cite{Breen:math0106083,Aschieri:2003mw}, but these cocycles contain an additional local 2-form datum\footnote{In~\cite{Breen:math0106083,Aschieri:2003mw}, this 2-form is denoted by $\delta_{ij}$.} compared to the expected $L_\infty$-algebra-valued connections~\cite{Sati:2008eg}. A detailed study of the evident notion of higher parallel transport~\cite{Baez:0511710,Schreiber:0705.0452,Schreiber:2008aa} then suggested that the additional datum is, in a sense, spurious, and that one should work instead with reduced cocycles~\cite{Baez:0511710}, which have subsequently been used throughout much of the literature. This, however, led to a second problem. Consistency of the corresponding parallel transport requires the 2-form part $F$ of the total curvature of a non-Abelian gerbe with connection, which is also known as its fake curvature, to vanish~\cite{Schreiber:0705.0452,Schreiber:2008aa}. The fake-flatness condition, however, is problematic for two reasons: firstly, we know that interesting string backgrounds do not satisfy this condition, and secondly, it can be shown that locally, a fake-flat non-Abelian gerbe with connection is always isomorphic to an Abelian gerbe with connection~\cite{Gastel:2018joi,Saemann:2019dsl}, rendering it essentially useless beyond ordinary Abelian gerbes for all but topological applications in physics. Another undesirable consequence is that after adding fake flatness to the cocycle conditions of non-Abelian gerbes with connection, ordinary principal bundles with non-flat connections are no longer trivially higher bundles. This is clearly undesirable from a category theoretical viewpoint.
        
        The solution to these problems is provided by a third possible type of cocycles which are more general than the reduced ones of~\cite{Baez:0511710}, but do not contain the additional 2-form datum of~\cite{Breen:math0106083,Aschieri:2003mw}. In these \uline{adjusted cocycles} the 3-form part of the curvature as well as the gluing relations and gauge transformations for the 2-form part of the connection are deformed by a term proportional to the fake curvature. 
        
        At a very explicit, local and infinitesimal level and in one particular case, this had been observed already in~\cite{Bergshoeff:1981um,Chapline:1982ww}. A more general mathematical perspective was then given in~\cite{Sati:2008eg,Sati:2009ic,Fiorenza:2010mh}. In this picture, connections are locally morphisms from the Weil algebra of the (higher) gauge algebra to the de~Rham complex on the relevant coordinate patch. The deformation to adjusted connections is then a coordinate change on the relevant Weil algebra. The motivation for the deformation here was a match with the Bianchi identities describing the Green--Schwarz anomaly cancellation derived previously from a physical perspective.
        
        It was then found that adjusted connections have other desirable properties, too. In particular, the BRST complex, the Chevalley--Eilenberg complex dual to the higher gauge Lie algebroid, closes, i.e.~the differential is indeed nilquadratic without further conditions on the curvature forms, contrary to the reduced cocycles of~\cite{Baez:0511710}. This property was then turned into the definition of adjusted connections in~\cite{Saemann:2019dsl}, where also the adjustments for a number of more general higher Lie algebras were computed. Shortly after, it was then shown that a consistent adjusted higher parallel transport that does not rely on fake flatness is indeed possible~\cite{Kim:2019owc}.\footnote{The discussion in~\cite{Kim:2019owc} makes it clear why the adjustment is only visible at the level of connections or parallel transport. Topologically, higher principal bundles are functors from the \v Cech groupoid to the delooping of the higher structure group, but the extension of the \v Cech groupoid to a higher groupoid is trivial in the sense that all higher morphisms are identities. Parallel transport, however, is locally defined as a functor from a higher path groupoid to the delooping of the higher structure group. In the former, the higher morphisms are non-trivial.} Moreover, the origin of the adjustment was traced back to a generalisation of the notion of higher gauge algebra in~\cite{Borsten:2021ljb}, where also adjustments for large classes of higher bundles were derived that are relevant in the context of gauged supergravity.
        
        The explicit finite form of the cocycles for non-Abelian gerbes with adjusted connection has never been provided. This may be due to the fact that the relevant adjustment data is only known for few higher gauge algebras and our understanding of how adjustments should be seen geometrically is still very incomplete. It is the primary goal of this paper to fill this gap, i.e.~to give explicit cocycle and coboundary relations for adjusted connections with a focus on string bundles. A secondary goal is to construct an explicit example.
        
        An evident higher gauge group to use in such an example is a 2-group model of the string group. Recall that a spin structure over an $n$-dimensional oriented Riemannian manifold $X$ is a lift of the frame bundle, a principal $\sfSO(n)$-bundle, to a principal $\sfSpin(n)$-bundle. The obstruction to the existence of such a lift is given by the second Stiefel--Whitney class of $X$. One can now consider spin bundles over loop spaces, as suggested by string theory, and it turns out that the loop space $LX$ of a manifold $X$ carries a spin bundle if and only if\footnote{The theorem assumes $X$ is simply connected and $n\geq5$. The `only if' part holds further requiring that $X$ be 2-connected.} spin bundles over $X$ can be lifted to principal $\sfString(n)$-bundles~\cite{mclaughlin1992orientation}. Here, $\sfString(n)$ is a topological group, defined only up to a large class of equivalence, which is a 3-connected cover of $\sfSpin(n)$, but otherwise shares the same homotopy groups as $\sfSpin(n)$. 
        
        It turns out that the group $\sfString(n)$ is conveniently described in terms of a 2-group model, cf.~\cite{Nikolaus:2011zg}, and from this perspective, string structures are indeed non-Abelian gerbes. This is also the form in which string structures most naturally arise in supergravity and string theory. There are now essentially two extreme string 2-group models. Firstly, there is an infinite-dimensional one based on Kac--Moody central extensions of loop groups, which was developed in~\cite{Baez:2005sn}. On the other hand, there is the finite-dimensional, but much more complicated and less explicit model~\cite{Schommer-Pries:0911.2483}.\footnote{See~\cite{Demessie:2016ieh} for a description of (unadjusted) non-Abelian gerbes using this group.} For convenience, we shall be focusing on the infinite-dimensional model because it admits fully explicit formulas for all cocycles.
        
        Explicitly, we have the following results to report:
        \begin{enumerate}[(i)]\itemsep-2pt
            \item In \cref{ssec:adjusted_BRST_2_groupoid}, we define an adjusted BRST 2-groupoid which integrates in particular the adjusted infinitesimal local gauge transformations known from the literature. 
            \item The definition of the BRST 2-groupoid relied on an additional datum on the gauge Lie 2-group, the adjustment. In \cref{ssec:adjustedCrossedModules}, we define the category of adjusted crossed modules of Lie groups. 
            \item An equivalence of adjusted crossed modules of Lie groups does not directly translate to an equivalence of adjusted BRST 2-groupoids, and for special adjustments, we identify an additional sufficient condition for this, called $P$-flatness, in \ref{ssec:equivalences}.
            \item We then compute the fully differential cocycle description of higher principal bundles\footnote{For us, `higher principal bundle' is synonymous with `principal 2-bundle'.} with adjusted structure 2-group as well as adjusted (and hence, generalised notion of) connection in \cref{sec:adj_cocycles} by stackification of the BRST 2-groupoid.
            \item An instructive example is obtained by replacing a Lie group $\sfG$ by the equivalent crossed module $\caL\sfG$ of path and loop groups of $\sfG$, which induces an adjustment datum as we explain in \cref{ssec:adjustedCrossedModules}. We then show that any principal bundle with connection can equivalently be described as a higher principal bundle with adjusted connection in \cref{sec:principalAdjustedHigherPrincipal}.
            \item Concretely, we present the explicit cocycles for the spin structure on $S^4$, which can be regarded as the principal $\sfSpin(4)$-bundle $\sfSpin(5)\rightarrow S^4\cong\sfSpin(5)/\sfSpin(4)$, as an $\caL\sfSpin(4)$-bundle with adjusted connection in \cref{sec:Spin4Bundle}.
            \item We show that the quotients of certain string 2-groups reduce to quotients of ordinary groups and hence lead to ordinary manifolds regarded as Lie groupoids in \cref{sec:instantonAntiInstantonStringLift}. As a simple example, we consider the space $\sfString(3)/\sfString(2)\cong S^3$, which implies the categorified Hopf fibration of $\sfString(3)$ over $S^3$.
            \item Our second major result is the concrete example of a string 2-bundle constructed in \cref{sec:instantonAntiInstantonStringLift}. Here, we lift the principal $\caL\sfSpin(4)$-bundle equivalent to the spin structure on $S^4$ to a $\sfString(4)$-bundle.\footnote{The case $\sfString(4)$ is particularly interesting from an M-theory perspective: it contains the Lie group $\sfSpin(4)\cong\sfSU(2)\times\sfSU(2)$, which is important in the context of M2-brane models~\cite{Bagger:2007jr,Gustavsson:2007vu}.} 
            \item Our formulas give the explicit finite gauge transformations for the non-Abelian self-dual strings introduced in~\cite{Saemann:2017rjm}, closing a significant gap. This is explained in \cref{ssec:nonAbelianSelfDualString}.
            \item Finally, in \cref{sec:commentsTwistorApproach}, we comment on an iterated Penrose--Ward transform that allows for a construction of solutions to the non-Abelian self-dual string equations of~\cite{Saemann:2017rjm}.
        \end{enumerate}
        An additional minor result is our clarification of the relation between two group cocycles used in the literature to describe the Kac--Moody central extension of the loop group of a group at the beginning of \cref{app:proofs}.
        
        We have tried to be fairly self-contained and detailed in our presentation, mostly for two reasons. Firstly, we had to fix a number of sign inaccuracies in the literature we used, and we therefore wanted to present our computations in a verifiable way. Secondly, we ourselves would have found the detailed review part in this paper invaluable when we started this research project.
        
        \begin{remark}
            Throughout this paper, all manifolds $X,Y,\ldots$ are assumed to be smooth so we drop the adjective `smooth'. Furthermore, $\Omega^k(X)$ are the smooth differential $k$-forms on $X$ with $\Omega^0(X)=\scC^\infty(X)$. We denote the complex of differential forms on $X$ taking values in a vector space $\frg$ by $\Omega^\bullet(X,\frg)=\Omega^\bullet(X)\otimes \frg$.
            
            We denote the unit element of a (Lie) group $\sfG,\sfH,\ldots$ by $\unit$. For simplicity, we will mostly use matrix Lie group/algebra notation. For example if $g$ is an element of a Lie group and $X$ an element of the corresponding Lie algebra, then $g^{-1}Xg$ is another element of the Lie algebra.
        \end{remark}
                
        \section{Principal 2-bundles with adjusted connections}\label{sec:higherPrincipalBundles}
        
        In this section, we start by briefly reviewing crossed modules of Lie groups and Lie algebras. We then integrate the infinitesimal (higher) gauge transformations of adjusted connections of~\cite{Saemann:2019dsl} to a BRST 2-groupoid. The stackification of this BRST 2-groupoid then yields the differential cocycles and coboundaries for principal 2-bundles with adjusted connections. The definition of the BRST 2-groupoid involves an adjustment, a purely algebraic datum on the gauge 2-group, leading to the notion of adjusted crossed module of Lie groups. We also examine how an equivalence between such crossed modules translates to an equivalence of BRST 2-groupoid. Finally, we comment on the differential cohomology for principal 2-bundles with unadjusted connections.
        
        \subsection{Crossed modules and their morphisms}
        
        \paragraph{Crossed modules of Lie groups.}
        Recall that a \uline{crossed module of Lie groups} $\caG\coloneqq(\sfH\overset{\sft}{\longrightarrow}\sfG,\acton)$ consists of two Lie groups $\sfG$ and $\sfH$, an automorphism action $\acton$ of $\sfG$ on $\sfH$, and a morphism of Lie groups $\sft:\sfH\rightarrow\sfG$ such that the conditions
        \begin{equation}\label{eq:crossedModuleConditions}
            \sft(g\acton h_1)\ =\ g\sft(h_1)g^{-1}
            \eand
            \sft(h_1)\acton h_2\ =\ h_1h_2h_1^{-1}
        \end{equation}
        hold for all $g\in\sfG$ and for all $h_{1,2}\in\sfH$; the second condition is called the \uline{Peiffer identity}. Crossed modules of Lie groups can be used to describe strict Lie 2-groups, cf.~\cite{Baez:0307200}. 
        
        Given two crossed modules of Lie groups, $\caG_{1,2}\coloneqq(\sfH_{1,2}\overset{\sft_{1,2}}{\longrightarrow}\sfG_{1,2},\acton_{1,2})$, a \uline{strict morphism} $\phi:\caG_1\rightarrow \caG_2$ is a pair of group homomorphisms $\phi_\sfG:\sfG_1\rightarrow \sfG_2$ and $\phi_\sfH:\sfH_1\rightarrow \sfH_2$ such that 
        \begin{equation}
            \begin{aligned}
                (\sft_2\circ \phi_\sfH)(h)=(\phi_\sfG\circ \sft_1)(h)
                \eand 
                \phi_\sfH(g\acton_1 h)=\phi_\sfG(g)\acton_2\phi_\sfH(h)
            \end{aligned}
        \end{equation}
        for all $g\in \sfG_1$ and $h\in \sfH_1$. A \uline{strict quasi-isomorphism} is a strict morphism that preserves the subgroup $\ker(\sft_1)\subseteq\sfH_1$ as well as the quotient group $\coker(\sft_1)=\sfG_1/\im(\sft_1)$.
        
        A weaker form of morphisms can be defined in terms of smooth flippable butterflies~\cite{Aldrovandi:0808.3627}. Specifically, given two crossed modules of Lie groups, $\caG_{1,2}=(\sfH_{1,2}\overset{\sft_{1,2}}{\longrightarrow}\sfG_{1,2},\acton_{1,2})$, a \uline{butterfly} is a commutative diagram of Lie groups of the form
        \begin{subequations}\label{eq:butterfly}
            \begin{equation}
                \begin{tikzcd}
                    \sfH_1 \arrow[dd,"\sft_1",swap] \arrow[dr,"\lambda_1"]& & \sfH_2 \arrow[dl,"\lambda_2",swap] \arrow[dd,"\sft_2"]
                    \\
                    & \sfE \arrow[dl,"\gamma_1",swap] \arrow[dr,"\gamma_2"]& 
                    \\
                    \sfG_1 & & \sfG_2
                \end{tikzcd}
            \end{equation}
            where $\sfE$ is a Lie group, $\lambda_{1,2}$ and $\gamma_{1,2}$ are morphisms of Lie groups, the NE--SW diagonal is a short exact sequence (i.e.~a Lie group extension), and the NW--SE diagonal is a complex. In addition, the actions need to compatible in the sense of
            \begin{equation}
                \lambda_{1,2}(\gamma_{1,2}(e)\acton_{1,2}h_{1,2})\ =\ e\lambda_{1,2}(h_{1,2})e^{-1} 
            \end{equation}
        \end{subequations}
        for all $e\in\sfE$ and for all $h_{1,2}\in\sfH_{1,2}$. The butterfly is called \uline{flippable} whenever both diagonals are short exact sequences. Given such a flippable butterfly, we call $\caG_1$ and $\caG_2$ \uline{equivalent}. 
        
        We note that any butterfly between two crossed modules $\caG_{1,2}\coloneqq(\sfH_{1,2}\overset{\sft}{\longrightarrow}\sfG_{1,2},\acton_{1,2})$ can be turned into a span of strict Lie 2-group morphisms~\cite[Remark 8.5]{Noohi:0506313} of the form
        \begin{equation}
            \begin{tikzcd}
                & \caK \arrow[dl,"\phi"'] \arrow[dr,"\psi"]
                \\
                \caG_1 & & \caG_2
            \end{tikzcd}
        \end{equation}
        where $\caK$ is a crossed module of the form $\sfH_1\times \sfH_2\rightarrow \sfE$, and $\phi$ and $\psi$ are strict morphisms of 2-groups with $\phi$ a quasi-isomorphism. We call such spans \uline{weak morphisms}. If both $\phi$ and $\psi$ are quasi-isomorphisms, we call $\caG_1$ and $\caG_2$ equivalent.
        
        \paragraph{Crossed modules of Lie algebras.}        
        Applying the Lie functor to a crossed module of Lie groups $\caG\coloneqq(\sfH\overset{\sft}{\longrightarrow}\sfG,\acton)$, we obtain a \uline{crossed module of Lie algebras} $\sfLie(\caG)\coloneqq(\frh\overset{\sft}{\longrightarrow}\frg,\acton)$ where $\frg$ and $\frh$ are the Lie algebras of $\sfG$ and $\sfH$, respectively.\footnote{We follow the literature and slightly abuse notation by denoting the linearisations of $\acton$ and $\sft$ again by the same symbols.} The linearisations of the conditions in~\eqref{eq:crossedModuleConditions} read as 
        \begin{equation}\label{eq:crossedModuleLieAlgebras}
            \sft(V\acton W_1)\ =\ [V,\sft(W_1)]
            \eand
            \sft(W_1)\acton W_2\ =\ [W_1,W_2]
        \end{equation}
        for all $V\in\frg$ and for all $W_{1,2}\in\frh$. 
        
        Both strict morphisms and butterflies between crossed modules of Lie groups differentiate straightforwardly to corresponding notions of morphisms of crossed modules of Lie algebras. We note that for a butterfly between crossed modules of Lie algebras $\sfLie(\caG_{1,2})\coloneqq(\frh_{1,2}\xrightarrow{~t_{1,2}~}\frg_{1,2})$,
        \begin{equation}
            \begin{tikzcd}
                \frh_1 \arrow[dd,"\sft_1",swap] \arrow[dr,"\lambda_1"]& & \frh_2 \arrow[dl,"\lambda_2",swap] \arrow[dd,"\sft_2"]
                \\
                & \fre \arrow[dl,"\gamma_1",swap] \arrow[dr,"\gamma_2"]& 
                \\
                \frg_1 & & \frg_2
            \end{tikzcd}
        \end{equation}
        with a section $\sigma:\frg_1\rightarrow \fre$ of $\gamma_1$ as morphism of vector spaces, we can construct a corresponding weak morphisms of $L_\infty$-algebras between $\sfLie(\caG_{1})$ and $\sfLie(\caG_{2})$, with both trivially regarded as strict 2-term $L_\infty$-algebras~\cite[Prop.~3.4]{Noohi:0910.1818}. Explicitly, the splitting $\frg_1$ provides us with a second map $\tau:\fre\rightarrow \frh_2$, and we have maps 
        \begin{equation}
            \begin{aligned}
                \phi_\frh&:\frh_1\rightarrow \frh_2~,~~~&\phi_\frh(W)&=-\tau(\lambda_1(W))~,
                \\
                \phi_\frg&:\frg_1\rightarrow \frg_2~,~~~&\phi_\frg(V)&=\gamma_2(\sigma(V))~,
                \\
                \phi_2&:\frg_1\wedge \frg_1\rightarrow \frh_2~,~~~&\phi_2(V_1,V_2)&=-\tau([\sigma(V_1),\sigma(V_2)]_\fre)
            \end{aligned}
        \end{equation}
        for all $V,V_{1,2}\in\frg_1$ and for all $W\in \frh_1$ defining a weak morphism of $L_\infty$-algebras between $\sfLie(\caG_{1})$ and $\sfLie(\caG_{2})$. That is, 
        \begin{equation}\label{eq:L_infty_alg}
            \begin{aligned}
                \sft_2(\phi_\frh(W))\ &=\ \phi_\frg (\sft_1(W))~,
                \\
                \phi_\frg([V_1,V_2]_{\frg_1})\ &=\ [\phi_\frg(V_1),\phi_\frg(V_2)]_{\frg_2}+\sft_2(\phi_2(V_1,V_2))~,
                \\
                \phi_\frh(V_1\acton_1 W)\ &=\ \phi_\frg(V_1)\acton_2 \phi_\frh(W)+\phi_2(V_1,\sft_1(W))~,
                \\
                0\ &=\ \phi_2(V_1,[V_2,V_3])+\phi_\frg(V_1)\acton_2\phi_2(V_2,V_3)+\mbox{cycl.}
            \end{aligned}
        \end{equation}
        for all $V_{1,2}\in \frg$ and $W\in \frh$; see e.g.~\cite{Jurco:2018sby,Jurco:2019bvp} for more details. Conversely, any weak morphism of $L_\infty$-algebras can be turned into a butterfly. 
        
        \subsection{Adjusted BRST 2-groupoid for local connections}\label{ssec:adjusted_BRST_2_groupoid}
        
        \paragraph{Generalities.} 
        A surprising feature of the straightforward definition of connections on higher principal bundles is that (higher) gauge transformations generally do not glue together consistently. This phenomenon is well-known in physics, particularly in the local and infinitesimal setting, see e.g.~the discussion in~\cite[Section 4.1]{Jurco:2018sby} or~\cite{Saemann:2019dsl}.         
        
        \paragraph{Unadjusted BRST complex.} Local connection forms on principal bundles, the Lie algebra of local and infinitesimal gauge transformations, and the actions of the latter on the former combine into an action algebroid. The Chevalley--Eilenberg algebra of this action algebroid is a semi-free differential graded commutative algebra with an underlying cochain complex, which physicists usually refer to as the \uline{BRST complex}. Our first aim is to construct the analogous structure for local connection forms on principal 2-bundles.
        
        Given a crossed module of Lie groups $\caG$, we can construct the local and infinitesimal gauge transformations of local connection forms on a contractible manifold $U$ taking values in the crossed module of Lie algebras $\sfLie(\caG)\coloneqq(\frh\overset{\sft}{\longrightarrow}\frg,\acton)$. Concretely, this can be done by constructing the Weil algebra $\sfW(\sfLie(\caG))$ of $\sfLie(\caG)$ following~\cite{Sati:2008eg}, identifying connections with morphisms of differential graded algebras $\sfW(\sfLie(\caG))\rightarrow \Omega^\bullet(U)$. Gauge transformations are then partially flat homotopies between these morphisms, and higher gauge transformations correspond to higher homotopies.
        
        Instead of considering partially flat homotopies, we can also construct the BRST complex directly, by considering the inner homomorphisms from $T[1]U$ to\footnote{Recall that $\sfLie(\caG)$ can be identified with a strict $L_\infty$-algebra, which naturally has the structure of a differential graded (dg-) manifold. Moreover, the shifted tangent bundle of any dg-manifold is again a dg-manifold.}
        $T[1]\sfLie(\caG)$ in the category of N$Q$-manifolds. The BRST complex is obtained by restricting these inner homomorphisms in a straightforward manner, and the details are found in~\cite{Saemann:2019dsl}.
        
        The result of the above construction is the well-established form of gauge transformations for $\sfLie(\caG)$-valued local connection forms. In particular, the connection is given by
        \begin{equation}
            A\in \Omega^1(U, \frg)~,~~~B\in \Omega^2(U,\frh)~,
        \end{equation}
        and infinitesimal gauge transformations are parameterised by $\alpha=(\alpha_0,\alpha_1)$ with
        \begin{subequations}
            \begin{equation}
                \alpha_0\in \Omega^0(U,\frg)~,~~\alpha_1\in \Omega^1(U,\frh)
            \end{equation} 
            and act according to
            \begin{equation}
                \begin{aligned}
                    \delta_\alpha A\ =\ \rmd \alpha_0+[A,\alpha_0]-\sft(\alpha_1)~,~~~
                    \delta_\alpha B\ =\ \rmd \alpha_1+A\acton \alpha_1-\alpha_0\acton B~.
                \end{aligned}
            \end{equation}
        \end{subequations}
        One easily verifies that, as expected, the commutator of two gauge transformations is again a gauge transformation.
        
        Higher gauge transformations, however, are problematic. Using the above machinery, we find that they are parameterised by
        \begin{subequations}
            \begin{equation}
                \beta \in \Omega^0(U,\frh)
            \end{equation}
            and act according to
            \begin{equation}
                \begin{aligned}
                    \delta_\beta \alpha_0\ =\ \sft(\beta)~,~~~
                    \delta_\beta \alpha_1\ =\ \rmd\beta+A\acton \beta~.
                \end{aligned}
            \end{equation}
        \end{subequations}
        The problem is that the gauge transformations parameterised by $\alpha=(\alpha_0,\alpha_1)$ and $\alpha'=(\alpha_0+\delta_\beta \alpha_0,\alpha_1+\delta_\beta \alpha_1)$ yield different results, in particular
        \begin{equation}
            \delta_{\alpha'} B-\delta_\alpha B=F\acton \beta~,
        \end{equation} 
        where
        \begin{equation}
            F\coloneqq \rmd A +\tfrac12[A,A]+\sft(B)
        \end{equation} 
        is the so-called \uline{fake curvature} of the connection. We thus see that higher gauge transformations act appropriately if $F=0$, and this is the familiar fake curvature condition arising often in the context of higher connections.
        
        Physicists would say that the BRST complex is open, and the BRST differential, the Chevalley--Eilenberg differential of the corresponding higher Lie algebroid squares to terms proportionally to an assumed equation of motion $F=0$. This is physically undesirable for a number of reasons as explained in the introduction. From a mathematical perspective, we only have the structure of an action Lie 2-algebroid if we restrict to local connection forms with $F=0$.
        
        \paragraph{Adjusted BRST complex and adjusted BRST Lie 2-algebroid.} As shown in~\cite{Saemann:2019dsl}, this problem can be fixed by deforming the Weil algebra, as also previously suggested in~\cite{Sati:2008eg} for different reasons. This deformation then induces a deformation of the BRST complex derived from the deformed Weil algebra.
        
        The deformation can also be derived directly. We wish to a) preserve the infinitesimal gauge transformations of flat connections, b) preserve trivial infinitesimal gauge transformations, and c) have only analytic expressions in the connection forms and their derivatives appear in (higher) gauge transformations.
        
        \begin{proposition}
            The unique such deformation are the adjusted (higher) gauge transformations given by
            \begin{equation}\label{eq:infinitesimal_deformation}
                \begin{aligned}
                    \delta_\alpha A\ &=\ \rmd \alpha_0+[A,\alpha_0]-\sft(\alpha_1)~,~~~&
                    \delta_\alpha B\ &=\ \rmd \alpha_1+A\acton \alpha_1-\alpha_0\acton B+\kappa(\alpha_0,F)~,
                    \\
                    \delta_\beta \alpha_0\ &=\ \sft(\beta)~,~~~
                    &\delta_\beta \alpha_1\ &=\ \rmd\beta+A\acton \beta~,
                \end{aligned}
            \end{equation}
            where $\kappa$ is a bilinear function $\kappa:\frg\times \frg\rightarrow \frh$ (extended in the evident way to forms taking values in $\frg$ and $\frh$) that satisfies 
            \begin{equation}\label{eq:linear_kappa_conditions}
                \begin{aligned}            
                    \kappa(\sft(W),V)\ &=\ -V\acton W~,
                    \\
                    \kappa([V_1,V_2],V_3)\ &=\ V_1\acton \kappa(V_2,V_3)-V_2\acton \kappa(V_1,V_3)+\kappa(V_1,[V_2,V_3])-\kappa(V_2,[V_1,V_3])
                    \\
                    &\hspace{1cm}-\kappa(V_1,\sft(\kappa(V_2,V_3)))+\kappa(V_2,\sft(\kappa(V_1,V_3)))
                \end{aligned}
            \end{equation}
            for all $V_{1,2}\in\frg$ and $W\in\frh$.         
        \end{proposition}
        \begin{proof}
            Since $F$ is a 2-form, we evidently can only deform the action of gauge transformation for $B$. It is then easy to see that the most general such deformation meeting our requirements is indeed $\kappa(\alpha_0,F)$. The rest is a straightforward computation. The first condition in~\eqref{eq:linear_kappa_conditions} ensures that infinitesimal adjusted higher gauge transformations link infinitesimal adjusted gauge transformations with the same image. The second condition ensures that the commutator of two infinitesimal adjusted gauge transformations is again an infinitesimal adjusted gauge transformations.
        \end{proof}
        
        The datum $\kappa$ was called an \uline{adjustment} in a much wider context of $L_\infty$-algebra valued connection forms in~\cite[Definition 4.2]{Saemann:2019dsl}. 
        
        The fact that all structures behave as required, i.e.~the (higher) gauge transformations combine into a Lie 2-algebra act appropriately, immediately imply the following, see also~\cite{Saemann:2019dsl}:
        \begin{corollary}
            There is an action Lie 2-algebroid, the \uline{adjusted BRST Lie 2-algebroid}, combining local connection forms with their local and infinitesimal gauge and higher transformations.
        \end{corollary}
        \noindent For our discussion, it suffices to know the (faithful) action of (higher) gauge transformations. For the interested reader, we note that the adjusted BRST Lie 2-algebroid has a base space given by the local connection forms, while the fibers are $\IZ_2$-graded vector spaces describing (local and infinitesimal) gauge transformations (degree 0 elements) and higher gauge transformations (degree 1 elements). There is an anchor map, which describes the action of the gauge transformations on the connection forms, and a graded Lie bracket on sections, which describes the action of higher gauge transformations, which is part of the Lie 2-algebra of (higher) gauge transformations.
        
        \paragraph{Adjusted BRST 2-groupoid.} We now wish to integrate the above BRST Lie 2-algebroid to obtain the corresponding gauge or BRST 2-groupoid $\scB$ for the gauge 2-group given by the crossed module of Lie groups $\caG\coloneqq(\sfH\overset{\sft}{\longrightarrow}\sfG,\acton)$. The groupoid $\scB$ then allows us to construct the differential cocycle data of principal 2-bundles with connection by stackification.
        
        The finite form of the unadjusted gauge transformations are parameterised by 
        \begin{subequations}
            \begin{equation}
                a\in \Omega^0(U,\sfG)~,~~~\lambda \in\Omega^1(U,\frh)~,
            \end{equation}
            and these act according to\footnote{Here and in the following, we encounter full and partial linearisations of various maps in the crossed module of Lie groups $\caG$, which are well-defined. For example, the expression $a^{-1}\acton B$ involves a map $\acton:\sfG\times \frh\rightarrow \frh$ obtained by linearising $\acton:\sfG\times \sfH\rightarrow \sfH$ in the second argument. A bit more subtle is the expression $m^{-1}(A\acton m)$ appearing in~\eqref{eq:unadjustedHigherGaugeTransformationsB} in the term $m^{-1}\nabla m$, which combines a half-linearisation of the action with a pullback to produce an element of the Lie algebra $\frh$.}
            \begin{equation}
                \begin{aligned}
                    \tilde A\ &=\ a^{-1}Aa+a^{-1}\rmd a-\sft(\lambda)~,
                    \\
                    \tilde B\ &=\ a^{-1}\acton B+\rmd\lambda+\tilde{A}\acton\lambda+\tfrac12[\lambda,\lambda]~.
                \end{aligned}
            \end{equation}
        \end{subequations}
        This is the form of gauge transformations of connections on higher principal bundles found often in the literature.
        
        The finite form of the higher gauge transformations are parameterised by 
        \begin{subequations}
            \begin{equation}
                m\ \in\ \Omega^0(U,\sfH)~,
            \end{equation}
            and these act according to
            \begin{equation}\label{eq:unadjustedHigherGaugeTransformationsB}
                \begin{aligned}
                    \tilde a\ &=\ \sft(m)a~,
                    \\
                    \tilde \lambda\ &=\ \lambda+a^{-1}\acton(m^{-1}\nabla m)~,
                \end{aligned}
            \end{equation}
            where $\nabla\coloneqq\rmd+A\acton$. 
        \end{subequations}
        
        As expected, higher gauge transformations link gauge transformations with different images, which is problematic. In the finite setting, we have the following deformation, integrating~\eqref{eq:infinitesimal_deformation}.
        \begin{lemma}\label{lemma:composition_and_globularity}
            The gauge transformations
            \begin{subequations}
                \begin{equation}\label{eq:finite_gauge_transformations}
                    \begin{aligned}
                        \tilde A\ &=\ a^{-1}Aa+a^{-1}\rmd a-\sft(\lambda)~,
                        \\
                        \tilde B\ &=\ a^{-1}\acton B+\rmd\lambda+\tilde{A}\acton\lambda+\tfrac12[\lambda,\lambda]-\kappa\big(a^{-1},F\big)
                        \\
                        \tilde a\ &=\ \sft(m)a~,
                        \\
                        \tilde \lambda\ &=\ \lambda+a^{-1}\acton(m^{-1}\nabla m)
                    \end{aligned}
                \end{equation}
                with $\kappa:\sfG\times \frg\rightarrow \frh$ satisfying
                \begin{equation}\label{eq:finite_adjustment_condition}
                    \begin{aligned}
                        \kappa(\sft(h),V)\ &=\ h(V\acton h^{-1})~,
                        \\
                        \kappa(g_2g_1,V)\ &=\ g_2\acton\kappa(g_1,V)+\kappa\big(g_2,g_1Vg^{-1}_1-\sft(\kappa(g_1,V))\big)
                    \end{aligned}
                \end{equation}
            \end{subequations}
            compose according to
            \begin{equation}\label{eq:gauge_trafor_composition}
                (a_3,\lambda_3)\ \coloneqq \ (a_1,\lambda_1)\circ(a_2,\lambda_2)\ =\ (a_1a_2,\lambda_2+a_2^{-1}\acton \lambda_1)~,
            \end{equation} 
            are globular in the sense that gauge transformations parameterised by $(a,\lambda)$ and $(\tilde a,\tilde \lambda)$ have the same image, and they integrate the infinitesimal gauge transformations~\eqref{eq:infinitesimal_deformation}.
        \end{lemma}
        \begin{proof}
            This is again a direct computation. Globularity of the gauge transformations is equivalent to the first equation in~\eqref{eq:finite_adjustment_condition}, and the fact that gauge transformations compose is equivalent to the second condition. Useful formulas for the explicit computations are found in \ref{app:kappaProperties}. Finally, it is easy to see that the linearisation of~\eqref{eq:finite_gauge_transformations} is~\eqref{eq:infinitesimal_deformation}.
        \end{proof}
        \noindent Denoting the two different maps $\kappa:\frg\times \frg\rightarrow \frh$ and $\kappa:\sfG\times \frg\rightarrow \frh$ by the same symbol is justified since the former is the linearisation of the latter, i.e.
        \begin{equation}\label{eq:linearisedKappa}
            \kappa(V_1,V_2)\ \coloneqq\ \lim_{\eps\to0}\tfrac1\eps\kappa(\exp(\eps V_1),V_2)
        \end{equation}
        for all $V_{1,2}\in\frg$.        
        
        \Cref{lemma:composition_and_globularity} allows us to state the following theorem, which also defines the adjusted BRST 2-groupoid.
        \begin{theorem}\label{thm:adjusted_BRST_2_groupoid}
            Consider a crossed module of Lie groups $\caG\coloneqq(\sfH\overset{\sft}{\longrightarrow}\sfG,\acton)$ together with a map $\kappa:\sfG\times \frg\rightarrow \frh$ satisfying~\eqref{eq:finite_adjustment_condition}. For each manifold $U$, there is the \uline{adjusted BRST 2-groupoid} $\scB$ which is the (strict) action 2-groupoid with objects the local connection forms,
            \begin{equation}
                \scB_0\ \coloneqq\ \Omega^1(U,\frg)\oplus \Omega^2(U,\frh)~,
            \end{equation}
            the morphisms the gauge transformations between these,
            \begin{equation}
                \scB_1\ \coloneqq\ \Omega^1(U,\frg)\oplus \Omega^2(U,\frh)\oplus \Omega^0(U,\sfG)\oplus \Omega^1(U,\frh)~,
            \end{equation} 
            and the 2-morphisms the higher gauge transformation between these,
            \begin{equation}
                \scB_2\ \coloneqq\ \Omega^1(U,\frg)\oplus \Omega^2(U,\frh)\oplus \Omega^0(U,\sfG)\oplus \Omega^1(U,\frh)\oplus \Omega^0(U,\sfH)~.
            \end{equation}
            The source and targets of the 1- and 2-morphisms are given by the diagram
            \begin{equation}
                \begin{tikzcd}[column sep=3.5cm]
                    (A,B) 
                    \arrow[r,bend left=30,"{(A,B,a,\lambda)}"{name=U}]
                    \arrow[r,bend right=30,"{(A, B,\tilde a,\tilde\lambda)}"{name=D},swap]
                    \arrow[Rightarrow,"{(A,B,a,\lambda,m)}", from=U, to=D,start anchor={[yshift=-1ex]},end anchor={[yshift=1ex]}]
                    & (\tilde A,\tilde B)~,
                \end{tikzcd}
            \end{equation}
            where $\tilde A,\tilde B,\tilde a,\tilde\lambda$ are found in~\eqref{eq:finite_gauge_transformations}. The units read as 
            \begin{equation}
                \sfid_{(A,B)}\ \coloneqq\ (A,B,\unit,0)
                \eand
                \sfid_{(A,B,a,\lambda)}\ \coloneqq\ (A,B,a,\lambda,\unit)~,
            \end{equation} 
            horizontal composition is given by 
            \begin{equation}
                (A_1,B_1,a_1,\lambda_1,m_1)\otimes (A_2,B_2,a_2,\lambda_2,m_2)\ \coloneqq \ (A_1,B_1,a_1a_2,\lambda_2+a_2^{-1}\acton \lambda_1,m_1 a_1\acton m_2)~,
            \end{equation} 
            and vertical composition is given by
            \begin{equation}
                (\tilde A,\tilde B,\tilde a,\tilde \lambda,m_2)\circ (A,B,a,\lambda,m_1)\ \coloneqq\ (A,B,a,\lambda,m_2m_1)~.
            \end{equation} 
            The inverses of horizontal and vertical composition are given by
            \begin{equation}
                \begin{aligned}
                    (A,B,a,\lambda)^{-1}\ &\coloneqq\ (\tilde A,\tilde B,a^{-1},-a\acton \lambda)~,
                    \\
                    (A,B,a,\lambda,m)^{-1}\ &\coloneqq\ (\tilde A,\tilde B,\tilde a,\tilde \lambda,m^{-1})~,
                \end{aligned}
            \end{equation}
            respectively.
        \end{theorem}
        \begin{proof}
            Besides the properties established in \ref{lemma:composition_and_globularity}, this is a simple verification of the axioms of a strict 2-groupoid.
        \end{proof}
        
        \subsection{Adjusted crossed modules and their morphisms}\label{ssec:adjustedCrossedModules}
        
        \paragraph{Adjusted crossed modules of Lie groups.} We note that the map $\kappa$ appearing in the adjusted BRST 2-groupoid is a purely algebraic datum on the crossed module of Lie groups $\caG$. Before proceeding with the construction of principal 2-bundles with adjusted connections, let us explore this structure a bit further.
        
        The full geometric meaning of adjustments is still under investigation. In the infinitesimal case of certain $L_\infty$-algebra valued connections, it was shown that the adjustment corresponds to an alternator in a notion of categorified Lie algebra, which describes the failure of the Lie bracket~\cite{Borsten:2021ljb} to be antisymmetric. Moreover, the adjustment is necessary for the consistent definition of invariant polynomials in an $L_\infty$-algebra, as explained in~\cite{Sati:2008eg,Saemann:2019dsl}. 
        
        \begin{definition}\label{def:generalAdjustment}
            An \uline{adjusted crossed module of Lie groups} $\caG_\kappa\coloneqq(\caG,\kappa)$ is a crossed module of Lie groups $\caG\coloneqq(\sfH\overset{\sft}{\longrightarrow}\sfG,\acton)$ together with a map 
            \begin{subequations}\label{eq:alternativeAdjustmentCondition}
                \begin{equation}\label{eq:general_adjustment}
                    \kappa\,:\,\sfG\times\frg\ \rightarrow\ \frh~,
                \end{equation}
                called the \uline{adjustment}, which is linear in $\frg$ and satisfies
                \begin{align}
                    \kappa(\sft(h),V)\ &=\ h(V\acton h^{-1})~,\label{eq:alternativeAdjustmentCondition_a}
                    \\
                    \kappa(g_2g_1,V)\ &=\ g_2\acton\kappa(g_1,V)+\kappa\big(g_2,g_1Vg^{-1}_1-\sft(\kappa(g_1,V))\big)\label{eq:alternativeAdjustmentCondition_b}
                \end{align}
                for all $g_{1,2}\in\sfG$, $h\in\sfH$, and $V\in\frg$.
            \end{subequations}
        \end{definition}
        
        We will present concrete examples below, but already note the following.
        \begin{remark}\label{rem:not_all_2_groups_have_adjustments}
            Not all crossed modules of Lie groups admit an adjustment. For a counterexample, consider the crossed module $\sfT\sfB_n^{F2}$ discussed in~\cite[Section 4.2]{Kim:2022opr}.
        \end{remark}
        
        \paragraph{Special adjustments.} In this paper, we will focus on a particular class of adjustments which we call special adjustments. Among other things, these allow us to introduce the important notion of $P$-flatness later, and all our examples will be special. 
        
        Recall that the data of a crossed module of Lie groups $(\sfH\overset{\sft}{\longrightarrow}\sfG,\acton)$ that are preserved under quasi-isomorphisms are the kernel and cokernel of $\sft$. Whilst the adjustment datum taking values in the complement of the kernel of $\sft$ is not canonical, it can often be determined by a simple map, as we now show.
        
        \begin{proposition}\label{prop:special_adjustment}
            Consider a crossed module of Lie groups $\caG\coloneqq(\sfH\overset{\sft}{\longrightarrow}\sfG,\acton)$ together with its crossed module of Lie algebras $\sfLie(\caG)\coloneqq(\frh\overset{\sft}{\longrightarrow}\frg,\acton)$. Let
            \begin{subequations}
                \begin{equation}
                    P\,:\,\frg\ \rightarrow\ \frh
                \end{equation}
                be a morphism of vector spaces such that
                \begin{equation}\label{eq:P_rels_1}
                    \sft\circ P\circ\sft\ =\ \sft~.
                \end{equation}
            \end{subequations}
            Then, $\kappa_0$ with\footnote{Here, we use the shorthand notation $g[V,g^{-1}]\coloneqq gVg^{-1}-V$.} 
            \begin{equation}\label{eq:kappa_0}
                \kappa_0(g,V)\ \coloneqq\ P(g[V,g^{-1}])
            \end{equation}
            satisfies the conditions~\eqref{eq:alternativeAdjustmentCondition} up to terms that are in the kernel of $\sft$. We shall call an adjustment $\kappa$ such that $\sft\circ\kappa=\sft\circ\kappa_0$ for some map $P$ as above a \uline{special adjustment}.
        \end{proposition}
        \begin{proof}
            We have
            \begin{equation}
                \begin{aligned}
                    \sft(\kappa_0(\sft(h),V))\ &=\ \sft(P(\sft(h)[V,\sft(h)^{-1}]))
                    \\ 
                    &=\ \sft(P(\sft(h(V\acton h^{-1}))))
                    \\
                    &=\ \sft(h(V\acton h^{-1}))
                \end{aligned}
            \end{equation}
            for all $h\in\sfH$ and for all $V\in\frg$, as well as 	
            \begin{equation}
                \begin{aligned}
                    \sft(\kappa_0(g_2g_1,V))\ &=\ \sft(P(g_2g_1[V,g_1^{-1}g_2^{-1}]))
                    \\
                    &=\ g_2\sft(P(g_1[V,g_1^{-1}]))g_2^{-1}+\sft\Big(P(g_2[g_1Vg^{-1}_1,g^{-1}_2])\Big)
                    \\
                    &\hspace{1cm}-\sft\Big(P(g_2[\sft(P(g_1[V,g_1^{-1}])),g^{-1}_2])\Big)
                    \\
                    &=\ \sft\Big(g_2\acton P(g_1[V,g_1^{-1}])+P(g_2[g_1Vg^{-1}_1-\sft(P(g_1[V,g_1^{-1}])),g^{-1}_2])\Big)
                    \\
                    &=\ \sft\Big(g_2\acton\kappa_0(g_1,V)+\kappa\big(g_2,g_1Vg^{-1}_1-\sft(\kappa_0(g_1,V))\big)\Big)~,
                \end{aligned}
            \end{equation}
            where we have used that
            \begin{equation}\label{eq:P_rels_2}
                \sft(P(g_2\sft(P(g_1[V,g_1^{-1}]))g_2^{-1}))\ =\ g_2\sft(P(g_1[V,g_1^{-1}]))g_2^{-1}
            \end{equation}
            for all $g_{1,2}\in\sfG$ and for all $V\in\frg$.
        \end{proof}
        
        \paragraph{Decomposition induced by special adjustment.} Let us attempt to give a clearer picture of the role of the map $P$. Given a special adjustment of a crossed module of Lie groups $\caG\coloneqq(\sfH\overset{\sft}{\longrightarrow}\sfG,\acton)$ with its corresponding crossed module of Lie algebras $\sfLie(\caG)\coloneqq(\frh\overset{\sft}{\longrightarrow}\frg,\acton)$ and special adjustment datum $\kappa$ and $P$ as in \ref{prop:special_adjustment}, we note that we have a natural decomposition of $\sfLie(\caG)$ as
        \begin{equation}\label{eq:decompositions}
            \frh\ \cong\ \ker(\sft)\oplus\im(P\circ\sft)
            \eand
            \frg\ \cong\ \coker(\sft)\oplus\im(\sft\circ P)~,
        \end{equation}
        where we identify $\coker(\sft)=\im(\mathrm{id}-\sft\circ P)$, and we also have $\ker(\sft)=\im(\mathrm{id}-P\circ\sft)$.
        
        The map $P$ thus provides the decomposition~\eqref{eq:decompositions}, and $\kappa_0$ as in~\eqref{eq:kappa_0} is the image of an adjustment in $\im(P\circ \sft)$, so that only the image in $\ker(\sft)$ remains to be specified. For injective $\sft$ as in the example underlying \ref{sec:principal_G_bundles_as_adjusted_higher_bundles}, it suffices to specify $P$ to define an adjustment.
        
        \begin{remark}
            A close analogue of the map $P$ appeared  in~\cite{Ho:2012nt}, and there also with the goal of lifting the fake-flatness constraint. In this paper, the authors considered the data in a crossed modules of Lie groups $(\sfH\overset{\sft}{\longrightarrow}\sfG,\acton)$ together with a map $\sfs:\sfG\rightarrow\sfH$. They also identified the projector $\sft\circ\sfs$, corresponding to our projector $\sft\circ P$, as relevant, and presented a deformation of the infinitesimal $B$-field gauge transformations. Whilst there is thus some overlap between our construction, there are some crucial difference in the conditions on $\sfs$ and $P$ as well as the complete form of the deformation.
        \end{remark}
        
        \paragraph{Example.} We now finally come to an example of an adjusted crossed module of Lie groups.\footnote{We shall extend this example further in \ref{sec:Lie2GroupModel} to a 2-group model of the string group, the key example of an adjusted crossed module for our purposes.} Recall that the based (parametrised) \uline{path} and \uline{loop groups} of a Lie group are given by 
        \begin{equation}\label{eq:basedPathLoopGroups}
            P_0\sfG\ \coloneqq\ \{p\in\scC^\infty([0,1],\sfG)\,|\,p(0)=\unit\}
            \eand
            L_0\sfG\ \coloneqq\ \{p \in P_0\sfG\,|\,p(1)=\unit\}~,
        \end{equation} 
        respectively.\footnote{In our conventions for based path and loop groups, we follow~\cite{Baez:2005sn}, and we refer to this paper for further details. We shall ignore technicalities such as introducing sitting instants in our loop and path parametrisations, as these can be trivially incorporated into our constructions. For a detailed discussion of these issues, see e.g.~\cite{Schreiber:0705.0452}. The important point to keep in mind is that the concatenation of two elements in our path space with the same endpoint always forms an element in our loop space.} Both $P_0\sfG$ and $L_0\sfG$ are Fr{\'e}chet--Lie groups with the evident pointwise products. Furthermore, the endpoint evaluation map
        \begin{equation}\label{eq:endpointEvaluation}
            \begin{aligned}
                \boundary\,:\,P_0\sfG\ &\rightarrow\ \sfG~,
                \\
                p\ &\mapsto\ p(1)
            \end{aligned}
        \end{equation}
        is a morphism of Fr{\'e}chet--Lie groups. Similarly, we define the \uline{based, parametrised path} and \uline{loop Lie algebras} of the Lie algebra $\frg$
        \begin{equation}
            P_0\frg\ \coloneqq\ \{\gamma\in\scC^\infty([0,1],\frg)\,|\,\gamma(0)=0\}\eand
            L_0\frg\ \coloneqq\ \{\gamma\in P_0\frg\,|\,\gamma(1)=0\}
        \end{equation}
        with pointwise Lie brackets. 
        
        We can now combine $P_0\sfG$ and $L_0\sfG$ into the crossed module of Lie groups
        \begin{equation}\label{eq:definitionLoopLieCrossed}
            \caL\sfG\ \coloneqq\ (L_0\sfG\hookrightarrow P_0\sfG,\Ad)~,
        \end{equation}
        where $\Ad$ is the pointwise adjoint action of $L_0\sfG$ on $P_0\sfG$. The corresponding crossed module of Lie algebras $\sfLie(\caL\sfG)$ is then 
        \begin{equation}\label{eq:crossedModuleLg}
            \caL\frg\ \coloneqq\ (L_0\frg\hookrightarrow P_0\frg,\ad)~.
        \end{equation}
        
        The crossed module $\caL\sfG$ is equivalent to $\sfG$, trivially regarded as the crossed module of Lie groups\footnote{`cm' refers to crossed module}
        \begin{equation}\label{eq:definitionLieGroupLieCrossed}
            \sfG_{\rm cm}\ \coloneqq\ (\unit\hookrightarrow\sfG,\id)~,
        \end{equation}
        and the evident flippable butterfly between $\caL \sfG$ and $\sfG_{\rm cm}$ is\footnote{Note that $L_0\sfG$ is, in fact, a normal subgroup of $P_0\sfG$, and we have
            \begin{equation}\label{eq:identification}
                \begin{tikzcd}[ampersand replacement=\&]
                    P_0\sfG \arrow[r,->>]\arrow[rr,bend right=30,"\boundary"] \& P_0\sfG/L_0\sfG\arrow[r,"\cong"] \& \sfG~.
                \end{tikzcd}
            \end{equation}
            The equivalence between the crossed modules $\caL\sfG$ and $\sfG_{\rm cm}$ is a lift of the isomorphism between the quotient group $P_0\sfG/L_0\sfG$ and $\sfG$.}
        \begin{equation}\label{eq:loopGroupButterfly}
            \begin{tikzcd}
                \unit\arrow[dd]\arrow[dr]& & L_0\sfG\arrow[dl]\arrow[dd]
                \\
                & P_0\sfG\arrow[dl,"\boundary",swap]\arrow[dr,"\id"]& 
                \\
                \sfG & & P_0\sfG
            \end{tikzcd}
        \end{equation}
        
        \begin{proposition}\label{prop:adjustmentPathLoopGroups}
            Consider some fixed map $\wp\in\scC^\infty([0,1],\IR)$ with $\wp(0)=0$ and $\wp(1)=1$. A special adjustment for $\caL\sfG$ is then given by the map
            \begin{subequations}\label{eq:adjustmentPathLoopGroups}
                \begin{equation}
                    \begin{aligned}
                        \kappa\,:\,P_0\sfG\times P_0\frg\ &\rightarrow\ L_0\frg~,
                        \\
                        (g,V)\ &\mapsto\ P(g[V,g^{-1}])
                    \end{aligned}
                \end{equation}
                for all $g\in P_0\sfG$ and for all $V\in P_0\frg$ with
                \begin{equation}\label{eq:Pmap}
                    P\ \coloneqq\ \id-\wp\cdot\boundary~.
                \end{equation}
            \end{subequations}
        \end{proposition}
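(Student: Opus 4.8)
The plan is to derive the statement from \cref{prop:special_adjustment} together with the observation that, for $\caL\sfG$, the map $\sft$ is the \emph{injective} inclusion $L_0\sfG\hookrightarrow P_0\sfG$ (resp.\ $L_0\frg\hookrightarrow P_0\frg$ on Lie algebras), so that \eqref{eq:alternativeAdjustmentCondition} and its image under $\sft$ are equivalent. Before that I would record the two elementary properties of $P\coloneqq\id-\wp\cdot\boundary$ that carry all the content. Writing $P(\gamma)(t)=\gamma(t)-\wp(t)\,\gamma(1)$ for $\gamma\in P_0\frg$, the hypothesis $\wp(0)=0$ gives $P(\gamma)(0)=\gamma(0)=0$ and $\wp(1)=1$ gives $P(\gamma)(1)=\gamma(1)-\gamma(1)=0$; hence $P$ maps $P_0\frg$ into $L_0\frg$, so that $\kappa(g,V)=P(g[V,g^{-1}])$ is valued in $\frh=L_0\frg$ and is manifestly linear in $V$. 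In addition, if $\gamma(1)=0$ then $P(\gamma)=\gamma$, i.e.\ $P$ restricts to the identity on $L_0\frg$ viewed inside $P_0\frg$; and pointwise conjugation $\gamma\mapsto g_2\gamma g_2^{-1}$ by any $g_2\in P_0\sfG$ preserves the subspace of loops, since the value at $t=1$ gets conjugated by $g_2(1)$ and thus stays zero (and the value at $t=0$ stays zero as $g_2(0)=\unit$).

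Granting these, I would check the hypotheses of \cref{prop:special_adjustment} for this $P$: condition \eqref{eq:P_rels_1}, namely $\sft(P(\sft(W)))=\sft(W)$ for $W\in L_0\frg$, holds because $\sft(W)$ is a loop and $P$ fixes loops; condition \eqref{eq:P_rels_2} holds because the inner expression $\sft(P(g_1[V,g_1^{-1}]))$ is a loop, conjugation by $g_2$ keeps it a loop, and the outer $P$ then acts trivially. \cref{prop:special_adjustment} now yields \eqref{eq:alternativeAdjustmentCondition} after applying $\sft$; by injectivity of $\sft$ this is \eqref{eq:alternativeAdjustmentCondition_a} and \eqref{eq:alternativeAdjustmentCondition_b} verbatim, and together with $\kappa(\unit,V)=P(0)=0$ this identifies $\kappa$ as an adjustment in the sense of \cref{def:generalAdjustment}. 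It is \emph{special} because $\sft\circ\kappa=\sft\circ\kappa_0$ holds tautologically for the $P$ of \eqref{eq:Pmap}.

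For completeness I would also spell out the direct pointwise verification, which is instructive. For \eqref{eq:alternativeAdjustmentCondition_a}: in matrix notation $h(V\acton h^{-1})=hVh^{-1}-V=h[V,h^{-1}]$, which vanishes at $t=1$ since $h(1)=\unit$, so $P$ fixes it and $\kappa(\sft(h),V)=h[V,h^{-1}]=h(V\acton h^{-1})$. For \eqref{eq:alternativeAdjustmentCondition_b}: set $W_1\coloneqq g_1[V,g_1^{-1}]$, so $g_1Vg_1^{-1}=V+W_1$ and, since $W_1-P(W_1)=\wp\cdot W_1(1)$, also $g_1Vg_1^{-1}-\sft(\kappa(g_1,V))=V+\wp\cdot W_1(1)$. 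Expanding both sides of \eqref{eq:alternativeAdjustmentCondition_b} and cancelling the common term $\kappa(g_2,V)=P(g_2[V,g_2^{-1}])$, one is left with $P(g_2W_1g_2^{-1})-g_2P(W_1)g_2^{-1}=\kappa(g_2,\wp\cdot W_1(1))$, and both sides evaluate to $\wp(t)\big(g_2(t)W_1(1)g_2(t)^{-1}-g_2(1)W_1(1)g_2(1)^{-1}\big)$ at $t\in[0,1]$, using $\wp(1)=1$.

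The argument is essentially bookkeeping, and I do not expect a genuine obstacle. The two load-bearing inputs are the boundary values of $\wp$ (which make $P$ land in, and act as the identity on, $L_0\frg$) and the injectivity of $\sft$ (which upgrades the `$\sft$-image' identities supplied by \cref{prop:special_adjustment} to the full adjustment identities). The only point requiring mild care is that $P_0\sfG$ and $L_0\sfG$ are Fr{\'e}chet--Lie groups rather than matrix groups; this is harmless because every manipulation above is carried out pointwise in the parameter $t\in[0,1]$, where the finite-dimensional identities apply without change, and $\kappa$ is smooth as a composite of smooth pointwise operations.
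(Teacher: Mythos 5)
Your proposal is correct and follows essentially the same route as the paper: it reduces the statement to \cref{prop:special_adjustment} by verifying \eqref{eq:P_rels_1} and \eqref{eq:P_rels_2} for $P=\id-\wp\cdot\boundary$ (both consequences of $P$ acting trivially on loops, i.e.\ $\boundary\circ\sft=0$) and then uses injectivity of $\sft$ to upgrade the $\sft$-images of \eqref{eq:alternativeAdjustmentCondition} to the adjustment conditions themselves. The additional checks you include (that $P$ indeed lands in $L_0\frg$, and the explicit pointwise verification of \eqref{eq:alternativeAdjustmentCondition_a} and \eqref{eq:alternativeAdjustmentCondition_b}) are correct and merely make explicit what the paper leaves implicit.
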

        \begin{proof}
            We note that $\sft$ is injective, so it suffices to check the condition $\sft\circ P\circ\sft=\sft$ for a special adjustment from \cref{prop:special_adjustment}. This follows immediately since $\boundary\circ\sft=0$. 
        \end{proof}
        
        Because $\frg$ is the cokernel of $\sft$ in $\sfLie(\sfG_{\rm cm})$, the map $P$ yields a choice of embedding\footnote{There are certainly much more general choices.} of $\frg$ into $P_0\frg$, which leads to the decomposition
        \begin{equation}
            \begin{aligned}
                P_0\frg\ &\rightarrow\ \frg\oplus L_0\frg~,
                \\
                V\ &\mapsto\ (P-\id)(V)+P(V)\ =\ \wp\cdot\boundary(V)+(\id-\wp\cdot\boundary)(V)
            \end{aligned}
        \end{equation}
        for all $V\in P_0\frg$.
        
        Note that the crossed module $\caL\frg$ is quasi-isomorphic to the Lie algebra $\frg$, regarded as the trivial crossed module $\frg_{\rm cm}\coloneqq(0\hookrightarrow\frg,\id)$, with the quasi-isomorphism the projection by endpoint evaluation. The embedding defined by $P$ then yields an inverse quasi-isomorphism,
        \begin{equation}\label{eq:quasi-iso_Lie_algebras}
            \frg_{\rm cm}\ \xrightarrow{~\cong~}\ \caL\frg~,
        \end{equation}
        see e.g.~\cite{Saemann:2019dsl} for details.
        
        \paragraph{Morphisms.} 
        The definition of strict morphisms of adjusted crossed modules of Lie groups is straightforward. 
        
        \begin{definition}
            A \uline{strict morphism} $\phi$ of crossed modules $\caG_{1,2}$ of Lie groups with adjustments $\kappa_{1,2}$ is a strict morphism of crossed modules $\phi:\caG_1\rightarrow\caG_2$ of Lie groups such that 
            \begin{equation}
                \phi_\frh(\kappa_1(g,V))\ =\ \kappa_2(\phi_\sfG(g),\phi_\frg(V))
            \end{equation}
            for all $g\in\sfG$ and for all $V\in\frg$, where $\phi_\frg$ and $\phi_\frg$ denote the components of the strict morphism $\phi:\sfLie(\caG_1)\rightarrow\sfLie(\caG_2)$ induced by $\phi$. \uline{Strict quasi-isomorphisms} of adjusted crossed modules of Lie groups are strict morphism of adjusted crossed modules of Lie groups that become strict quasi-isomorphisms after forgetting the adjustment.
        \end{definition}
        
        To define weak morphisms of adjusted crossed modules, we extend the spans of strict Lie 2-groups.
        
        \begin{definition}
            A \uline{(weak) morphism} $\phi$ of crossed modules $\caG_{1,2}$ of Lie groups with adjustments $\kappa_{1,2}$ is a span of the form 
            \begin{equation}
                \begin{tikzcd}
                    & \caK\arrow[dl,"\phi"'] \arrow[dr,"\psi"]
                    \\
                    \caG_1 & & \caG_2
                \end{tikzcd}
            \end{equation}
            where $\caK$ is a third adjusted crossed module of Lie groups and $\phi$ and $\psi$ are strict morphisms of adjusted crossed modules of Lie groups with $\phi$ a quasi-isomorphism.
        \end{definition}
        
        \begin{example} 
            As a simple example, consider the butterfly~\eqref{eq:loopGroupButterfly} witnessing the equivalence between the crossed modules $\sfG_{\rm cm}$ from~\eqref{eq:definitionLieGroupLieCrossed}, with the unique and trivial adjustment, and $\caL\sfG$ from~\eqref{eq:crossedModuleLg} with adjustment~\eqref{eq:adjustmentPathLoopGroups}. We have the following span of crossed modules
            \begin{equation}
                \begin{tikzcd}
                    & \caL\sfG \arrow[dl,"\phi"'] \arrow[dr,"\psi"]
                    \\
                    \sfG_{\rm cm} & & \caL\sfG
                \end{tikzcd}
            \end{equation}
            where $\phi$ is the endpoint evaluation map and $\psi$ is the identity. Recall that $\phi$ is a quasi-isomorphism, and $\phi$ is trivially a strict morphism of adjusted crossed modules. We thus see that the adjustment on $\caL\sfG$ (up to choice of the map $P$) is induced by the trivial adjustment on $\sfG_{\rm cm}$.
        \end{example}
        
        \paragraph{Adjustment for crossed modules of Lie algebras.} 
        It is useful to develop also the infinitesimal notion of adjustments. The Lie functor linearises an adjusted crossed module of Lie groups to an adjusted crossed module of Lie algebras in a similar fashion as in the unadjusted setting. 
        
        We compute the following.
        \begin{proposition}
            Given an adjusted crossed module of Lie groups $\caG$, the adjustment datum $\kappa:\frg\times \frg\rightarrow \frh$ in the corresponding crossed module of Lie algebras $\sfLie(\caG)\coloneqq(\frh\overset{\sft}{\longrightarrow}\frg,\acton)$ satisfies 
            \begin{equation}
                \begin{aligned}            
                    \kappa(\sft(W),V)\ &=\ -V\acton W~,
                    \\
                    \kappa([V_1,V_2],V_3)\ &=\ V_1\acton \kappa(V_2,V_3)-V_2\acton \kappa(V_1,V_3)+\kappa(V_1,[V_2,V_3])-\kappa(V_2,[V_1,V_3])
                    \\
                    &\hspace{1cm}-\kappa(V_1,\sft(\kappa(V_2,V_3)))+\kappa(V_2,\sft(\kappa(V_1,V_3)))
                \end{aligned}
            \end{equation}
            for all $V_{1,2}\in\frg$ and for all $W\in\frh$. Given a crossed module of Lie algebras with such a map $\kappa$, we call it an \uline{adjusted crossed module of Lie algebras}.
        \end{proposition}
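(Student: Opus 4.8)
The plan is to derive both identities by linearising the two defining conditions of the group-level adjustment, \eqref{eq:alternativeAdjustmentCondition_a} and \eqref{eq:alternativeAdjustmentCondition_b}, in their $\sfG$-arguments, using the definition~\eqref{eq:linearisedKappa} of the linearised $\kappa$ together with the normalisation $\kappa(\unit,V)=0$ from~\eqref{eq:adjustmentNormalisation} (which follows at once by setting $h=\unit$ in~\eqref{eq:alternativeAdjustmentCondition_a}).

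For the first identity I would set $h=\exp(\eps W)$ in~\eqref{eq:alternativeAdjustmentCondition_a}, divide by $\eps$, and let $\eps\to0$. On the left, $\sft$ being a morphism of Lie groups gives $\sft(\exp(\eps W))=\exp(\eps\,\sft(W))$, so $\tfrac1\eps\kappa(\exp(\eps\,\sft(W)),V)\to\kappa(\sft(W),V)$ by~\eqref{eq:linearisedKappa}. On the right, writing $h^{-1}=\exp(-\eps W)$, expanding $\exp(rV)\acton\exp(-\eps W)$ to first order in $r$ and $\eps$ — using that $\acton$ is an automorphism action, so it fixes $\unit$ and $V\acton\unit=0$ — and then left-translating by $h=\exp(\eps W)$, one finds that the $\eps$-derivative at $0$ of $h(V\acton h^{-1})$ equals $-V\acton W$, where $\acton$ now denotes the infinitesimal action of $\frg$ on $\frh$ and the minus sign is due to the inverse. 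This gives the first relation.

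For the second identity I would substitute $g_1=\exp(sV_1)$ and $g_2=\exp(tV_2)$ into~\eqref{eq:alternativeAdjustmentCondition_b} and apply $\partial_s\partial_t|_{s=t=0}$. The two terms on the right are routine, using linearity of $\kappa$ in its second slot and $\kappa(\unit,\,\cdot\,)=0$: differentiating $g_2\acton\kappa(g_1,V_3)$ first in $t$, then in $s$, yields $V_2\acton\kappa(V_1,V_3)$, and differentiating $\kappa\bigl(g_2,\,g_1V_3g_1^{-1}-\sft(\kappa(g_1,V_3))\bigr)$ in the same way yields $\kappa\bigl(V_2,\,[V_1,V_3]-\sft(\kappa(V_1,V_3))\bigr)=\kappa(V_2,[V_1,V_3])-\kappa(V_2,\sft(\kappa(V_1,V_3)))$, since the inner argument equals $V_3$ at $s=0$ and has $s$-derivative $[V_1,V_3]-\sft(\kappa(V_1,V_3))$ there. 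The delicate term is the left-hand side $\kappa(g_2g_1,V_3)$: setting $\psi(X)\coloneqq\kappa(\exp(X),V_3)$, a smooth $\frh$-valued function with $\psi(0)=0$ and first derivative $\kappa(\,\cdot\,,V_3)$ at $0$ by~\eqref{eq:linearisedKappa}, and using the Baker--Campbell--Hausdorff expansion $\log(g_2g_1)=sV_1+tV_2+\tfrac{st}{2}[V_2,V_1]+\cdots$ (with the dots cubic and higher in $s,t$), the second-order Taylor expansion of $\psi$ gives $\partial_s\partial_t|_0\,\kappa(g_2g_1,V_3)=-\tfrac12\kappa([V_1,V_2],V_3)+S(V_1,V_2)$, where $S$ is the second derivative of $\psi$ at $0$ and is symmetric under $V_1\leftrightarrow V_2$. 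Finally, writing out the resulting equation and subtracting from it the same equation with $V_1$ and $V_2$ interchanged, the symmetric term $S$ cancels and, using $[V_2,V_1]=-[V_1,V_2]$, one reads off precisely the claimed formula for $\kappa([V_1,V_2],V_3)$.

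The only genuinely delicate point is the bookkeeping on the left-hand side: the symmetric second-order piece $S$ must be carried explicitly through the computation and discarded only at the final antisymmetrisation, and one should note that $\partial_s\partial_t|_0$ is independent of the order of differentiation by smoothness, which is what makes the reductions ``$t$ first, then $s$'' above legitimate. Everything else reduces to direct substitution into~\eqref{eq:alternativeAdjustmentCondition_a} and~\eqref{eq:alternativeAdjustmentCondition_b}.
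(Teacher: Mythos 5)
Your proposal is correct, and it follows the same basic strategy as the paper — linearising the group-level conditions~\eqref{eq:alternativeAdjustmentCondition_a} and~\eqref{eq:alternativeAdjustmentCondition_b} via~\eqref{eq:linearisedKappa} — but the mechanism you use for the second identity differs from the one the paper spells out in \cref{app:kappaProperties}. There, the commutator identity~\eqref{eq:kappa_commutator} is obtained by first iterating~\eqref{eq:alternativeAdjustmentCondition_b} (together with~\eqref{eq:kappa_gginv}) to get a closed formula~\eqref{eq:kappa_ggginv} for $\kappa(g_1g_2g_1^{-1},V)$, and then setting $g_{1,2}=\exp(\eps_{1,2}V_{1,2})$ and extracting the order-$\eps_1\eps_2$ term: since $g_1g_2g_1^{-1}=\exp\big(\eps_2\,\mathrm{Ad}_{g_1}V_2\big)$ is genuinely first order in $\eps_2$, the left-hand side produces $\kappa([V_1,V_2],V_3)$ directly and no symmetric second-derivative term of $X\mapsto\kappa(\exp(X),V_3)$ ever appears. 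You instead differentiate $\kappa(g_2g_1,V_3)$ itself, which forces you to carry the symmetric Hessian term $S$ coming from the BCH expansion and kill it by antisymmetrising in $V_1\leftrightarrow V_2$; your bookkeeping of this term (the $-\tfrac12\kappa([V_1,V_2],V_3)+S$ structure, the routine treatment of the two right-hand terms, and the final antisymmetrisation) is accurate and reproduces the stated identity, including the signs of the $\kappa(\cdot,\sft(\kappa(\cdot,\cdot)))$ terms. The trade-off is that the paper's conjugation trick avoids the Hessian altogether at the cost of first deriving~\eqref{eq:kappa_ggginv}, while your route needs no auxiliary identity but does rely on smoothness of $\kappa$ in its group argument to justify the second-order Taylor expansion (implicit in the paper as well) and on linearity of the linearised $\kappa$ in its first slot, which is automatic since it is a directional derivative at the identity. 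The derivation of $\kappa(\sft(W),V)=-V\acton W$ is the same in both treatments, cf.~\eqref{eq:kappat_linear}.
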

        
        \begin{proof}
            This is a short, direct computation.
        \end{proof}
        
        \begin{example}
            The special adjustment~\eqref{eq:adjustmentPathLoopGroups} induces an adjustment datum on the corresponding crossed module of Lie algebras, which reads as
            \begin{equation}\label{eq:adjustmentPathLoopAlgebras}
                \begin{aligned}
                    \kappa\,:\,P_0\frg\times P_0\frg\ &\rightarrow\ L_0\frg~,
                    \\
                    (V_1,V_2)\ &\mapsto\ P([V_1,V_2])
                \end{aligned}
            \end{equation}
            for all $V_{1,2}\in P_0\frg$ as follows by a direct computation.
        \end{example}
        
        Let us also briefly consider the form of morphisms of adjusted crossed modules of Lie algebras. Consider adjusted crossed modules $\sfLie(\caG_{1,2})\coloneqq(\frh_{1,2}\overset{\sft_{1,2}}{\longrightarrow}\frg_{1,2},\acton_{1,2},\kappa_{1,2})$ together with a butterfly $\phi:\sfLie(\caG_{1})\rightarrow\sfLie(\caG_2)$ in the sense of 2-term $L_\infty$-algebra morphisms and satisfying~\eqref{eq:L_infty_alg}. As explained in~\cite[Proposition 3.4]{Noohi:0910.1818}, the morphism $\phi$ can be turned into a butterfly
        \begin{equation}
            \begin{tikzcd}
                \frh_1 \arrow[dd,"\sft_1",swap] \arrow[dr,"\lambda_1"]& & \frh_2 \arrow[dl,"\lambda_2",swap] \arrow[dd,"\sft_2"]
                \\
                & \frh_2\oplus \frg_1 \arrow[dl,"\gamma_1",swap] \arrow[dr,"\gamma_2"]& 
                \\
                \frg_1 & & \frg_2
            \end{tikzcd}
        \end{equation}
        with all maps evident except for
        \begin{equation}
            \gamma_2(W_2,V_1)\ =\ \sft_2(W_2)+\phi_\frg(V_1)
        \end{equation}
        for all $W_2\in\frh_2$ and for all $V_1\in\frg_1$. According to~\cite[Remark 8.5]{Noohi:0506313}, this butterfly can be turned into the span of Lie crossed modules 
        \begin{equation}
            \begin{tikzcd}
                & \frh_1\oplus\frh_2\rightarrow \frh_2\oplus \frg_1 \arrow[dl,"{\pr_1\rightarrow \pr_2}"'] \arrow[dr,"{\pr_2\rightarrow \gamma_2}"]
                \\
                \frh_1\rightarrow \frg_1 & & \frh_2\rightarrow \frg_2
            \end{tikzcd}
        \end{equation}
        Lifting this to a span of adjusted crossed modules of Lie algebras with adjustments $\kappa_{1,2}$ and $\hat\kappa$ for $\frh_{1,2}\rightarrow\frg_{1,2}$ and $\frh_1\oplus\frh_2\rightarrow\frh_2\oplus\frg_1$, respectively, amounts to imposing
        \begin{equation}
            \begin{aligned}
                \pr_1(\hat\kappa(W_2,V_1;W_2',V_1'))\ &=\ \kappa_1(V_1,V_1')~,
                \\
                \pr_2(\hat\kappa(W_2,V_1;W_2',V_1'))\ &=\ \kappa_2(\sft(W_2)+\phi_\frg(V_1),\sft(W'_2)+\phi_\frg(V'_1))
            \end{aligned}
        \end{equation}
        for all $W_1,W'_1\in\frh_1$ and $W_2,W'_2\in\frh_2$ and for all $V_1,V'_1\in\frg_1$ and $V_2,V'_2\in\frg_2$. We hence see that the adjustment $\hat\kappa$ is fully fixed by the adjustments $\kappa_{1,2}$, and we arrive at the following statement.
        
        \begin{proposition}
            A pair of weakly isomorphic crossed modules of Lie algebras that can both be equipped with an adjustment are also weakly isomorphic as adjusted crossed modules of Lie algebras.         
        \end{proposition}
        \noindent
        We note that this proposition states that adjustments on crossed modules are unique up to weak isomorphisms. This, however, is not sufficient to ensure that the corresponding adjusted BRST 2-groupoids are equivalent, which is the actually desired statement. We will study this problem in more detail the following section.
        
        \subsection{Equivalences of specially adjusted crossed modules}\label{ssec:equivalences}
        
        It is natural to expect that equivalent adjusted crossed modules yield equivalent adjusted BRST 2-groupoids over a manifold $U$. This, however, is not generally true. In the following, we develop a sufficient additional condition for special adjustments.

        \paragraph{Equivalent specially adjusted crossed modules.} Consider two specially adjusted crossed modules $\caG=(\caG,\kappa,P)$ and $\tilde \caG=(\tilde\caG,\tilde\kappa,\tilde P)$ which are equivalent by a span of such crossed modules,
        \begin{equation}\label{eq:span_specially_adjusted}
            \begin{tikzcd}
                & \hat \caG\arrow[dl,"\phi"'] \arrow[dr,"\psi"]
                \\
                \caG & & \tilde \caG
            \end{tikzcd}
        \end{equation}
        with $\hat \caG=(\hat \caG,\hat \kappa,\hat P)$ another crossed module with special adjustment involving $\hat P$. 
        \begin{definition}
            We call $\caG$ and $\tilde \caG$ equivalent, if $\phi$ and $\psi$ are quasi-isomorphisms with
            \begin{equation}
                P\circ \phi\ =\ \phi\circ \hat P
                \eand
                \tilde P\circ \psi\ =\ \psi\circ \hat P~.
            \end{equation} 
        \end{definition}
        \noindent Clearly, $\phi$ and $\psi$ respect the decompositions~\eqref{eq:decompositions}. 
        
        It is now sufficient to consider equivalence for a  strict quasi-isomorphisms, e.g.~$\phi$; this will then extend to the other side of the span~\eqref{eq:span_specially_adjusted}. We thus want to examine whether the BRST 2-groupoids $\scB$ for $\caG$ and $\hat \scB$ for $\hat \caG$ are equivalent. For this to be true, gauge-equivalence classes of connections have to be mapped injectively to gauge-equivalence classes of connections under $\phi$.
        
        As a first step, we apply the decomposition~\eqref{eq:decompositions} to both $\caG$ and $\hat \caG$, obtaining
        \begin{equation}\label{eq:decomposition_equivalence}
			\begin{aligned}
            \hat\frh\ &\cong\ \ker(\hat\sft)\oplus\im(\hat P\circ\hat\sft)
            &&\eand
            &\hat\frg\ &\cong\ \coker(\hat\sft)\oplus\im(\hat\sft\circ \hat P)~,
            \\
            \frh\ &\cong\ \ker(\sft)\oplus\im(P\circ\sft)
            &&\eand
            &\frg\ &\cong\ \coker(\sft)\oplus\im(\sft\circ P)~.
			\end{aligned}
		\end{equation}
        The map $\phi$ is a chain map with regards to $\sft$ and $\hat \sft$ and hence maps $\ker(\sft)$ to $\ker(\hat \sft)$ and $\coker(\hat \sft)=(\mathrm{id}-\hat \sft \circ \hat P)$ to $\coker(\sft)$.
        
        Consider now gauge equivalence classes of local connection forms $(\hat A,\hat B)\in \hat \scB_0$ and $(A,B)\in \scB_0$. We can choose gauges (i.e.~representatives of gauge equivalence classes) such that 
        \begin{equation}
            \hat A|_{\im(\hat\sft\circ \hat P)}=0
            \eand
            A|_{\im(\sft\circ P)}=0
        \end{equation} 
        by applying a suitable 1-morphism. This gauge choice restricts the 1-morphisms such that
        \begin{equation}\label{eq:restricted_gauge}
            \hat \lambda\in \ker(\hat \sft)
            \eand 
            \lambda\in \ker(\sft)~.
        \end{equation} 
        The remaining parts of $\hat A$ and $A$ take values in $\coker(\hat \sft)$ and $\coker(\sft)$, respectively, and these spaces are isomorphic with the isomorphism given by the restriction of $\phi$.
        
        Similarly, the component $\hat B|_{\ker(\hat \sft)}$ is bijectively mapped to $B|_{\ker(\sft)}$, and it remains to consider $\hat B|_{\im(\hat\sft\circ \hat P)}$. We note that the restricted gauge transformations preserve the decomposition $\hat B=\hat B|_{\ker(\hat \sft)}+\hat B|_{\im(\hat P\circ\hat\sft)}$ due to~\eqref{eq:restricted_gauge}. Moreover, $\phi$ is generically not an isomorphism from $\im(\hat\sft\circ \hat P)$ to $\im(\sft\circ P)$. 
        
        If, however, $\hat B|_{\ker(\hat \sft)}$ and $B|_{\ker(\hat \sft)}$ were not independent, but determined by other components of the local connection forms, then this would ensure an injective map between gauge equivalence classes of components. We therefore define the following.
        \begin{definition}
            Let $(\kappa,P)$ be a special adjustment~\eqref{eq:adjustmentPathLoopGroups} of a crossed module of Lie groups. We call a local connection form $(A,B)$ \uline{$P$-flat} if the fake curvature $F$ satisfies
            \begin{equation}\label{eq:adjustedFakeCurvatureCondition}
                P(F)\ =\ 0~.
            \end{equation}
            
            The adjusted BRST 2-groupoid can be consistently restricted to the full sub-2-groupoid of $P$-flat connections. We call this the \uline{adjusted $P$-flat BRST 2-groupoid}.
        \end{definition}
        
        Our considerations above then show the following.
        \begin{proposition}
            Consider two equivalent specially adjusted crossed modules of Lie groups. Then the corresponding adjusted $P$-flat BRST 2-groupoids are equivalent.
        \end{proposition}

        We note that imposing $P$-flatness is a choice, motivated by mathematical considerations, but this condition is not necessary for the consistency of the BRST 2-groupoid.
        
        Moreover, we only showed that $P$-flatness is a sufficient condition, and it will not be necessary for certain equivalence classes of specially adjusted crossed modules of Lie groups. An explicit example highlighting the necessity of this condition is given in \ref{sec:principalAdjustedHigherPrincipal}.
                
        \subsection{Adjusted differential cohomology for higher principal bundles}\label{sec:adj_cocycles}
        
        We can now stackify the adjusted BRST 2-groupoid in order to construct the adjusted differential cohomology for principal 2-bundles with connection. 
        
        \paragraph{Covers.}
        Recall that a \uline{cover} is a surjective submersion $\sigma:Y\rightarrow X$ of manifolds, not necessarily a local diffeomorphism, together with the fibre products
        \begin{equation}\label{eq:fibreProducts}
            Y^{[n]}\ \coloneqq\ \underbrace{Y\times_X\cdots\times_XY}_{\text{$n$ factors}}\ \coloneqq\ \{(y_1,\ldots,y_n)\in Y^n\,|\,\sigma(y_1)=\cdots=\sigma(y_n)\}
        \end{equation}
        for $n\in\IN$. We will sometimes make the additional assumption of working with a \emph{good} cover by which we mean a cover in which all the $Y^{[n]}$ are disjoint unions of contractible spaces.
        
        \paragraph{Stackification.} 
        In the following, let $\caG_\kappa\coloneqq(\sfH\overset{\sft}{\longrightarrow}\sfG,\acton,\kappa)$ be an adjusted crossed module of Lie groups and $\sfLie(\caG_\kappa)\coloneqq(\frh\overset{\sft}{\longrightarrow}\frg,\acton,\kappa)$ be the associated adjusted crossed module of Lie algebras. Let also $Y$ be the cover of a manifold $X$ and $\scB$ be the BRST 2-groupoid for $\caG_\kappa$ and $Y$. In this 2-groupoid, we have assigned a local connection form to each element in $Y$, and we have local and higher local gauge transformations. 
        
        Next, we smoothly assign to each element in $Y^{[2]}$ an element in $\scB_1$, gluing the local connection data together. Finally, we assign to each element in $Y^{[3]}$ an element in $\scB_2$ that describes the composition of the gluing data over each $Y^{[2]}$. This leads to consistency relations over $Y^{[3]}$ and $Y^{[4]}$ which, together with the assigned gluing data over $Y^{[2]}$, produce the adjusted differential cocycles.
        
        To obtain the corresponding adjusted differential coboundaries, we consider local gauge transformations, together with higher gauge transformations of the gluing data on $Y^{[2]}$, leading to consistency relations on $Y^{[2]}$ and $Y^{[3]}$. 
        
        Finally, in order to obtain the adjusted differential 2-coboundaries, we consider local higher gauge transformations on $Y$ leading to consistency relations on $Y^{[2]}$.
        
        The required computations are straightforward, and we therefore only list the results in the following.
        
        \paragraph{Adjusted differential cocycles.} The above process yields the following data and relations for the cocycles.
        \begin{definition}\label{def:adj_cocycles}
            The \uline{differential cocycles} describing a \uline{principal $\caG$-bundle with adjusted connection} over a manifold $X$ subordinate to a cover $Y\rightarrow X$ consist of
            \begin{subequations}\label{eq:adjustedCocycleConditions}
                \begin{equation}
                    \begin{aligned}
                        h\ &\in\ \scC^\infty(Y^{[3]},\sfH)~,
                        \\
                        (g,\Lambda)\ &\in\ \scC^\infty(Y^{[2]},\sfG)\oplus\Omega^1(Y^{[2]},\frh)~,
                        \\
                        (A,B)\ &\in\ \Omega^1(Y^{[1]},\frg)\oplus\Omega^2(Y^{[1]},\frh)~,
                    \end{aligned}
                \end{equation}
                such that
                \begin{equation}\label{eq:adjustedCocycleConditionsB}
                    \begin{aligned}
                        h_{ikl}h_{ijk}\ &=\ h_{ijl}(g_{ij}\acton h_{jkl})~,
                        \\
                        g_{ik}\ &=\ \sft(h_{ijk})g_{ij}g_{jk}~,
                        \\
                        \Lambda_{ik}\ &=\ \Lambda_{jk}+g_{jk}^{-1}\acton \Lambda_{ij}-g_{ik}^{-1}\acton(h_{ijk}\nabla_i h_{ijk}^{-1})~,
                        \\
                        A_j\ &=\ g^{-1}_{ij}A_ig_{ij}+g^{-1}_{ij}\rmd g_{ij}-\sft(\Lambda_{ij})~,
                        \\
                        B_j\ &=\ g^{-1}_{ij}\acton B_i+\rmd\Lambda_{ij}+ A_j\acton\Lambda_{ij}+\tfrac12[\Lambda_{ij},\Lambda_{ij}]-\kappa\big(g_{ij}^{-1},F\big)\,,
                    \end{aligned}
                \end{equation}
            \end{subequations}
            for all appropriate $(i,j,\ldots)\in Y^{[n]}$. Here, $\nabla_i\coloneqq\rmd+A_i\acton$ as before, and the \uline{adjusted curvature forms} are defined as 
            \begin{equation}\label{eq:adjustedCurvatures}
                \begin{aligned}
                    F_i\ &\coloneqq\ \rmd A_i+\tfrac12[A_i,A_i]+\sft(B_i)\ \in\ \Omega^2(Y^{[1]},\frg)~,
                    \\
                    H_i\ &\coloneqq\ \rmd B_i+A_i\acton B_i-\kappa(A_i,F_i)\ \in\ \Omega^3(Y^{[1]},\frh)~,
                \end{aligned}
            \end{equation}
            for all $(i)\in Y^{[1]}$.
        \end{definition}
        
        The adjusted curvatures~\eqref{eq:adjustedCurvatures} satisfy the \uline{adjusted Bianchi identities}
        \begin{equation}\label{eq:adjustedBianchiIdentities}
            \nabla_i F_i\ =\ \sft(H_i+\kappa(A_i,F_i))
            \eand
            \nabla_i H_i\ =\ \kappa(A_i,\sft(H_i))-\kappa(F_i,F_i)~.
        \end{equation}
        The first identity in~\eqref{eq:adjustedBianchiIdentities} follows from straightforward differentiation. From~\eqref{eq:kappat_linear} and~\eqref{eq:kappa_commutator}, we find
        \begin{subequations}
            \begin{equation}
                \kappa(\sft(B_i),F_i)\ =\ -F_i\acton B_i     
            \end{equation} 
            and
            \begin{equation}
                A_i\acton\kappa(A_i,F_i)\ =\ \kappa\big(\tfrac12[A_i,A_i],F_i\big)-\kappa\big(A_i,[A_i,F_i]-\sft(\kappa(A_i,F_i))\big)\,,
            \end{equation}
        \end{subequations}
        respectively. Upon using these equations and the adjusted Bianchi identity for $F_i$, it is not too difficult to see that applying $\rmd$ to the definition of $H_i$ yields the second identity in~\eqref{eq:adjustedBianchiIdentities}.

        \paragraph{Adjusted differential 1-coboundaries.} The coboundary relations obtained from stackification are the following.
        \begin{definition}\label{def:adj_coboundaries}
            Two principal $\caG$-bundles with adjusted connections are said to be \uline{equivalent} whenever the corresponding adjusted differential cocycles $(h,g,\Lambda,A,B)$ and $(\tilde h,\tilde g,\tilde\Lambda,\tilde A,\tilde B)$ are linked by a \uline{differential 1-coboundary} consisting of maps
            \begin{subequations}\label{eq:adjustedCoboundaryConditions}
                \begin{equation}
                    \begin{aligned}
                        b\ &\in\ \scC^\infty(Y^{[2]},\sfH)~,
                        \\
                        (a,\lambda)\ &\in\ \scC^\infty(Y^{[1]},\sfG)\oplus\Omega^1(Y^{[1]},\frh)
                    \end{aligned}
                \end{equation}
                by means of
                \begin{equation}\label{eq:adjustedCoboundaryConditionsB}
                    \begin{aligned}
                        \tilde h_{ijk}\ &=\ a_i^{-1}\acton(b_{ik}h_{ijk}(g_{ij}\acton b_{jk}^{-1})b_{ij}^{-1})~,
                        \\
                        \tilde g_{ij}\ &=\ a_i^{-1}\sft(b_{ij})g_{ij}a_j~,
                        \\
                        \tilde\Lambda_{ij}\ &=\ a^{-1}_j\acton\Lambda_{ij}+\lambda_j-\tilde{g}^{-1}_{ij}\acton\lambda_i+(a_j^{-1}g_{ij}^{-1})\acton(b_{ij}^{-1}\nabla_ib_{ij})~,                    
                        \\
                        \tilde A_i\ &=\ a_i^{-1}A_ia_i+a_i^{-1}\rmd a_i-\sft(\lambda_i)~,
                        \\
                        \tilde B_i\ &=\ a_i^{-1}\acton B_i+\rmd\lambda_i+\tilde{A}_i\acton\lambda_i+\tfrac12[\lambda_i,\lambda_i]-\kappa\big(a_i^{-1},F_i\big)
                    \end{aligned}
                \end{equation}
            \end{subequations}
            for all appropriate $(i,j,\ldots)\in Y^{[n]}$. We call such coboundary transformations \uline{adjusted gauge transformations}.
        \end{definition}
        
        The adjusted curvatures~\eqref{eq:adjustedCurvatures} change under the adjusted gauge transformations~\eqref{eq:adjustedCoboundaryConditions} as
        \begin{equation}\label{eq:adjustedCurvatureGaugeTransformation}
            \tilde F_i\ =\ a_i^{-1}F_ia_i-\sft(\kappa(a_i^{-1},F_i))
            \eand
            \tilde H_i\ =\ a_i^{-1}\acton H_i-\kappa(a_i^{-1},\sft(H_i))
        \end{equation}
        for all $(i)\in Y^{[1]}$. For a special adjustment~\eqref{eq:adjustmentPathLoopGroups}, we also have
        \begin{equation}
            \sft(\tilde H_i)\ =\ \sft(H_i)
        \end{equation}
        for all $(i)\in Y^{[1]}$. Both results follow directly by using the formulas~\eqref{eq:adjustedCoboundaryConditionsB} for the gauge-transformed gauge potentials $\tilde A_i$ and $\tilde B_i$ in the definition of adjusted curvatures in~\eqref{eq:adjustedCurvatures}, and simplifying the resulting expressions using identities from \cref{app:useful,app:kappaProperties}. The derivation for $\tilde H_i$ is more involved. One first shows that
        \begin{equation}
            \tilde H_i+\kappa(\tilde A_i,\tilde F_i)\ =\ a^{-1}_i\acton(H_i+\kappa(A_i,F_i))+(a_i^{-1}F_i a_i)\acton\lambda_i-\tilde\nabla_i\kappa(a_i^{-1},F_i)~,
        \end{equation}
        and one then expands the terms with $\kappa$ using~\eqref{eq:dofkappa} and the Bianchi identity for $F_i$ from~\eqref{eq:adjustedBianchiIdentities} as well as~\eqref{eq:kappat_linear} and~\eqref{eq:kappa_galfaX}. The case of a special adjustment follows from 
        \begin{equation}
            \sft(\kappa(a_i^{-1},\sft(H_i))\ =\ \sft(P(\sft(a_i^{-1}[\sft(H_i),a_i])))\ =\ \sft(a_i^{-1}[\sft(H_i),a_i]))~.
        \end{equation}
        
        \paragraph{Adjusted differential 2-coboundaries.}Contrary to ordinary principal bundles, principal 2-bundles come with higher gauge transformations of bundle isomorphisms that link gauge transformations or bundle isomorphisms.
        \begin{definition}\label{def:adj_2-coboundaries}
            Two 1-coboundaries $(a,b,\lambda)$ and $(\tilde a,\tilde b,\tilde\lambda)$ are called \uline{equivalent} whenever they are linked by a \uline{2-coboundary} consisting of maps
            \begin{subequations}\label{eq:adjustedHigherGaugeTransformations}
                \begin{equation}
                    m\ \in\ \scC^\infty(Y^{[1]},\sfH)
                \end{equation}
                by means of
                \begin{equation}\label{eq:adjustedHigherGaugeTransformationsB}
                    \begin{aligned}
                        \tilde b_{ij}\ &=\ m_ib_{ij}(g_{ij}\acton m_j^{-1})~,
                        \\
                        \tilde a_i\ &=\ \sft(m_i)a_i~,
                        \\
                        \tilde \lambda_i\ &=\ \lambda_i+a_i^{-1}\acton(m_i^{-1}\nabla_i m_i)
                    \end{aligned}
                \end{equation}
            \end{subequations}
            for all appropriate $(i,j,\ldots)\in Y^{[n]}$. We shall refer to them as \uline{adjusted higher gauge transformations}.
        \end{definition}
        
        \paragraph{Comments on the construction.}
        Altogether, we obtain a 2-groupoid of principal 2-bundles with adjusted connection subordinate to a cover $Y\rightarrow X$, where cocycles, 1-coboundaries, and 2-coboundaries are linked schematically as follows:
        \begin{equation}
            \begin{tikzcd}
                (h,g,\Lambda,A,B) 
                \arrow[r,bend left=50,"{(a,b,\lambda)}"{name=U}]
                \arrow[r,bend right=50,"{(\tilde a,\tilde b,\tilde\lambda)}"{name=D},swap]
                \arrow[Rightarrow,"\,m", from=U, to=D,start anchor={[yshift=-1ex]},end anchor={[yshift=1ex]}]
                & (\tilde h,\tilde g,\tilde\Lambda,\tilde A,\tilde B)~,
            \end{tikzcd}
        \end{equation}
        The 2-groupoid of general principal 2-bundles is obtained as the colimit of the above 2-groupoid over the directed set of covers. 
        
        \begin{remark}
            We note that for the crossed module of Lie groups $(\sfG\overset{\sft}{\longrightarrow}\unit,\id)$, the group $\sfG=\ker(\sft)$ is necessarily Abelian. A principal $(\sfG\overset{\sft}{\longrightarrow}\unit,\id)$-bundle is then an \uline{Abelian gerbe} with connection in the sense of Hitchin--Chatterjee~\cite{Hitchin:1999fh,Chatterjee:1998}, and for these, the fake flatness issue is absent. Recall that any of the more general bundle gerbes of~\cite{Murray:9407015} is stably isomorphic to a Hitchin--Chatterjee gerbe of~\cite{Murray:2007ps} and thus also, albeit indirectly, captured by our approach.
        \end{remark}
                
        \begin{remark}
            Note that because of \ref{rem:not_all_2_groups_have_adjustments}, not all principal 2-bundles admit an adjusted connection. We plan to address the question of existence of a connection for principal 2-bundles with adjusted higher structure group elsewhere.
        \end{remark}
        
        \paragraph{Higher Poincar\'e lemma.} We note that the evident higher form of the Poincar\'e lemma has been proved for principal 2- and 3-bundles with unadjusted curvatures:
        \begin{proposition}[{\cite[Theorem 3.4]{Demessie:2014ewa}}]
            Locally, (unadjusted) flat connections on principal 2-bundles are gauge-equivalent to the trivial connection.
        \end{proposition}
        \noindent Since the adjustment does not modify the gauge transformations for flat connections, it is clear that the Poincar\'e lemma continues to hold for adjusted connections.
        
        \subsection{Unadjusted differential cocycles}
        
        Let us briefly compare the adjusted differential cocycles to the commonly used, unadjusted ones and point out the problems with these.
        
        \paragraph{Unadjusted cocycles.} Putting $\kappa=0$ in~\eqref{eq:adjustedCocycleConditionsB} recovers the cocycle relations for principal 2-bundles commonly found in the literature. These are a special case of the original definitions in~\cite{Breen:math0106083,Aschieri:2003mw}, which also contained an additional map
        \begin{equation}
            \delta\ \in\ \Omega^2(Y^{[2]},\frh)
        \end{equation}
        giving a deformation of the above cocycle relations. This additional map is put to zero in most of the subsequent literature\footnote{see e.g.~\cite{Schreiber:2008aa,Saemann:2012uq}} for a number of reasons. One of these is the fact that this additional map corresponds to an unexpected, additional higher gauge transformation which is absent in all natural local and infinitesimal pictures.\footnote{A closer look at the datum $\delta$ shows that it is akin to an (unwanted) non-trivial curvature for $\Lambda$.} The adjustment datum $\kappa$ can be interpreted as a mechanism of providing the arbitrary element $\delta$ in terms of expected cocycle data.

        \paragraph{Fake flatness.} We note that for connections that are fake-flat, i.e.~$F=0$, the term containing $\kappa$ vanishes, and there is no difference between adjusted and unadjusted cocycles. Fake flatness has been argued to be vital for consistency of the higher parallel transport in~\cite{Schreiber:0705.0452,Schreiber:2008aa}, cf.~also the discussion in~\cite{Kim:2019owc}. 
        
        Here, we can see this condition emerge in particular from the consistency of the gluing relations on $Y^{[3]}$,
        \begin{equation}\label{eq:tripleGaugeTransformations}
            \begin{tikzcd}
                & (A_2,B_2)\arrow[dr,"{(g_{23},\Lambda_{23})}"]\arrow[d,Rightarrow,"{h_{123}}"]
                \\
                (A_1,B_1) \arrow[ur,"{(g_{12},\Lambda_{12})}"] \arrow[rr,"{(g_{13},\Lambda_{13})}",swap] & \phantom{g} & (A_3,B_3) 
            \end{tikzcd}
        \end{equation}
        
        A quick computation using~\eqref{eq:adjustedCocycleConditions} shows that the commutativity of the diagram~\eqref{eq:tripleGaugeTransformations} amounts to
        \begin{equation}\label{eq:fakeCurvatureConditionFromTriangle}
            (g_{23}^{-1}g_{12}^{-1})\acton (h_{123}^{-1}(F_1\acton h_{123}))\ =\ 0~,
        \end{equation}
        see \cref{app:cocycleConsistency} for details. For a generic crossed module of Lie groups $\caG$ and generic datum $h$, this requires $F=0$. 
        
        The ramifications of fake flatness, however, are rather severe. Most importantly, we have the following result, which impedes their application in many physical situations. 
        \begin{proposition}[{\cite[Theorem 4.1]{Gastel:2018joi}, \cite[Theorem 4.3]{Saemann:2019dsl}}]\label{prop:implications_fake_flatness}
            Fake-flat connections on principal 2-bundles can be gauge-transformed locally to a connection on an Abelian gerbe.
        \end{proposition}
        
        \begin{remark}
            We stress that the evident analogue of \ref{prop:implications_fake_flatness} is obviously false for principal bundles. From a physical perspective, this would mean that in any pure gauge theory one could choose some abelian gauge such that gauge bosons become non-interacting, which is certainly not the case.
            
            In many situations in string theory or rather supergravity, the 1-form and 2-form parts of the connection are expected to mix in the curvature forms, e.g.~in the above mentioned case of heterotic supergravity.
            \Cref{prop:implications_fake_flatness} then shows that fake-flat connections on principal 2-bundles are insufficient for describing these. Moreover, interesting explicit cases of such physical connections on principal 2-bundles are simply not fake-flat, and we discuss a particular example in \ref{ssec:nonAbelianSelfDualString}.
        \end{remark}
        
        \section{Examples: Principal bundles as adjusted higher principal bundles}\label{sec:principal_G_bundles_as_adjusted_higher_bundles}
        
        As a very instructive example, let us explain how ordinary principal $\sfG$-bundles with connection for $\sfG$ some Lie group can be seen as principal 2-bundles with adjusted connection and with the structure 2-group given by the adjusted crossed module of Lie groups $\caL\sfG\cong\sfG_{\rm cm}$, see~\eqref{eq:definitionLieGroupLieCrossed} and~\eqref{eq:crossedModuleLg}.\footnote{Note that in the case of $\sfG_{\rm cm}$, condition~\eqref{eq:fakeCurvatureConditionFromTriangle} is trivial, and there is no need to impose a fake-flatness condition.}
        
        In this example of a special adjustment, we will see in particular the freedom involved in choosing and adjustment (in particular, adjustments are not unique) and how the equivalence of differential cocycles works in the case of $P$-flat connections.
        
        Beyond its instructive value, this example also underlies the later discussion of string structures in \ref{sec:stringBundlesWithConnection}.
        
        \subsection{General setting}\label{sec:principalAdjustedHigherPrincipal}
        
        The differential cocycles for a principal $\caL\sfG_\kappa$-bundle with $\kappa$ as in~\eqref{eq:adjustmentPathLoopGroups} is given in \cref{def:adj_cocycles}. The adjusted curvatures~\eqref{eq:adjustedCurvatures} reduce to
        \begin{equation}\label{eq:curvature_LG_kompressed}
            F_i\ =\ \rmd A_i+\tfrac12[A_i,A_i]+B_i
            \eand
            H_i\ =\ (\id-\wp\cdot\boundary)(\rmd F_i)
        \end{equation} 
        for all $(i)\in Y^{[1]}$ which, in turn, allow us to rewrite the adjusted Bianchi identities~\eqref{eq:adjustedBianchiIdentities} as
        \begin{equation}
            \nabla_i F_i\ =\ H_i+\kappa(A_i,F_i)
            \eand
            \rmd H_i\ =\ 0
        \end{equation}
        for all $(i)\in Y^{[1]}$. Finally, the $P$-flatness condition~\eqref{eq:adjustedFakeCurvatureCondition} becomes
        \begin{equation}\label{eq:adjustedFakeCurvatureConditionPathLoopGroups}
            (\id-\wp\cdot\boundary)(F_i)\ =\ 0
            \quad\Rightarrow\quad
            H_i\ = 0
        \end{equation}
        for all $(i)\in Y^{[1]}$. We then expect the following statement.
        
        \begin{proposition}
            The differential cohomology of principal $\sfG$-bundles with connections and the differential cohomology of principal $\caL\sfG$-bundles with adjusted connections satisfying the $P$-flatness condition~\eqref{eq:adjustedFakeCurvatureConditionPathLoopGroups} are isomorphic.
        \end{proposition}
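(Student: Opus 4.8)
The plan is to set up a pair of functors between the descent groupoid of ordinary principal $\sfG$-bundles with connection and that of principal $\caL\sfG$-bundles with adjusted connection subject to~\eqref{eq:adjustedFakeCurvatureConditionPathLoopGroups}, both over a fixed good cover $Y\to X$ (to such a cover one may always refine), and then to prove that they are quasi-inverse. One functor is the ``forgetful'' passage via endpoint evaluation $\boundary\colon P_0\sfG\to\sfG$ of~\eqref{eq:endpointEvaluation}, the other reconstructs a $\caL\sfG$-cocycle from a $\sfG$-cocycle by choosing lifts and normalising the connection; the adjusted fake-curvature condition is what makes the second one well defined up to equivalence.

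For the first functor I would, given a $\caL\sfG$-cocycle $(h,g,\Lambda,A,B)$ as in~\cref{def:adjustedDescentData}, simply set $\hat g_{ij}\coloneqq\boundary(g_{ij})$ and $\hat A_i\coloneqq\boundary(A_i)$. Because $\boundary$ is a group morphism with $\boundary\circ\sft=0$, commutes with $\rmd$, and kills the $L_0\frg$-valued $\Lambda$, $B$, $h$ and $\kappa$-terms, the first, third and fifth relations in~\eqref{eq:adjustedCocycleConditionsB} become trivial, while the second and fourth collapse precisely to the \v Cech cocycle relation and the connection-gluing relation, i.e.\ to~\eqref{eq:unadjustedCocycleConditionsB} with $\sfH=\unit$. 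The same computation applied to~\eqref{eq:adjustedCoboundaryConditionsB} shows that adjusted (higher) gauge transformations are carried to ordinary gauge transformations (resp.\ to the identity), so this step uses no fake-flatness at all.

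For the reverse functor I would, on each contractible component of $Y^{[2]}$, pick a lift $g_{ij}$ of $\hat g_{ij}$ through $\boundary$ and set $A_i\coloneqq\wp\cdot\hat A_i$, which lifts $\hat A_i$ since $\wp(1)=1$. Then $h_{ijk}\coloneqq g_{ik}(g_{ij}g_{jk})^{-1}$ lands in $\ker\boundary=L_0\sfG$ by the $\sfG$-cocycle relation and satisfies the hexagon identity by a one-line manipulation; injectivity of $\sft$ makes the fourth relation in~\eqref{eq:adjustedCocycleConditionsB} determine $\Lambda_{ij}\in\Omega^1(Y^{[2]},L_0\frg)$ uniquely (its landing in $L_0\frg$ is exactly the ordinary connection-gluing identity), and the adjusted fake-curvature condition forces $B_i=-P\big(\rmd A_i+\tfrac12[A_i,A_i]\big)$, as in the discussion preceding the statement; with these choices $F_i=\wp\cdot\hat F_i$, where $\hat F_i\coloneqq\rmd\hat A_i+\tfrac12[\hat A_i,\hat A_i]$. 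The remaining relations to check are the third and fifth in~\eqref{eq:adjustedCocycleConditionsB}; the third is a routine descent computation, and the fifth is equivalent, via $B_i=F_i-\rmd A_i-\tfrac12[A_i,A_i]$ and the identity fixing $\Lambda_{ij}$, to the curvature-transition relation $F_j=g_{ij}^{-1}F_i g_{ij}-\sft(\kappa(g_{ij}^{-1},F_i))$, which follows from $F_i=\wp\cdot\hat F_i$ and $P=\id-\wp\cdot\boundary$ by a short computation of the kind carried out in the proof of~\cref{prop:adjustedCurvatureGaugeTransformation}.

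Finally I would establish quasi-inverseness. The composite $\sfG\to\caL\sfG\to\sfG$ returns $(\boundary(g_{ij}),\boundary(\wp\cdot\hat A_i))=(\hat g,\hat A)$ on the nose. For $\caL\sfG\to\sfG\to\caL\sfG$ the reconstruction depends on the chosen lift $g'_{ij}$ and on the normalisation $A'_i=\wp\cdot\boundary(A_i)$, but the data $a_i\coloneqq\unit$, $b_{ij}\coloneqq g'_{ij}g_{ij}^{-1}\in L_0\sfG$, $\lambda_i\coloneqq P(A_i)\in L_0\frg$ is an adjusted gauge transformation~\eqref{eq:adjustedCoboundaryConditions} from the original $\caL\sfG$-cocycle to the reconstructed one: it visibly produces $\tilde g=g'$ and $\tilde A=A'$, and since $\tilde F_i=F_i$ when $a_i=\unit$ (by~\eqref{eq:adjustedCurvatureGaugeTransformation} and $\kappa(\unit,-)=0$ from~\eqref{eq:adjustmentNormalisation}), the adjusted fake-curvature condition is preserved and then fixes $\tilde B$, whence also $\tilde\Lambda$ and $\tilde h$, to agree with the reconstructed $B'$, $\Lambda'$, $h'$. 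Compatibility with higher gauge transformations is immediate since $\boundary$ annihilates the parameter $m\in\scC^\infty(Y^{[1]},L_0\sfG)$. I expect the main obstacle to be exactly this last uniqueness argument: verifying that, once $A$ is normalised, the adjusted fake-curvature condition leaves no residual freedom in $B$ --- equivalently, that the ``redundant'' summands $\im(P\circ\sft)\subseteq\frh$ and $\im(\sft\circ P)\subseteq\frg$ of the connection data carry no independent information --- so that the two functors are genuinely mutually inverse and not merely related by a one-sided comparison.
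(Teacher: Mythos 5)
Your construction is, in substance, the paper's own: the lift $A_i^\circ=\wp\cdot\hat A_i$ with $B_i$ forced by $P(F_i)=0$ (your $B_i=-P\big(\rmd A_i+\tfrac12[A_i,A_i]\big)$ is exactly the paper's $\tfrac12(\wp-\wp^2)\cdot[\hat A_i,\hat A_i]$), $h_{ijk}$ as the failure of the chosen lifts of $\hat g_{ij}$ to form a cocycle, $\Lambda_{ij}$ determined by injectivity of $\sft$ from the $A$-gluing, the inverse map by endpoint evaluation $\boundary$, and the comparison coboundary $(a,b,\lambda)=(\unit,\,g'g^{-1},\,P(A))$ witnessing $\phi\circ\psi\Rightarrow\id$. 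Your detour through the curvature-transition law $F_j=g_{ij}^{-1}F_ig_{ij}-\sft(\kappa(g_{ij}^{-1},F_i))$, which indeed follows from $F_i=\wp\cdot\hat F_i$, is only a mild reorganisation of the verification the paper leaves implicit, and your worry about residual freedom in $B$ is unfounded for the reason you yourself give: $\sft$ is injective and $P(F_i)=0$ pins $B_i$ down completely.

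The one genuine gap is at the level of morphisms. You announce a pair of quasi-inverse functors, but the lifting assignment $\phi$ is only ever defined on objects. To conclude that the two kinds of descent data are equivalent — i.e.\ that equivalence classes are in bijection — you must also show that gauge-equivalent pairs $(g,A)$ and $(\tilde g,\tilde A)$ lift to gauge-equivalent adjusted cocycles. What you have established is, for each fixed object $x$, an isomorphism $x\cong\phi(\psi(x))$ together with functoriality of $\psi$; from $\psi(x)\sim\psi(y)$ via some $a\in\scC^\infty(Y^{[1]},\sfG)$ one cannot deduce $x\sim y$ without transporting $a$ through $\phi$, and this transport is nowhere constructed. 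The remedy is the same device you use for objects: on the contractible components of $Y^{[1]}$ lift $a_i$ through $\boundary$ to $a_i^\circ\in\scC^\infty(Y^{[1]},P_0\sfG)$, let the coboundary relations~\eqref{eq:adjustedCoboundaryConditions} (injectivity of $\sft$ plus $P(F_i)=0$ again) determine $b^\circ_{ij}$ and $\lambda^\circ_i$, and observe that two such lifts are related by the higher gauge transformation~\eqref{eq:adjustedHigherGaugeTransformations} with $m^\circ_i=\tilde a^\circ_i(a^\circ_i)^{-1}$. This is precisely the ``full gauge orbits are mapped to each other'' portion of the paper's proof, which your proposal omits; with it added, your argument coincides with the paper's.
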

        
        \begin{proof}
            We shall establish this claim in both directions in two steps. Firstly, we map the cocycle for one type of bundle to the cocycle for the other type. We then show that isomorphism classes are indeed mapped to isomorphism classes. 
            
            \vspace{5pt}
            \noindent
            \uline{Principal bundles $\Rightarrow$ higher principal bundles:} Consider the cocycle describing a principal $\sfG$-bundle over $X$ subordinate to a good cover $Y\rightarrow X$,
            \begin{subequations}\label{eq:descentDataOrdinaryPrincipalBundle}
                \begin{equation}
                    g\ \in\ \scC^\infty(Y^{[2]},\sfG)
                    \eand
                    A\ \in\ \Omega^1(Y^{[1]},\frg)
                \end{equation}
                with
                \begin{equation}
                    g_{ik}\ =\ g_{ij}g_{jk}
                    \eand
                    A_j\ =\ g_{ij}^{-1}A_i g_{ij}+g_{ij}^{-1}\rmd g_{ij}
                \end{equation}
            \end{subequations}
            for all appropriate $(i,j,\ldots)\in Y^{[n]}$. Since the endpoint evaluation map $\boundary:P_0\sfG\rightarrow \sfG$ defined in~\eqref{eq:endpointEvaluation} is a surjective submersion, the pullbacks along $\boundary$ exist, and since also $Y^{[2]}$ is the disjoint union of contractible subsets of $X$ by assumption, there exists a lift of $g\in\scC^\infty(Y^{[2]},\sfG)$, i.e.~a map $g^\circ\in\scC^\infty(Y^{[2]},P_0\sfG)$ such that 
            \begin{equation}
                \begin{tikzcd}
                    & P_0\sfG \arrow[d,"\boundary"]
                    \\
                    Y^{[2]} \arrow[ru,dotted,"g^\circ"] \arrow[r,"g"] & \sfG
                \end{tikzcd}
            \end{equation}
            is commutative.\footnote{Because $Y$ is a good cover, $Y^{[2]}$ is a disjoint union of contractible spaces, and hence the usual lifting lemma (e.g.~\cite[Lemma 79.1]{Munkres:2014aa}) works on each connected component of $Y^{[2]}$.}
            
            The map $g^\circ$ only satisfies the cocycle condition for a principal $P_0\sfG$-bundle up to elements in $L_0\sfG$. To extend this to a cocycle for a principal $\caL\sfG$-bundle with adjusted connection, we hence introduce $h^\circ\in\scC^\infty(Y^{[3]},L_0\sfG)$ by 
            \begin{subequations}\label{eq:lifted_cocycles_to_LSpin}
                \begin{equation}
                    h^\circ_{ijk}\ \coloneqq\ g^\circ_{ik}(g^\circ_{jk})^{-1}(g^\circ_{ij})^{-1}
                    \eforall(i,j,k)\ \in\ Y^{[3]}~.
                \end{equation}
                This indeed defines a loop in $\sfG$, that is, $\boundary(h^\circ_{ijk})=\unit$, because of the gluing condition in~\eqref{eq:descentDataOrdinaryPrincipalBundle}. It is possible and convenient (but not necessary) to define the lift $g^\circ$ in such a way that
                \begin{equation}
                    g^\circ_{ii}\ =\ \unit
                    \eand
                    g^\circ_{ij}\ =\ (g^\circ_{ji})^{-1}
                \end{equation}
                with $\unit\in P_0\sfG$ the constant path, which also implies the normalisations
                \begin{equation}
                    h^\circ_{iij}\ =\ h^\circ_{ijj}\ =\ h^\circ_{iji}\ =\ \unit~.
                \end{equation}
                For the connection forms, we generalise the usual prescription for transporting Maurer--Cartan elements in an $L_\infty$-algebra along quasi-isomorphisms, cf.~\cite{Saemann:2017rjm}:
                \begin{equation}\label{eq:connection_loop_embedded}
                    A^\circ_i\ \coloneqq\ \wp\cdot A_i
                    \eand
                    B^\circ_i\ \coloneqq\ \tfrac12(\wp-\wp^2)\cdot[A_i,A_i]
                    \eforall
                    (i)\ \in\ Y^{[1]}
                \end{equation}
                so that
                \begin{equation}\label{eq:Lambda-lift}
                    \Lambda_{ij}^\circ\ \coloneqq\ (g_{ij}^\circ)^{-1}A_i^\circ g_{ij}^\circ+(g_{ij}^\circ)^{-1}\rmd g_{ij}^\circ-A_j^\circ
                    \eforall
                    (i,j)\ \in\ Y^{[2]}~.
                \end{equation}
            \end{subequations}        
            The right-hand side here is a loop in $\frg$ due to the gluing relation~\eqref{eq:descentDataOrdinaryPrincipalBundle} between $A_i$ and $A_j$. It is now not too difficult to verify that with all these definitions, all the cocycle conditions~\eqref{eq:adjustedCocycleConditions} including the $P$-flatness condition~\eqref{eq:adjustedFakeCurvatureConditionPathLoopGroups} are satisfied. Altogether, we have obtained the cocycle $(h^\circ,g^\circ,\Lambda^\circ,A^\circ,B^\circ)$ of a principal $\caL\sfG$-bundle with adjusted connection satisfying the $P$-flatness condition.
            
            \vspace{5pt}
            \noindent
            \uline{Principal bundles $\Leftarrow$ higher principal bundles:}
            Conversely, we can apply the endpoint evaluation~\eqref{eq:endpointEvaluation} to project the cocycle $(h^\circ,g^\circ,\Lambda^\circ,A^\circ,B^\circ)$ for a principal $\caL\sfG$-bundle over $X$ with adjusted connection and with the $P$-flatness condition~\eqref{eq:adjustedFakeCurvatureConditionPathLoopGroups} imposed to the data $(g,A)$ of a principal $\sfG$-bundle by means of
            \begin{equation}\label{eq:recover_ordinary_cocycles}
                g_{ij}\ \coloneqq\ \boundary(g_{ij}^\circ)
                \eand
                A_i\ \coloneqq\ \boundary(A_i^\circ)
            \end{equation}
            for all appropriate $(i,j,\ldots)\in Y^{[n]}$. Again, it follows that this data fulfils the required cocycle conditions~\eqref{eq:descentDataOrdinaryPrincipalBundle}.
            
            \vspace{5pt}
            \noindent
            \uline{Gauge and higher gauge equivalence:}
            Above, we have established morphisms $\phi:(g,A)\mapsto(h^\circ,g^\circ,\Lambda^\circ,A^\circ,B^\circ)$ and $\psi:(h^\circ,g^\circ,\Lambda^\circ,A^\circ,B^\circ)\mapsto(g,A)$. Evidently, $\psi\circ\phi=\id$, and it remains to show that there is a bundle isomorphism $\phi\circ\psi\Rightarrow\id$. Consider some cocycle $(h^\circ,g^\circ,\Lambda^\circ,A^\circ,B^\circ)$ and set
            \begin{equation}
                (\tilde h^\circ,\tilde g^\circ,\tilde\Lambda^\circ,\tilde A^\circ,\tilde B^\circ)\ \coloneqq\ (\phi\circ\psi)(h^\circ,g^\circ,\Lambda^\circ,A^\circ,B^\circ)~.
            \end{equation}
            We note that there is a coboundary $(a^\circ,b^\circ,\lambda^\circ)$ with
            \begin{equation}\label{eq:coboundaryAlpha}
                a^\circ_i\ \coloneqq\ \unit~,
                \quad
                b^\circ_{ij}\ \coloneqq\ \tilde g^\circ_{ij}(g^\circ_{ij})^{-1}~,
                \eand
                \lambda^\circ_i\ \coloneqq\ (\id-\wp\cdot\boundary)(A^\circ_i)
            \end{equation}
            for all appropriate $(i,j,\ldots)\in Y^{[n]}$ connecting $(h^\circ,g^\circ,\Lambda^\circ,A^\circ,B^\circ)$ with $(\tilde h^\circ,\tilde g^\circ,\tilde\Lambda^\circ,\tilde A^\circ,\tilde B^\circ)$. In particular, the difference between $g^\circ_{ij}$ and $\tilde g^\circ_{ij}$ is a loop and thus in the image of the bundle isomorphism with the given $b^\circ_{ij}$. Furthermore, the maps $\tilde h^\circ_{ijk}$ are fully determined by the maps $\tilde g^\circ_{ij}$ because $\sft$ is injective. Similarly, the difference between $A^\circ_i$ and $\tilde A^\circ_i$ is a loop and thus in the image of the bundle isomorphism with the given $\lambda^\circ_i$, which then fully determines the maps $\tilde \Lambda^\circ_{ij}$. The $P$-flatness condition~\eqref{eq:adjustedFakeCurvatureConditionPathLoopGroups} moreover fully determines $\tilde B^\circ_i$ in terms of $\tilde A^\circ_i$ again because $\sft$ is injective. Hence, for cocycles satisfying the $P$-flatness condition~\eqref{eq:adjustedFakeCurvatureConditionPathLoopGroups}, the coboundary data~\eqref{eq:coboundaryAlpha} indeed provide a bundle isomorphism $\phi\circ\psi\Rightarrow\id$. In fact, the full gauge orbits are mapped to each other:
            \begin{subequations}
                \begin{equation}
                    \begin{tikzcd}[row sep=2cm,column sep=3cm]
                        (g,A)\arrow[r,bend left=20,"\phi"]\arrow[d,"a"] & (h^\circ,g^\circ,\Lambda^\circ,A^\circ,B^\circ)\arrow[l,bend left=20,"\psi"{below}]\arrow[d,bend left=30,"{(\tilde a^\circ,\tilde b^\circ,\tilde\lambda^\circ)}"{name=U,right}]
                        \arrow[d,bend right=30,"{(a^\circ,b^\circ,\lambda^\circ)}"{name=D,left}]
                        \\
                        (\tilde g,\tilde A)\arrow[r,bend left=20,"\phi"] & (\tilde h^\circ,\tilde g^\circ,\tilde\Lambda^\circ,\tilde A^\circ,\tilde B^\circ)\arrow[l,bend left=20,"\psi"{below}]\arrow[Rightarrow,"m^\circ", from=D, to=U,start anchor={[xshift=1ex]},end anchor={[xshift=-1ex]}]
                    \end{tikzcd}
                \end{equation}
                Explicitly, $a^\circ$ and $\tilde a^\circ$ are lifts of $a$ such that 
                \begin{equation}\label{eq:boundary_kond_1}
                    \boundary(a^\circ_i)\ =\ a_i\ =\ \boundary(\tilde a_i^\circ)
                    \eforall
                    (i)\ \in\ Y^{[1]}~.
                \end{equation}
                The maps $b^\circ$ and $\tilde b^\circ$ are then determined by the explicit form of $\tilde g^\circ$,
                \begin{equation}
                    b^\circ_{ij}\ \coloneqq\ a^\circ_i\tilde g^\circ_{ij}a^{\circ-1}_jg_{ij}^{\circ-1}
                    \eand
                    \tilde b^\circ_{ij}\ \coloneqq\ \tilde a^\circ_i\tilde g^\circ_{ij}\tilde a^{\circ-1}_j(g_{ij}^\circ)^{-1}
                    \eforall
                    (i,j)\ \in\ Y^{[2]}~,
                \end{equation}
                where the expressions on the right-hand sides are indeed loops due to~\eqref{eq:boundary_kond_1}. Furthermore, comparing $A^\circ$ with $\tilde A^\circ$ fixes the maps $\lambda^\circ$ and $\tilde \lambda^\circ$,
                \begin{equation}
                    \begin{aligned}
                        \lambda^\circ_i\ &\coloneqq\ (a_i^\circ)^{-1}A^\circ_ia^\circ_i+(a_i^\circ)^{-1}\rmd a^\circ_i-\tilde A^\circ_i~,
                        \\
                        \tilde \lambda^\circ_i\ &\coloneqq\ (\tilde a^\circ_i)^{-1} A^\circ_i\tilde a^\circ_i+(\tilde a^\circ_i)^{-1}\rmd\tilde a^\circ_i-\tilde A^\circ_i
                    \end{aligned}
                \end{equation}
            \end{subequations}
            for all $(i)\in Y^{[1]}$. Having fixed the lifted gauge transformations, one can directly verify that the coboundary relations between the cocycle $(h^\circ,g^\circ,\Lambda^\circ,A^\circ,B^\circ)$ and $(\tilde h^\circ,\tilde g^\circ,\tilde\Lambda^\circ,\tilde A^\circ,\tilde B^\circ)$ hold.
            
            It remains to check that the two constructed gauge transformations are linked by a higher gauge transformation~\eqref{eq:adjustedHigherGaugeTransformations}. We put 
            \begin{equation}
                m^\circ_i\ \coloneqq\ \tilde a^\circ_i(a^\circ_i)^{-1}
                \eforall
                (i)\ \in\ Y^{[1]}
            \end{equation}
            which ensures that $a^\circ$ and $\tilde a^\circ$ are related appropriately. The corresponding relations between $b^\circ$ and $\tilde b^\circ$ as well as $\lambda^\circ$ and $\tilde\lambda^\circ$ are then straightforwardly checked.
        \end{proof}
        
        \subsection{The instanton--anti-instanton pair as an \texorpdfstring{$\caL\sfSpin(4)$}{LSpin(4)}-bundle}\label{sec:Spin4Bundle}
        
        \paragraph{Motivation.} Let us also give a concrete and explicit example of how a principal $\sfG$-bundle can be equivalently described as a principal $\caL\sfG$-bundle. For physical applications, an interesting candidate is certainly the principal $\sfSU(2)$-bundle over $S^4$ underlying the elementary instanton solution on $S^4$. Recall that this bundle is the principal $\sfSU(2)$-bundle given by the quaternionic Hopf fibration $S^3\hookrightarrow S^7\rightarrow S^4$. This bundle has a non-vanishing second Chern class whose image in de~Rham cohomology integrates to the unit instanton charge. 
        
        Because $\sfSpin(4)\cong \sfSU(2)\times \sfSU(2)$, it is very natural to extend this bundle to that of an instanton--anti-instanton pair, which is the $\sfSpin(4)\cong \sfSU(2)\times \sfSU(2)$-bundle 
        \begin{equation}\label{eq:Spin4-bundle}
            \sfSU(2)\times\sfSU(2)\ \hookrightarrow\ P\ \coloneqq\ \sfSpin(5)\ \rightarrow\ \sfSpin(5)/\sfSpin(4)\ \cong\ \sfSO(5)/\sfSO(4)\ \cong\ S^4~.
        \end{equation}
        This gives the unique spin structure on $S^4$, cf.~\cite{Dabrowski:1986en}.
        
        The connections of both the elementary instanton and the instanton--anti-instanton pair can be obtained as the canonical connection on the corresponding reductive homogeneous spaces; see \cref{app:YMCoset} for a brief review of this construction. The resulting connection on the bundle~\eqref{eq:Spin4-bundle} turns out to consist of the sum of an $\sfSU(2)$-instanton and an $\sfSU(2)$-anti-instanton.
        
        This extension is also natural from a physical perspective. As explained in~\cite{Saemann:2017rjm}, self-dual gauge configurations are part of non-Abelian self-dual strings that naturally arise when M2-branes end on M5-branes. Moreover, the quaternionic Hopf fibration underlying the elementary instanton also features prominently in the so-called hypothesis H, cf.~e.g.~\cite{Fiorenza:2019usl}, which states that the charge quantisation in M-theory happens in a particular cohomology theory.

        \paragraph{$S^4$ as the coset $\sfSpin(5)/\sfSpin(4)$.}
        The explicit description of the fibration~\eqref{eq:Spin4-bundle} is found e.g.~in~\cite[Diagram 24.4]{Porteous:1995eh}, and we briefly review this description in the following. Recall that 
        \begin{equation}\label{eq:def_Sp2}
            \sfSpin(5)\ \cong\ \sfSp(2)\ =\ \sfU(2,\IH)\ \coloneqq\ \left\{g\in\sfMat(2,\IH)\,|\,g^\dagger g=gg^\dagger=\unit_2\right\},
        \end{equation}
        and thus, the elements of $\sfSpin(5)$ can be identified with matrices
        \begin{equation}\label{eq:Spin5-elements}
            g\ =\ \begin{pmatrix}
                a & b
                \\
                c & d
            \end{pmatrix},
            \quad
            a,b,c,d\in\IH~,
            \quad
            \begin{array}{ll}
                |a|^2+|b|^2\ =\ 1~,&a\bar c+b\bar d\ =\ 0~,
                \\
                |c|^2+|d|^2\ =\ 1~,&c\bar a+d\bar b\ =\ 0~,
                \\
                |a|^2+|c|^2\ =\ 1~,&\bar a b+\bar c d\ =\ 0~,
                \\
                |b|^2+|d|^2\ =\ 1~,&\bar b a+\bar d c\ =\ 0~.
            \end{array}
        \end{equation}
        Note that the above eight relations contain six independent conditions, so that $\sfSpin(5)$ is ten-dimensional. We further identify $\sfSU(2)$ with the unit quaternions $\sfSp(1)\cong\sfU(1,\IH)$, and consider the embedding
        \begin{equation}
            \begin{aligned}
                \sfSU(2)\times\sfSU(2)\ &\hookrightarrow\ \sfSpin(5)~,
                \\
                \left(\frac{e}{|e|},\frac{f}{|f|}\right)\ &\mapsto\ 
                \begin{pmatrix}
                    \frac{e}{|e|} & 0 
                    \\
                    0 & \frac{f}{|f|}
                \end{pmatrix}.
            \end{aligned}
        \end{equation}
        We then have a canonical projection $\sfSpin(5)\rightarrow\sfSpin(5)/\sfSpin(4)$, where we identify
        \begin{equation}\label{eq:Spin4Action}
            \begin{pmatrix}
                a & b
                \\
                c & d
            \end{pmatrix}\ \sim\ 
            \begin{pmatrix}
                a & b
                \\
                c & d
            \end{pmatrix}
            \begin{pmatrix}
                \frac{e}{|e|} & 0 
                \\
                0 & \frac{f}{|f|}
            \end{pmatrix}
            \eforall
            e,f\ \in\ \IH~.
        \end{equation}
        Given an element $g\in\sfSpin(5)$ of the form~\eqref{eq:Spin5-elements}, we note that the product
        \begin{equation}
            \begin{pmatrix}
                a & b
                \\
                c & d
            \end{pmatrix}
            \begin{pmatrix}
                1 & 0
                \\
                0 & -1
            \end{pmatrix}
            \begin{pmatrix}
                a & b
                \\
                c & d
            \end{pmatrix}^\dagger
            \ = \
            \begin{pmatrix}
                2|a|^2-1 & 2a\bar c
                \\
                2 c\bar a & 2|a|^2-1
            \end{pmatrix}
        \end{equation}
        is invariant under the right-action~\eqref{eq:Spin4Action} of $\sfSpin(4)$ on $g$. Moreover, we can identify 
        \begin{equation}
            x^1+\rmi x^2+\rmj x^3+\rmk x^4\ \coloneqq\ 2 c\bar a~\eand
            x^5\ \coloneqq\ 2|a|^2-1~,
        \end{equation}
        where $\rmi$, $\rmj$, and $\rmk$ are the standard quaternionic units and $x^1,\ldots,x^5\in\IR$. Because of the constraints in~\eqref{eq:Spin5-elements}, we obtain $\sum_{i=1}^5(x^i)^2=1$, that is, $S^4\hookrightarrow\IR^5$. This yields a projection 
        \begin{equation}
            \pi\,:\,\sfSpin(5)\ \rightarrow\ S^4~,
        \end{equation}
        which we shall use below. 
        
        Similarly, we have a projection $\sfSpin(5)\rightarrow S^7\cong\sfSpin(5)/\sfSU(2)$ that is given by 
        \begin{equation}
            \begin{pmatrix}
                a & b
                \\
                c & d
            \end{pmatrix}
            \ \mapsto\ (a,c)~,
        \end{equation}
        where we regard $S^7$ as the unit vectors in $\IH^2\cong \IR^8$ and the quotient group $\sfSU(2)$ is identified with the second factor in $\sfSpin(4)\cong \sfSU(2)\times \sfSU(2)$.
        
        \paragraph{Instantons on $S^4$.}
        We write elements of $\frspin(5)$ as quaternionic matrices
        \begin{equation}
            X\ =\ 
            \begin{pmatrix}
                x & z
                \\
                -\bar z & y
            \end{pmatrix}
            \eforall
            x,y,z\ \in\ \IH
            \ewith
            \Re(x)\ =\ \Re(y)\ =\ 0~.
        \end{equation}
        Here, $\frspin(4)\cong\frsu(2)\oplus\frsu(2)$ is identified with the two imaginary quaternions $x$ and $y$, which yields the decomposition 
        \begin{subequations}
            \begin{equation}\label{eq:split_spin5}
                \frspin(5)\ \cong\ \frspin(4)\oplus\frm~,
                \quad
                \begin{pmatrix}
                    x & z
                    \\
                    -\bar z & y
                \end{pmatrix}
                \ =\
                \begin{pmatrix}
                    x & 0
                    \\
                    0 & y
                \end{pmatrix}
                +
                \begin{pmatrix}
                    0 & z
                    \\
                    -\bar z & 0
                \end{pmatrix}
            \end{equation}
            with the evident relations
            \begin{equation}\label{eq:algebraic structure}
                [\frspin(4),\frspin(4)]\ \subseteq\ \frspin(4)~,
                \quad
                [\frspin(4),\frm]\ \subseteq\ \frm~,
                \eand
                [\frm,\frm]\ \subseteq\ \frspin(4)~.
            \end{equation}
        \end{subequations}
        Our discussion in \cref{app:YMCoset} then yields the canonical connection on the bundle $\sfSpin(5)\rightarrow S^4$. As we shall see below, this connection is the direct sum of a fundamental instanton and a fundamental anti-instanton.
        
        \paragraph{Clutching construction.}
        Following our discussion in \cref{app:YMCoset}, we can describe the principal $\sfSpin(4)$-bundle $\sfSpin(5)\rightarrow S^4$ subordinate to the cover given by the bundle itself with the transition function 
        \begin{equation}\label{eq:transition_function_1}
            \begin{aligned}
                g\,:\,(\sfSpin(5))^{[2]}\ &\rightarrow\ \sfSU(2)\times\sfSU(2)~,
                \\
                (g_1,g_2)\ &\mapsto\ g_1^{-1}g_2~,
            \end{aligned}
        \end{equation}
        where we have again used the notation~\eqref{eq:fibreProducts}.
        
        In order to make contact with the usual clutching construction of principal bundles over $S^4$, we embed $S^4$ into $\IR^5$, set $U_\pm\coloneqq S^4\setminus\{x^5=\pm1\}$, and derive a cover $U_+\sqcup U_-\to S^4$ from the stereographic projections
        \begin{subequations}\label{eq:stereographicProjections}
            \begin{equation}
                \begin{aligned}
                    \pi_\pm\,:\,U_\pm\ &\rightarrow\ \pi_\pm(U_\pm)\ \cong\ \IH~,
                    \\
                    (x^1,\ldots,x^5)\ &\mapsto\ q_\pm\ \coloneqq\ \frac{1}{1\mp x^5}(x^1\pm\rmi x^2\pm\rmj x^3\pm\rmk x^4)~,
                \end{aligned}
            \end{equation}
            where, as before, $\rmi$, $\rmj$, and $\rmk$ are the standard quaternionic units. Consequently, $q_+q_-=1$ on $\pi_+(U_+\cap U_-)\cong\pi_+(U_+\cap U_-)\cong\IH\setminus\{0\}$.
        \end{subequations}
        We then consider the commutative diagram
        \begin{equation}
            \begin{tikzcd}
                U_+\sqcup U_-\arrow[rr,"\phi"]\arrow[rd] & &\sfSpin(5)\arrow[ld] 
                \\
                & S^4
            \end{tikzcd}
        \end{equation}
        where $\phi$ reads in local coordinates as
        \begin{equation}\label{eq:comp1}
            \begin{gathered}
                (\pi_\pm^{-1})^*\phi\,:\,
                \pi_\pm(U_\pm)\ \rightarrow\ \sfSpin(5)~,
                \\
                q_+\ \mapsto\ \frac{1}{\sqrt{1+|q_+|^2}}
                \begin{pmatrix}
                    \bar q_+ & 1 
                    \\
                    1 & -q_+
                \end{pmatrix}
                \eand
                q_-\ \mapsto\ \frac{1}{\sqrt{1+|q_-|^2}}
                \begin{pmatrix}
                    1 & q_-
                    \\
                    \bar q_- & -1
                \end{pmatrix}.
            \end{gathered}
        \end{equation}
        The pullback $g_{+-}\coloneqq\phi^*g$ of the transition function~\eqref{eq:transition_function_1} to $(U_+\sqcup U_-)^{[2]}$ then reads in local coordinates as 
        \begin{subequations}\label{eq:instanton_bundle}
            \begin{equation}
                \begin{aligned}
                    (\pi_\pm^{-1})^*g_{+-}|_{\pi_\pm(U_+\cap U_-)}\,:\,\pi_\pm(U_+\cap U_-)\ &\rightarrow\ \sfSU(2)\times\sfSU(2)~,
                    \\
                    q_\pm\ &\mapsto\ \left(\frac{q_+}{|q_+|},\frac{\bar q_+}{|q_+|}\right)\ =\ \left(\frac{\bar q_-}{|q_-|},\frac{q_-}{|q_-|}\right).
                \end{aligned}
            \end{equation}
            This principal bundle can be endowed with a connection given by the gauge potentials $A_\pm$ on $U_\pm$ with
            \begin{equation}
                A_-\ =\ g_{+-}^{-1}A_+g_{+-}+g_{+-}^{-1}\rmd g_{+-}
                \eon
                U_+\cap U_-~.
            \end{equation}
            This connection describes a pair of an elementary instanton with an elementary anti-instanton, which, in quaternionic notation, is given by
            \begin{equation}
                (\pi_\pm^{-1})^*A_\pm\ \coloneqq\ \frac12\left(\frac{q_\pm\rmd\bar q_\pm-\rmd q_\pm\bar q_\pm}{1+|q_\pm|^2},\frac{\bar q_\pm\rmd q_\pm-\rmd\bar q_\pm q_\pm}{1+|q_\pm|^2}\right)
                \eon
                \pi_\pm(U_\pm)~,
            \end{equation}
            cf.~\cite{Atiyah:1979iu}. Consequently, the curvature is
            \begin{equation}
                (\pi_\pm^{-1})^*F_\pm\ =\ \left(\frac{\rmd q_\pm\wedge\rmd\bar q_\pm}{(1+|q_\pm|^2)^2},\frac{\rmd\bar q_\pm\wedge\rmd q_\pm}{(1+|q_\pm|^2)^2}\right)
                \eon
                \pi_\pm(U_\pm)
            \end{equation}
        \end{subequations}
        with the first component describing an instanton with self-dual field strength and the second component describing an anti-instanton with anti-self-dual field strength. The total first Pontryagin class (as is evident from its image in de~Rham cohomology) vanishes, and this will become important later.
        
        \paragraph{Choice of cover for a lift.} In order to lift the cocycle data~\eqref{eq:instanton_bundle} of the above defined $\sfSpin(4)$-bundle $P$ to that of a principal $\caL\sfSpin(4)$-bundle, we have to choose a different cover.\footnote{For a recent application of based loop groups as gauge groups in Yang--Mills theory, see also~\cite{Popov:2015wsa}.}
        The cover $U_+\sqcup U_-$ introduced above is not suitable for a lift of $P$ to a principal $\caL\sfSpin(4)$-bundle, because the transition function $g_{+-}$ in~\eqref{eq:instanton_bundle} does not factor through $P_0\sfSpin(4)$. This is easy to see: $U_+\sqcup U_-\cong S^3\times \IR$, and the equivalence classes of transition functions are thus given by $\pi_3(\sfSpin(3)\times \sfSpin(3))\cong \IZ\times \IZ$, but the path space $P_0\sfSpin(4)$ is contractible, and hence its homotopy groups and in particular $\pi_3$ vanishes. Therefore, the topological information has to be encoded in the $L_0\sfSpin(4)$-valued function. Since we only have a 2-patch cover and since $U_+\cap U_-\cong S^3\times\IR$, the homotopy classes of this map are $\pi_3(L_0\sfSpin(4))\cong\pi_4(\sfSpin(4))\cong\pi_4(S^3\times S^3)\cong\IZ_2\oplus\IZ_2$, which, evidently is not large enough.\footnote{The identification $\pi_{n}(\sfG)\cong\pi_{n-1}(L_0\sfG)$ is as follows. Let $X$ be a topological space and $p_0\in X$ fixed. The \uline{reduced suspension} of $(X,p_0)$ is the quotient $\Sigma X\coloneqq(X\times[0,1])/\!\!\sim$ with the equivalence relation given by $(p,0)\sim(p,1)\sim(p_0,t)$ for all $p\in X$ and for all $t\in[0,1]$. Then, for $Y$ another topological space, there is the natural identification $\scC_0(\Sigma X,Y)\cong\scC_0(X,L_0Y)$ of continuous maps that preserve the base point. In the context of spheres, we have a homeomorphism between $\Sigma S^{n-1}$ and $S^n$. Indeed, we mark poles $x_0\in S^{n-1}$ and $y_0\in S^n$ and identify $x_0$ as well as all points $S^{n-1}\times\{0,1\}$ with $y_0$. We then identify the remaining points $(0,1)$ with $\IR$ and map $S^{n-1}\setminus\{x_0\}$ to $\IR^{n-1}$ by stereographic projection at $x_0$. The Cartesian product $\IR^{n-1}\times\IR$ is then identified with $S^n\setminus\{y_0\}$ by inverse stereographic projection at $y_0$.\label{fn:reduced_suspension}}
        
        We can, however, construct a suitable 3-patch replacement cover of the 2-patch cover $U_+\sqcup U_-$ rather directly by hand. First of all, we note that $\pi_3(\sfSpin(4))\cong\pi_2(L_0\sfSpin(4))$ so that we would like our triple overlaps to be homeomorphic to $S^2\times\IR^2$. We can reduce $\pi_+(U_+\cap U_-)\cong\pi_-(U_+\cap U_-) \cong\IH\setminus\{0\}$ to such a space by removing a straight line through the origin from $\IH$ since, topologically, $\IH\setminus\IR\cong S^2\times\IR^2$. In particular, let
        \begin{equation}
            \begin{aligned}
                L_+\ &\coloneqq\ \{q\in\IH\,|\,\Re(q)>0\text{ and }\Im(q)=0\}~,
                \\
                L_-\ &\coloneqq\ \{q\in\IH\,|\,\Re(q)<0\text{ and }\Im(q)=0\}~,
            \end{aligned}
        \end{equation}
        and set $L\coloneqq L_-\cup\{0\}\cup L_+$; evidently $L\cong\IR$. Next, we define the 3-patch cover $U_1\sqcup U_2\sqcup U_3\rightarrow S^4$ by\footnote{Note that this is not an open cover. We can refine this cover to an open cover; however, to keep the formulas simple, we shall work with this 3-patch cover.}
        \begin{equation}\label{eq:3patchCover}
            \begin{gathered}
                (U_1,\pi_1)
                \ewith
                U_1\ \coloneqq\ \pi_+^{-1}(\underbrace{\pi_+(U_+)\setminus L_+}_{\cong\,\IH\setminus L_+})
                \eand
                \pi_1\ \coloneqq\ \pi_+|_{U_1}~,
                \\
                (U_2,\pi_2)
                \ewith
                U_2\ \coloneqq\ \pi_+^{-1}(\underbrace{\pi_+(U_+)\setminus L_-}_{\cong\,\IH\setminus L_-})
                \eand
                \pi_2\ \coloneqq\ \pi_+|_{U_2}~,
                \\
                (U_3,\pi_3)
                \ewith
                U_3\ \coloneqq\ U_-
                \eand
                \pi_3\ \coloneqq\ \pi_-~.
            \end{gathered}
        \end{equation}
        Note that $U_1\cup U_2=U_+$ and $U_1\cap U_2\cap U_3\cong S^2\times\IR^2$. With respect to this cover, the bundle $\sfSpin(5)\rightarrow S^4$ is described by transition functions
        \begin{subequations}\label{eq:cocycle_Y_tilde}
            \begin{equation}
                g_{12}\ \coloneqq\ -\unit~,
                \quad
                g_{13}\ \coloneqq\ -g_{+-}~,
                \eand
                g_{23}\ \coloneqq\ g_{+-}~,
            \end{equation}
            where we inserted a constant change of frame between patches $U_1$ and $U_2$. This does not affect the gauge potential, and we have
            \begin{equation}
                \begin{aligned}
                    (\pi_{1,2}^{-1})^*A_{1,2}\ &\coloneqq\ \frac12\left(\frac{q_{1,2}\rmd\bar q_{1,2}-\rmd q_{1,2}\bar q_{1,2}}{1+|q_{1,2}|^2},\frac{\bar q_{1,2}\rmd q_{1,2}-\rmd\bar q_{1,2} q_{1,2}}{1+|q_{1,2}|^2}\right),
                    \\
                    (\pi_3^{-1})^*A_3\ &\coloneqq\ \frac12\left(\frac{q_3\rmd\bar q_3-\rmd q_3\bar q_3}{1+|q_3|^2},\frac{\bar q_3\rmd q_3-\rmd\bar q_3 q_3}{1+|q_3|^2}\right),
                \end{aligned}
            \end{equation}
        \end{subequations}  
        with $q_{1,2,3}\in\pi_{1,2,3}(U_{1,2,3})$, where $q_{1,2}$ are restrictions of $q_+$ and $q_3=q_-$.
        
        \paragraph{Lifted cocycle data.} 
        The cocycle~\eqref{eq:cocycle_Y_tilde} can now be straightforwardly lifted, following the prescription in \cref{sec:adj_cocycles}. Relative to the cover~\eqref{eq:3patchCover}, we can define the following lift of the cocycles:
        \begin{subequations}
            \begin{equation}
                g^\circ_{12}\ \coloneqq\ \big(t\mapsto(\rme^{\rmi\pi \wp(t)},\rme^{-\rmi\pi \wp(t)})\big)
                \eand
                g^\circ_{i3}\ \coloneqq\ \big(t\mapsto(u_i(t),\bar u_i(t))\big)
            \end{equation}
            for all $i=1,2$ with
            \begin{equation}\label{eq:cocycles_LSpin_bundle_gui}
                u_i(t)\ \coloneqq\ \left(q_i \mapsto \frac{1-\wp(t)+(-1)^i\wp(t)q_i}{|1-\wp(t)+(-1)^i\wp(t)q_i|}\right)
            \end{equation}
            and for all $t\in[0,1]$, which induces
            \begin{equation}
                h^\circ_{123}\ \coloneqq\ g_{13}^\circ(g_{23}^\circ)^{-1}(g_{12}^\circ)^{-1}~.
            \end{equation}
            Furthermore, the lift of the connection reads as
            \begin{equation}
                \begin{gathered}
                    A^\circ_1\ \coloneqq\ \wp\cdot A_1~,
                    \quad
                    A^\circ_2\ \coloneqq\ \wp\cdot A_2~,
                    \quad
                    A^\circ_3\ \coloneqq\ \wp\cdot A_3~,
                    \\
                    B^\circ_1\ \coloneqq\ \tfrac12(\wp-\wp^2)\cdot[A_1,A_1]~,
                    \quad
                    B^\circ_2\ \coloneqq\ \tfrac12(\wp-\wp^2)\cdot[A_2,A_2]~,
                    \\
                    B^\circ_3\ \coloneqq\ \tfrac12(\wp-\wp^2)\cdot[A_3,A_3]~,
                \end{gathered}
            \end{equation}
            and
            \begin{equation}
                \begin{aligned}
                    \Lambda_{12}^\circ\ &\coloneqq\ \bigg(t\mapsto\wp(t)\cdot(\rme^{-\rmi\pi \wp(t)},\rme^{\rmi\pi \wp(t)}) \left[A_+,(\rme^{\rmi\pi \wp(t)},\rme^{-\rmi\pi \wp(t)})\right]\bigg)\,,
                    \\
                    \Lambda_{i3}^\circ\ &\coloneqq\ \bigg(t\mapsto\frac{1}{1+|q_+|^2}\bigg(\wp(t)|q_+|^2\bar u_i(t)u_i(1)\rmd\big(\bar u_i(1)u_i(t)\big)+|q_+|^2(1-\wp(t))\bar u_i(t)\rmd u_i(t)
                    \\
                    &\kern4.5cm+\bar u_i(t)\rmd u_i(t)-\wp(t)\bar u_i(1)\rmd u_i(1),u_i(t)\leftrightarrow\bar u_i(t)\bigg)\bigg)
                \end{aligned}
            \end{equation}
        \end{subequations}
        for all $i=1,2$ with $u_i$ as given in~\eqref{eq:cocycles_LSpin_bundle_gui} and for all $t\in[0,1]$. Here, $u_i(t)\leftrightarrow\bar u_i(t)$ denotes the same expression but with $u_i(t)$ and $\bar u_i(t)$ interchanged.
        
        This is the complete cocycle data of a non-trivial and non-Abelian gerbe, albeit one which is equivalent to an ordinary principal bundle. The original cocycles are recovered by endpoint evaluation of the lifted cocycles as detailed in~\eqref{eq:recover_ordinary_cocycles}.
        
        \paragraph{Alternative lift.}
        The above lift has the disadvantage of being computationally rather involved. With a view towards our later discussion, we therefore also introduce a simplifying lift. To this end, we replace the 3-patch cover $U_1\sqcup U_2\sqcup U_3$ by the 2-patch cover $\bar U_1\sqcup \bar U_2\rightarrow S^4$ with 
        \begin{equation}\label{eq:2patchCover}
            \begin{gathered}
                (\bar U_1,\bar\pi_1)
                \ewith
                \bar U_1\ \coloneqq\ \pi_+^{-1}(\pi_+(U_+)\setminus(L_+\cup L_-))
                \eand
                \bar\pi_1\ \coloneqq\ \pi_+|_{\bar U_1}
                \\
                (\bar U_2,\bar\pi_2)
                \ewith
                \bar U_2\ \coloneqq\ U_-
                \eand
                \bar\pi_2\ \coloneqq\ \pi_-~.
            \end{gathered}
        \end{equation}
        Note that $\bar U_1\cap\bar U_2\cong S^2\times\IR^2$. With respect to this cover, the bundle $\sfSpin(5)\rightarrow S^4$  is described by the transition function
        \begin{subequations}\label{eq:cocycle_Y_tilde2}
            \begin{equation}
                g_{12}\ \coloneqq\ g_{+-}\big|_{(\bar U_1\sqcup\bar U_2)^{[2]}}~.
            \end{equation}
            The local connection 1-form reads as
            \begin{equation}
                (\pi_{1,2}^{-1})^*A_{1,2}\ \coloneqq\ \left(\frac{\Im{(q_{1,2}\rmd\bar q_{1,2})}}{1+|q_{1,2}|^2},\frac{\Im{(\bar q_{1,2}\rmd q_{1,2})}}{1+|q_{1,2}|^2}\right),
            \end{equation}
        \end{subequations}  
        with $q_{1,2}\in\bar\pi_{1,2}(\bar U_{1,2})$, where $q_1$ is a restriction of $q_+$ and $q_2=q_-$.
        
        The cocycle~\eqref{eq:cocycle_Y_tilde2} can then be straightforwardly lifted, following the prescription in \cref{sec:adj_cocycles}. Relative to the cover~\eqref{eq:3patchCover}, we can define
        \begin{subequations}\label{eq:cocycles_LSpin_bundle}
            \begin{equation}
                g^\circ_{12}\ \coloneqq\ (u_{q},u_{\bar q})
                \eand
                g^\circ_{21}\ \coloneqq\ (v_{q},v_{\bar q})~,
            \end{equation}
            where $ q=q_+ $ and for all $ t \in [0,1] $
            \begin{equation}
                \begin{aligned}
                    u_q(t)\ &\coloneqq\ \cos(\theta_q\wp(t))+\frac{\Im(q)}{|\Im(q)|}\sin(\theta_q \wp(t))~,
                    \\
                    v_q(t)\ &\coloneqq\ \cos(\theta'_q\wp(t))+\frac{\Im(q)}{|\Im(q)|}\sin(\theta'_q \wp(t))
                \end{aligned}
            \end{equation}
            with
            \begin{equation}\label{eq:cocycles_LSpin_bundle_theta}
                \theta_q\ \coloneqq\ \frac{\pi}{2}-\arctan\left(\frac{\Re(q)}{|\Im(q)|}\right) \eand \theta'_q\ \coloneqq\ 2\pi - \theta_q ~. 
            \end{equation}
            This induces
            \begin{equation}
                \begin{aligned}
                    h^\circ_{121}\ &\coloneqq\ (g_{12}^\circ g_{21}^\circ)^{-1}
                    \\
                    &\kern2.5pt=\ \left(t\mapsto \left(\cos(2\pi \wp(t))+\frac{\Im(\bar q)}{|\Im(q)|}\sin(2\pi \wp(t)),\cos(2\pi \wp(t))+\frac{\Im(q)}{|\Im(q)|}\sin(2\pi \wp(t))\right)\right),
                    \\
                    h^\circ_{212}\ &\coloneqq\ (g_{21}^\circ g_{12}^\circ)^{-1}\ =\ h^\circ_{121}~,
                \end{aligned}
            \end{equation}
            where the last equality follows because $g^\circ_{12}$ and $g^\circ_{21}$ commute. We note that the above two expressions only depend on $\frac{\Im(q)}{|\Im(q)|}$, which describes the embeddings $S^2\hookrightarrow\IR^3\hookrightarrow\IR^4$, as expected from the abstract discussion involving reduced suspension mentioned previously.
            
            The lift of the connection reads as
            \begin{equation}
                \begin{gathered}\label{eq:cocycles_LSpin_bundle_AB2}
                    A^\circ_1\ \coloneqq\ \wp\cdot A_1~,
                    \quad
                    A^\circ_2\ \coloneqq\ \wp\cdot A_2~,
                    \\
                    B^\circ_1\ \coloneqq\ \tfrac12(\wp-\wp^2)\cdot[A_1,A_1]~,
                    \quad
                    B^\circ_2\ \coloneqq\ \tfrac12(\wp-\wp^2)\cdot[A_2,A_2]~,
                \end{gathered}
            \end{equation}
            and, for all $ t \in [0,1] $, 
            \begin{equation}
                \begin{aligned}
                    \Lambda_{12}^\circ\ &\coloneqq\ \bigg(t\mapsto\frac{1}{1+|q|^2}\bigg(|q|^2\wp(t)\bar u_q(t)u_q(1)\rmd\big(\bar u_q(1)u_q(t)\big)+|q|^2(1-\wp(t))\bar u_q(t)\rmd u_q(t)
                    \\
                    &\kern4.5cm+\bar u_q(t)\rmd u_q(t)-\wp(t)\bar u_q(1)\rmd u_q(1),u_q(t)\leftrightarrow\bar u_q(t)\bigg)\bigg)~.
                \end{aligned}
            \end{equation}
            Explicitly,
            \begin{equation}
                \begin{aligned}
                    \Lambda_{12}^\circ(t)\ &=\ \frac{1}{1+|q|^2}\left(Q_q(t) \frac{\Im\big(\Im(q)\rmd \Im(\bar q)\big)}{2|\Im(q)|^2},Q_{\bar q}(t)\frac{\Im\big(\Im(\bar q)\rmd \Im(q)\big)}{2|\Im(q)|^2}\right)
                    \\
                    &=\ \big(Q_q(t),Q_{\bar q}(t)\big)\frac{\Im\big(\Im(q)\rmd \Im(q)\big)}{2|\Im(q)|^2(1+|q|^2)}~,
                \end{aligned}                    
            \end{equation}   
            where
            \begin{equation}
                Q_q(t)\ \coloneqq\ \big(1+|q|^2+\wp(t)(q^2-|q|^2)\big)\bar u_q^2(t)-\left(1+|q|^2-\wp(t)\left(1-\frac{\bar q^2}{|q|^2}\right)\right).
            \end{equation}  
            Similarly, we find
            \begin{equation}
                \begin{aligned}
                    \Lambda_{21}^\circ\ &\coloneqq\ \bigg(t\mapsto\frac{1}{1+|q|^2}\bigg(\wp(t)\bar{v}_{q}(t)v_q(1)\rmd\big(\bar{v}_{q}(1)v_q(t)\big)+(1-\wp(t))\bar{v}_{q}(t)\rmd v_q(t)
                    \\
                    &\kern4.5cm+|q|^2(\bar{v}_{q}(t)\rmd v_q(t)-\wp(t)\bar{v}_{q}(1)\rmd v_q(1)),~v_q(t)\leftrightarrow\bar{v}_q(t)\bigg)\bigg)
                    \\
                    &\kern2.5pt=\ \bigg(t\mapsto\bigg(Q'_q(t),Q'_{\bar q}(t)\bigg)\frac{\Im\big(\Im(q)\rmd\Im(q)\big)}{2|\Im(q)|^2(1+|q|^2)}\bigg)\,,          
                \end{aligned}
            \end{equation}  
            where
            \begin{equation}
                Q'_q(t)\ \coloneqq\ \left(1+|q|^2-\wp(t)\left(1-\frac{\bar q^2}{|q|^2}\right)\right)\bar{v}_{q}^2(t)-\big(1+|q|^2+\wp(t)(q^2-|q|^2)\big)\,.
            \end{equation}  
        \end{subequations}
        The form of the lift of the gauge potentials $A$ and $B$ given in~\eqref{eq:cocycles_LSpin_bundle_AB2} was convenient from the local, infinitesimal perspective, and it leads to the general lifting formulas~\eqref{eq:lifted_cocycles_to_LSpin}. 
        
        \begin{remark}\label{rem:better_cover}
            We note that using a suitable path space as a cover may lead to much more natural description of the lift of the spin structure on $S^4$. This, however, may seem too esoteric to physicists, and we therefore prefer to work in this slightly more complicated setting.
        \end{remark}
        
        \section{Examples: String bundles with connections}\label{sec:stringBundlesWithConnection}
        
        We now turn to truly higher principal bundles with connection that are not mere reformulations of ordinary principal bundles with connections. We will focus on principal 2-bundles with structure 2-group a 2-group model of the string group which encode string structures, higher generalisations of spin structures; see \cref{app:group_structures} for a concise review. We will then show how to endow these bundles with adjusted connections. As an explicit example, we shall lift the $\sfSpin(4)$-structure on $S^4$ constructed in \cref{sec:Spin4Bundle} to a string structure, and endow the resulting principal 2-bundle with an adjusted connection.
        
        Lifting a generically non-flat spin bundle to a string bundle, we obtain a generically non-fake-flat principal 2-bundle, and hence we have to use adjusted connections. We note that at the local and infinitesimal level, such connections had first been identified in the context of supergravity~\cite{Bergshoeff:1981um,Chapline:1982ww} and before the invention of string structures. They were then put into the context of higher gauge theory in~\cite{Sati:2008eg,Waldorf:2009uf,Sati:2009ic,Fiorenza:2010mh} either at the local level, or in an abstract integrated form, or using an alternative perspective on string structures. The latter arises from the observation that isomorphism classes of string structures with connections on a principal bundle $P$ are in bijection with isomorphism classes of trivialisations of the Chern--Simons 2-gerbe with connection associated to $P$~\cite{Waldorf:2009uf}, and this set is a torsor for the Deligne cohomology group $H^3_{\rm D}(X,\IZ)$. For very recent work in this direction, see also~\cite{Tellez-Dominguez:2023wwr}, where our adjusted connections also emerged in this context.
        
        \subsection{Strict Lie 2-group model}\label{sec:Lie2GroupModel}
        
        As mentioned in \cref{app:group_structures}, the string group is defined up to $A_\infty$-equivalence, and various models exist. Here, we focus on the strict Lie 2-group model constructed in~\cite{Baez:2005sn}, see also~\cite{Carey:1989ck,Murray:2001xq} as well as~\cite[Section 4]{Pressley:1988qk} and~\cite[Section 4]{Mickelsson:1989hp} for discussions of the involved central extension. Recall that a string 2-group model for a compact, simply-connected, simple Lie group\footnote{We make this assumption for $\sfG$ for the remainder of this section.} $\sfG$ in the sense of~\cite{Nikolaus:2011zg} is a Lie 2-group $\caG=(\sfG_1\multirightarrow{2}\sfG_0)$ with a 3-connected cover $\pi:\caG\rightarrow\sfG$ such that the group of isomorphism classes of objects is isomorphic to $\sfG$ and the group of automorphisms of $\unit\in \sfG_0$ is isomorphic to $\sfU(1)$.
        
        \paragraph{String Lie 2-group as 2-group extension.}
        Consider the based path and loop groups defined in~\eqref{eq:basedPathLoopGroups}. Given the close relationship between string structures and spin structures on loop spaces and the fact that the spin group is a central extension of another group, it is perhaps not surprising that a 2-group model $\sfString(\sfG)$ for a Lie group $\sfG$ can be built as the central extension of 2-groups
        \begin{equation}\label{eq:centralExtension2Groups}
            \unit\ \longrightarrow\ \sfB\sfU(1)\ \longrightarrow\ \sfString(\sfG)\ \longrightarrow\ \caL\sfG\ \longrightarrow\ \unit~,
        \end{equation}
        where $\sfB\sfU(1)$ is the crossed module $(\sfU(1)\longrightarrow*,\id)$ and the 2-group $\caL\sfG$ is defined in~\eqref{eq:definitionLoopLieCrossed}. Let us now explain the construction of $\sfString(\sfG)$ in detail, reviewing and slightly expanding the discussion in~\cite{Baez:2005sn}, which, at the time of writing, contained some sign errors.
        
        \paragraph{Kac--Moody extension as principal circle bundle.}
        Our starting point is the \uline{Kac--Moody central extension}\footnote{See \cref{app:centralExtensions} for a brief review of central extensions.}
        \begin{equation}\label{eq:Kac-Moody-group-extension}
            \unit\ \longrightarrow\ \sfU(1)\ \overset{\iota}{\longrightarrow}\ \widehat{L_0\sfG}\ \overset{\pi}{\longrightarrow}\ L_0\sfG\ \longrightarrow\ \unit
        \end{equation}
        of $L_0\sfG$. To construct a group product on $\widehat{L_0\sfG}$, we follow~\cite{Murray:1987ua}, see also~\cite{Mickelsson:1987:173-183,Murray:2001eu}. In particular, it is convenient to regard the trivialisation of the principal $\sfU(1)$-bundle $\widehat{L_0\sfG}$ over the path space\footnote{All our definitions of based, parametrised path and loop spaces are the same as in \cref{sec:higherPrincipalBundles}.} $P_0L_0\sfG$. Here, we identify $L_0G\cong P_0L_0\sfG/L_0L_0\sfG$, generalising~\eqref{eq:identification}, and $\widehat{L_0\sfG}\cong \big(P_0L_0\sfG\times\sfU(1)\big)/\sfN$, with $N$ defined in the following proposition: 
        
        \begin{proposition}
            The subset $\sfN\subseteq L_0L_0\sfG\times\sfU(1)$ with
            \begin{equation}\label{eq:subgroupN}
                \sfN\ \coloneqq\ \left\{(f,z)\in L_0L_0\sfG\times\sfU(1)\,\middle|\,z=\sfhol^{-1}(f)=\exp\left(\int_{D_f}\omega\right)\right\},
            \end{equation}
            where $D_f$ is a disc in $L_0\sfG$ with boundary $f$, is a normal subgroup of $P_0L_0\sfG\times\sfU(1)$.
        \end{proposition}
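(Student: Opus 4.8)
The plan is to exhibit $\sfN$ as the graph of a group homomorphism defined on $L_0L_0\sfG$, and then to reduce normality to a single equivariance property of the holonomy map. Write $\sfhol^{-1}\colon L_0L_0\sfG\rightarrow\sfU(1)$, $f\mapsto\exp\!\big(\int_{D_f}\omega\big)$, where $D_f$ is any disc in $L_0\sfG$ bounding the based loop $f$. First I would check well-definedness: since $\pi_1(L_0\sfG)\cong\pi_2(\sfG)=0$, the group $L_0\sfG$ is simply connected, so such discs exist, and any two choices $D_f,D_f'$ differ by a $2$-cycle (a $2$-sphere) over which $\int\omega\in2\pi\IZ$ by the integrality of the Kac--Moody level, i.e.\ of $[\omega]$. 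With this, $\sfN=\{(f,\sfhol^{-1}(f))\,|\,f\in L_0L_0\sfG\}$ is exactly the graph of the restriction to $L_0L_0\sfG\subseteq P_0L_0\sfG$ of the map $\sfhol^{-1}$, so it is automatically a subgroup of $P_0L_0\sfG\times\sfU(1)$ once $\sfhol^{-1}$ (equivalently $\sfhol$) is shown to be a homomorphism.

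For the homomorphism property, $\sfhol(f_1f_2)=\sfhol(f_1)\sfhol(f_2)$, I would build a disc for the pointwise product $f_1f_2$ out of $D_{f_1}$ and $D_{f_2}$ via the multiplication $\mu\colon L_0\sfG\times L_0\sfG\rightarrow L_0\sfG$; the discrepancy is the integral of $\mu^*\omega-\pr_1^*\omega-\pr_2^*\omega$ over a suitable chain, and this $2$-form is exact because its class vanishes in $H^2(L_0\sfG\times L_0\sfG;\IR)$ by the Künneth formula together with left invariance of $\omega$, so Stokes reduces the discrepancy to boundary terms that vanish because $f_1,f_2$ are \emph{based} loops. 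This is precisely the input provided by the construction of the Kac--Moody extension following~\cite{Murray:1987ua,Mickelsson:1987:173-183,Murray:2001eu} and by the cocycle comparison at the beginning of~\cref{app:proofs}, which I would invoke at this point rather than redo.

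For normality, recall that the endpoint evaluation $\boundary\colon P_0L_0\sfG\rightarrow L_0\sfG$ is a group homomorphism with kernel $L_0L_0\sfG$, so $L_0L_0\sfG$ is normal in $P_0L_0\sfG$; concretely, $f(0)=f(1)=\unit$ forces $(pfp^{-1})(0)=(pfp^{-1})(1)=\unit$ for every $p\in P_0L_0\sfG$. Since $\sfU(1)$ is central in $P_0L_0\sfG\times\sfU(1)$, conjugating $(f,z)\in\sfN$ by $(p,w)$ produces $(pfp^{-1},z)$, and hence $\sfN$ is normal precisely if $\sfhol(pfp^{-1})=\sfhol(f)$ for all such $p,f$ — this conjugation invariance is, in fact, equivalent to the statement that the direct-product group law on $P_0L_0\sfG\times\sfU(1)$ descends to $(P_0L_0\sfG\times\sfU(1))/\sfN$. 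To prove it I would use the homotopy of based loops $H(s,t)=p(st)\,f(t)\,p(st)^{-1}$, which runs from $f$ to $pfp^{-1}$ and is based for every $s$ because $f(0)=f(1)=\unit$; gluing the cylinder $H$ onto $D_f$ yields a disc for $pfp^{-1}$, whence $\sfhol(pfp^{-1})=\sfhol(f)\exp\!\big(\int_{[0,1]^2}H^*\omega\big)$. Expanding $H^*\omega$ by left invariance in terms of the Maurer--Cartan forms of $p$ and of the loop $f$, the terms built only from the right-logarithmic derivative of $p$ cancel by antisymmetry of $\omega$, and the remaining contribution, after an integration by parts, is governed by a total-derivative term along $f$ which, using $f(0)=f(1)=\unit$ and integrality, yields $\int_{[0,1]^2}H^*\omega\in2\pi\IZ$.

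I expect this last step to be the main obstacle: it requires tracking carefully the failure of $\omega$ to be bi-invariant — the ``extra term'' responsible for the two apparently inequivalent group cocycles in the literature — and showing that the residual boundary integral nevertheless lands in $2\pi\IZ$. Equivalently, it is the well-definedness of the Murray product on $(P_0L_0\sfG\times\sfU(1))/\sfN$, and I would either carry out the Stokes computation above explicitly or cite the corresponding verification in~\cref{app:proofs} and in~\cite{Murray:1987ua,Mickelsson:1987:173-183,Murray:2001eu}. Everything else — the graph/subgroup observation and the reduction of normality to conjugation invariance of $\sfhol$ — is purely formal.
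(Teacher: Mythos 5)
Your argument is built on the wrong group structure, and this breaks both halves of the proof. The product on $P_0L_0\sfG\times\sfU(1)$ relevant to this proposition is not the direct product but the twisted one, $(f_1,z_1)(f_2,z_2)=(f_1f_2,z_1z_2\,c(f_1,f_2))$ with the Murray cocycle $c$ of~\eqref{eq:cocyclePLGU}/\eqref{eq:productOnP0L0G}: it is this centrally extended group whose quotient by $\sfN$ is $\widehat{L_0\sfG}$. Your subgroup step rests on the claim that $\sfhol(f_1f_2)=\sfhol(f_1)\sfhol(f_2)$ for the \emph{pointwise} product on $L_0L_0\sfG$, but this is false: the defect is exactly $c(f_1,f_2)$ restricted to loops, which is generically nontrivial — insert, say, $f_1=\exp(a(s)\alpha(r)X)$ and $f_2=\exp(b(s)\beta(r)X)$ with all profile functions vanishing at both endpoints into~\eqref{eq:cocyclePLGUBaez} and you get a nonzero phase. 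Your Künneth/exactness argument only shows that the discrepancy is a boundary integral; that boundary integral \emph{is} $\log c(f_1,f_2)$, and basedness of $f_1,f_2$ does not make it vanish. Consequently $\sfN$ is not the graph of a homomorphism and is not even a subgroup of the direct product; what actually makes it a subgroup of the twisted product is that $c$ restricted to $L_0L_0\sfG$ is the group coboundary of $\sfhol$, i.e.\ $c(f_1,f_2)=\sfhol(f_1)\sfhol(f_2)\sfhol^{-1}(f_1f_2)$, which the paper obtains by comparing a disc for the triangle $\ell(f_1,f_2)$ with $D_{f_1}$, $D_{f_2}$, $D_{f_1f_2}$.

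The normality step fails for the same reason. In the twisted product, conjugation produces cocycle factors (cf.~\eqref{eq:conjugation}), so what has to be shown is
\begin{equation}
  \sfhol^{-1}(f_2)\,c(f_1,f_2)\,c(f_1f_2,f_1^{-1})\,c^{-1}(f_1,f_1^{-1})\ =\ \sfhol^{-1}(f_1f_2f_1^{-1})
\end{equation}
for $f_1\in P_0L_0\sfG$, $f_2\in L_0L_0\sfG$, not the conjugation invariance $\sfhol(pfp^{-1})=\sfhol(f)$ that you reduce to. The latter is generally false: $\omega$ is only left-invariant, not $\Ad$-invariant, $(\Ad_g)^*\omega=\omega+\rmd\xi_g$, so the cylinder integral $\int H^*\omega$ you hope lies in $2\pi\IZ$ instead equals precisely the combination of cocycle factors above (this is the role of~\eqref{eq:holonomy_shift} together with Stokes' theorem in the paper's proof); your expectation that the $\xi$-type terms cancel by antisymmetry and basedness is the same error as in the subgroup step. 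What survives of your proposal is peripheral: the well-definedness of $\sfhol$ via simple-connectedness of $L_0\sfG$ and integrality of $[\omega]$, and the observation that $L_0L_0\sfG=\ker\boundary$ is normal in $P_0L_0\sfG$. The essential content — that the restriction of the extension cocycle to loops is trivialised by the holonomy, and that conjugation in the extension shifts the holonomy by exactly the induced cocycle factors — is missing.
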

        
        \begin{proof}
            The subgroup property follows because of
            \begin{equation}
                \left(f_1,\exp\left(\int_{D_{f_1}}\omega\right)\right)\left(f_2,\exp\left(\int_{D_{f_2}}\omega\right)\right)\ =\ \left(f_1f_2,\exp\left(\int_{D_{f_1f_2}}\omega\right)\right)
            \end{equation}
            due to 
            \begin{equation}
                c(f_1,f_2)\ =\ \exp\left(-\int_{D_{f_1}}\omega-\int_{D_{f_2}}\omega+\int_{D_{f_1f_2}}\omega\right),
            \end{equation}
            which, in turn, is a direct consequence of the definition~\eqref{eq:cocyclePLGU} specialised to loops $f_{1,2}\in L_0L_0\sfG$. 
            
            Moreover, the normality of $\sfN$ follows because of
            \begin{equation}
                \begin{aligned}
                    (f_1,z_1)(f_2,\sfhol^{-1}(f_2))(f_1,z_1)^{-1}\ &=\ (f_1f_2 f_1^{-1},\sfhol^{-1}(f_2)\,c(f_1,f_2)\,c(f_1f_2,f_1^{-1})\,c^{-1}(f_1,f_1^{-1}))
                    \\
                    &=\ (f_1f_2 f_1^{-1},\sfhol^{-1}(f_1f_2 f_1^{-1}))
                \end{aligned}
            \end{equation}
            for all $f_1\in P_0L_0\sfG$, for all $f_2\in L_0L_0\sfG$, and for all $z_1\in\sfU(1)$. Here, the last equality follows from combining the holonomy $\sfhol^{-1}(f_2)$ with the holonomies in the cocycles, comparing the boundaries and invoking Stokes' theorem, and using~\eqref{eq:holonomy_shift} once.
        \end{proof}
        
        Altogether, we obtain the following commutative diagram
        \begin{equation}\label{eq:KacMoodyExtensionLoopGroup}
            \begin{tikzcd}
                P_0L_0\sfG\times\sfU(1)\arrow[r,"\hat \boundary"] & \boundary^*\widehat{L_0\sfG}\arrow[r]\arrow[d] & \big(P_0L_0\sfG\times\sfU(1)\big)/\sfN\arrow[d]\arrow[r,"\cong"] & \widehat{L_0\sfG}\arrow[d,"\pi"]
                \\
                & P_0L_0\sfG\arrow[r]\arrow[rr,bend right=20,"\boundary"] & P_0L_0\sfG/L_0L_0\sfG\arrow[r,"\cong"] & L_0\sfG
            \end{tikzcd}
        \end{equation}
        where the pullback bundle $\boundary^*\widehat{L_0\sfG}$ along the endpoint evaluation map $\boundary$ is given by the fibre product
        \begin{equation}
            \boundary^*\widehat{L_0\sfG}\ \coloneqq \ P_0L_0\sfG~\times_{L_0\sfG}\widehat{L_0\sfG}~,
        \end{equation}
        and the isomorphism $\hat \boundary$ reads as
        \begin{equation}
            \hat \boundary (f,z)\ \coloneqq\ (f,\hat f(1)z)
        \end{equation}
        for all $(f,z)\in P_0L_0\sfG\times\sfU(1)$, where $\hat f$ is the horizontal lift\footnote{with respect to the connection constructed in \cref{app:proofs}} of $f\in P_0L_0\sfG$ to $\widehat{L_0\sfG}$ with $\hat f(0)=\unit\in \widehat{L_0\sfG}$.
        
        \paragraph{Multiplication on $\widehat{L_0\sfG}$.}
        In order to construct the multiplication on $\widehat{L_0\sfG}$, we require a suitable group cocycle $c$, cf.~\cref{app:centralExtensions}. Here, $c$ is a map
        \begin{equation}\label{eq:group_cocycle}
            c\,:\,P_0L_0\sfG\times P_0L_0\sfG\ \rightarrow\ \sfU(1)
            \ewith 
            c(f_1,f_2)c(f_1f_2,f_3)\ =\ c(f_1,f_2f_3)c(f_2,f_3)
        \end{equation}
        for all $f_{1,2,3}\in P_0L_0\sfG$, which results in the associative, unital product 
        \begin{equation}\label{eq:productOnP0L0G}
            (f_1,z_1)(f_2,z_2)\ \coloneqq\ \big(f_1f_2,z_1z_2c(f_1,f_2)\big)
        \end{equation}
        for all $(f_{1,2},z_{1,2})\in P_0L_0\sfG\times\sfU(1)$. We would like the group cocycle $c$ to arise as the integrated form of the Kac--Moody 2-cocycle on the Lie algebra $L_0\frg$ with $\frg$ the Lie algebra of $\sfG$. 
        \begin{equation}\label{eq:Lie_algebra_cocycle}
            \begin{aligned}
                \omega\,:\,L_0\frg\times L_0\frg\ &\rightarrow\ \fru(1)~,
                \\
                (\beta_1,\beta_2)\ &\mapsto\ \frac{\rmi}{2\pi}\int_0^1 \rmd r\,\innerLarge{\beta_1(r)}{\parder[\beta_2(r)]{r}}.
            \end{aligned}
        \end{equation}
        Here, $\inner{-}{-}$ is the inner product on $\frg$ which is normalised in a standard fashion~\cite{Pressley:1988qk}.\footnote{Explicitly, $\inner{h_\alpha}{h_\alpha}=2$ when $h_\alpha$ is the co-root corresponding to the highest root. In particular, we have $\inner{U}{V}=-\tr(UV)$ for $\fru(n)$ with anti-Hermitian generators and $\inner{U}{V}=-\tfrac12\tr(UV)$ for $\frso(2n)$ with antisymmetric generators.} Note that the cocycle~\eqref{eq:Lie_algebra_cocycle} also gives rise to the standard expression for the left-invariant 2-form curvature $\omega\in\Omega^2(L_0\sfG,\fru(1))$ of the bundle $\widehat{L_0\sfG}\rightarrow L_0\sfG$~\cite{Pressley:1988qk},
        \begin{equation}
            \omega_g(g\beta_1,g\beta_2)\ \coloneqq\ \omega(\beta_1,\beta_2)
        \end{equation}
        for all $g\in L_0\sfG$ and for all $g\beta_{1,2}\in T_gL_0\sfG$, or 
        \begin{equation}\label{eq:2FormCurvatureKacMoody}
            \omega_g \ =\ \frac{\rmi}{4\pi}\int_0^1\rmd r\,\innerLarge{\theta_{g(r)}}{\parder[\theta_{g(r)}]{r}},
        \end{equation}
        where $\theta$ is the left-invariant Maurer--Cartan form on $L_0\sfG$. This is a closed 2-form. The integrated cocycle is then obtained from the holonomy of $\omega$. Explicitly, given paths $f_{1,2}\in P_0L_0\sfG$, we construct the loop given by the triangle
        \begin{equation}\label{eq:triangle}
            \ell(f_1,f_2)\ \coloneqq\ \overline{f_1f_2}\circ f_1(1)f_2\circ f_1\ \in\ L_0L_0\sfG~,
        \end{equation}
        where $\overline{f}$ denotes the path $f\in P_0L_0\sfG$ with reversed orientation, that is, $\overline{f}(t,r)\coloneqq f(1-t,r)$. Then, we have the following result.
        
        \begin{proposition} (\cite{Murray:1987ua}) 
            Let $f_{1,2}\in P_0L_0\sfG$ and $D_{\ell(f_1,f_2)}$ be an arbitrary disc in $L_0\sfG$ with boundary $\ell(f_1,f_2)$ given by~\eqref{eq:triangle}. Then,
            \begin{equation}\label{eq:cocyclePLGU}
                \begin{aligned}
                    c(f_1,f_2)\ &\coloneqq\ \sfhol(\ell(f_1,f_2))
                    \\
                    &\coloneqq\ \exp\left(-\int_{D_{\ell(f_1,f_2)}}\omega\right)
                    \\
                    &\,=\ \exp\left(-\frac{\rmi}{2\pi}\int_0^1\rmd r\int_0^1\rmd s\int_0^s \rmd t\,\innerLarge{f_1^{-1}\parder[f_1]{s}}{f_2\left\{\parder{r}\left(f_2^{-1}\parder[f_2]{t}\right)\right\}f_2^{-1}}\right)
                \end{aligned}
            \end{equation}
            for all $f_{1,2}\in P_0L_0\sfG$ which we parametrised as $f_{1,2}=f_{1,2}(s,r)$ with $s,t$ the path parameters and $r$ the loop parameter. 
        \end{proposition}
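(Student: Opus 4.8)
The plan is to reduce the disc holonomy $\exp\bigl(-\int_D\omega\bigr)$ to an ordinary integral over an explicit filling simplex. Note first that the first two equalities in~\eqref{eq:cocyclePLGU} are definitions, so the substantive content is the explicit triple-integral formula. I would begin by recording two facts about $\omega$ from~\eqref{eq:Lie_algebra_cocycle}: it is left-invariant and antisymmetric---antisymmetry follows by integrating by parts in the loop parameter $r$ and using the symmetry of $\inner{-}{-}$---so it is the closed, left-invariant $2$-form on $L_0\sfG$ of~\eqref{eq:2FormCurvatureKacMoody}; and it is normalised (via the convention $\inner{h_\alpha}{h_\alpha}=2$) precisely so that $\exp\bigl(-\int_D\omega\bigr)$ depends only on the boundary loop $\partial D$ and not on the disc $D$ it bounds, which is exactly what the existence of the extension~\eqref{eq:Kac-Moody-group-extension} encodes, cf.~\cite{Pressley:1988qk}. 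It is then enough to evaluate $\exp\bigl(-\int_D\omega\bigr)$ on one conveniently chosen disc.

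First I would exhibit the disc. On the standard $2$-simplex $\Delta\coloneqq\{(s,t)\in[0,1]^2\,|\,0\leq t\leq s\leq1\}$ take $\Phi\colon\Delta\to L_0\sfG$, $\Phi(s,t)\coloneqq f_1(s)f_2(t)$, the pointwise product of loops. Its three boundary edges are $\Phi(s,0)=f_1(s)$, $\Phi(1,t)=f_1(1)f_2(t)$, and $\Phi(s,s)=(f_1f_2)(s)$; with the orientation on $\Delta$ for which $\partial\Delta$ runs first along $f_1$, then along $f_1(1)f_2$, then back along $f_1f_2$, these assemble precisely into the triangle $\ell(f_1,f_2)=\overline{f_1f_2}\circ f_1(1)f_2\circ f_1$ of~\eqref{eq:triangle}. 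Hence $D_{\ell(f_1,f_2)}$ may be taken to be $\Phi(\Delta)$, and $\int_{D_{\ell(f_1,f_2)}}\omega=\int_\Delta\Phi^*\omega$.

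Next I would compute $\Phi^*\omega$. Writing $g\coloneqq\Phi(s,t)$ and differentiating $f_1(s)f_2(t)$ in $s$ and in $t$, then left-translating by $g^{-1}$, one obtains the $L_0\frg$-valued tangent vectors
\begin{equation}
    \beta_s(r)\ =\ f_2(t,r)^{-1}\bigl(f_1^{-1}\partial_s f_1\bigr)(s,r)\,f_2(t,r)
    \eand
    \beta_t(r)\ =\ \bigl(f_2^{-1}\partial_t f_2\bigr)(t,r)~.
\end{equation}
By left-invariance, $\Phi^*\omega=\omega(\beta_s,\beta_t)\,\rmd s\wedge\rmd t$, and by~\eqref{eq:Lie_algebra_cocycle} together with the $\Ad$-invariance of $\inner{-}{-}$ (which moves the conjugating $f_2(t,r)$ appearing in $\beta_s$ onto the second argument),
\begin{equation}
    \omega(\beta_s,\beta_t)\ =\ \frac{\rmi}{2\pi}\int_0^1\rmd r\,\innerLarge{f_1^{-1}\partial_s f_1}{\,f_2\bigl\{\partial_r\bigl(f_2^{-1}\partial_t f_2\bigr)\bigr\}f_2^{-1}}~.
\end{equation}
Integrating over $\Delta$ in the order $\int_0^1\rmd s\int_0^s\rmd t$, substituting into $\exp(-\,\cdot\,)$, and applying Fubini to bring $\int_0^1\rmd r$ to the front then reproduces~\eqref{eq:cocyclePLGU}.

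The hard part will be the sign and normalisation bookkeeping rather than any conceptual step: one must fix the orientation of $\Delta$ so that $\partial\Delta$ is $\ell(f_1,f_2)$ with the orientation implicit in~\eqref{eq:triangle}---this, in concert with the explicit minus sign in $\exp\bigl(-\int_D\omega\bigr)$ and the orientation reversal in $\overline{f_1f_2}$, pins down the overall sign---and one must track the factor of $2$ relating the kinetic expression $\inner{\theta}{\partial_r\theta}$ in~\eqref{eq:2FormCurvatureKacMoody} to the bilinear form $\inner{\beta_1}{\partial_r\beta_2}$ in~\eqref{eq:Lie_algebra_cocycle}. The mild non-smoothness of $\Phi$ at the corners of $\Delta$ is harmless once sitting instants are inserted, as done elsewhere in the paper; everything else is a direct calculation.
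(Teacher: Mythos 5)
Your proposal is correct. The paper itself does not prove this proposition — it simply cites Murray — so there is nothing internal to compare against; but your argument is precisely the standard (indeed Murray's) computation: fill the triangle $\ell(f_1,f_2)$ with the singular disc swept out by $\Phi(s,t)=f_1(s)f_2(t)$ on $\{0\leq t\leq s\leq 1\}$, pull back the left-invariant form, and use $\Ad$-invariance of $\inner{-}{-}$ to move the conjugation by $f_2$ onto the second slot. Your left-translated tangent vectors $\beta_s=f_2^{-1}(f_1^{-1}\partial_sf_1)f_2$ and $\beta_t=f_2^{-1}\partial_tf_2$ are right, the boundary of the simplex with the standard orientation traverses $f_1$, then $f_1(1)f_2$, then $\overline{f_1f_2}$, matching~\eqref{eq:triangle}, and the integration order $\int_0^1\rmd s\int_0^s\rmd t$ together with the minus sign in $\sfhol$ reproduces~\eqref{eq:cocyclePLGU} exactly; the factor-of-two bookkeeping between~\eqref{eq:2FormCurvatureKacMoody} and~\eqref{eq:Lie_algebra_cocycle} works out as you indicate (integration by parts in $r$ with based loops). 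Delegating disc-independence to the integrality of $[\omega/2\pi\rmi]$ (the level-one normalisation underlying the extension~\eqref{eq:Kac-Moody-group-extension}) is legitimate, since the proposition presupposes an arbitrary bounding disc exists; if you wanted full self-containment you would add that such a disc exists because $\pi_1(L_0\sfG)\cong\pi_2(\sfG)=0$ for the compact, simply-connected $\sfG$ assumed in the paper.
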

        
        \noindent
        Note that using the identity
        \begin{equation}\label{eq:helpful_identity}
            \parder{t}\left(\parder[f_2(t,r)]{r}f_2^{-1}(t,r)\right)\ =\ f_2(t,r)\left\{\parder{r}\left(f_2^{-1}(t,r)\parder[f_2(t,r)]{t}\right)\right\}f_2^{-1}(t,r)~,
        \end{equation}
        we can simplify the expression~\eqref{eq:cocyclePLGU} to~\cite{Baez:2005sn}
        \begin{equation}\label{eq:cocyclePLGUBaez}
            c(f_1,f_2)\ =\ \exp\left(-\frac{\rmi}{2\pi}\int_0^1\rmd r\int_0^1\rmd s\,\innerLarge{f_1^{-1}(s,r)\parder[f_1(s,r)]{s}}{\parder[f_2(s,r)]{r}f_2^{-1}(s,r)}\right),
        \end{equation}
        where, again, $s$ and $r$ are the path and loop parameters, respectively. The group cocycle condition~\eqref{eq:group_cocycle} for $c$ is then straightforwardly verified using the form~\eqref{eq:cocyclePLGUBaez}.
        
        \begin{remark}
            We note that the left-invariance of the curvature $\omega$ implies that
            \begin{equation}\label{eq:holonomy_shift}
                \sfhol(\ell(f_1,f_2))\ =\ \sfhol(g\ell(f_1,f_2))
            \end{equation}
            for all $g\in L_0\sfG$. The cocycle condition~\eqref{eq:group_cocycle} then amounts to the integral of the curvature $\omega$ over a tetrahedron vanishing, and the sides of this tetrahedron are given by the triangles
            \begin{equation}
                \begin{aligned}
                    &(\overline{f_1f_2},f_1(1)f_2,f_1)~,~
                    &&(\overline{f_1f_2f_3},f_1(1)f_2f_3,f_1)~,
                    \\
                    &(\overline{f_1f_2f_3},(f_1f_2)(1)f_3,f_1f_2)~,~
                    &&(\overline{f_1(1)f_2f_3},(f_1f_2)(1)f_3,f_1(1)f_2)~,
                \end{aligned}
            \end{equation}
            cf.~\cite{Murray:1987ua}.
        \end{remark}
        
        Altogether, by means of~\eqref{eq:productOnP0L0G} and~\eqref{eq:cocyclePLGUBaez}, we have constructed a group product on $P_0L_0\sfG\times\sfU(1)$ and thus on $\big(P_0L_0\sfG\times\sfU(1)\big)/\sfN$. Furthermore, we obtain a group homomorphism
        \begin{subequations}\label{eq:tMomorphismString2Group}
            \begin{equation}
                \sft\,:\,P_0L_0\sfG\times\sfU(1)\big)/\sfN\ \rightarrow\ P_0\sfG
            \end{equation}
            via the composition of the evident projection $\big(P_0L_0\sfG\times\sfU(1)\big)/\sfN\rightarrow P_0L_0\sfG/L_0L_0\sfG\rightarrow L_0\sfG$ with the embedding $L_0\sfG\hookrightarrow P_0\sfG$. Explicitly, for all $[(f,z)]\in\big(P_0L_0\sfG\times\sfU(1)\big)/\sfN$ we have
            \begin{equation}
                \sft\big([(f,z)]\big)\ =\ \boundary(f)\ \in\ L_0\sfG~,
            \end{equation}
        \end{subequations}
        which is indeed well-defined. 
        
        \paragraph{Different group cocycles.} Let us briefly leave the main thread of our presentation in order to answer a question raised in~\cite{Murray:1987ua}.
        
        The starting point of our discussion around~\eqref{eq:cocyclePLGU} is the group cocycle\footnote{Note that in the following, we fix some sign errors appearing in the original literature.} 
        \begin{equation}
            c(f_1,f_2)\ =\ \exp\left(-\frac{\rmi}{2\pi}\int_0^1\rmd r\int_0^1\rmd s\int_0^s \rmd t\,\innerLarge{f_1^{-1}\parder[f_1]{s}}{f_2\left\{\parder{r}\left(f_2^{-1}\parder[f_2]{t}\right)\right\}f_2^{-1}}\right)
        \end{equation}
        given by Murray~\cite{Murray:1987ua} for all $f_{1,2}\in P_0L_0\sfG$ parametrised as $f_1=f_1(s,r)$ and $f_2=f_2(t,r)$ with $s,t$ the path parameters and $r$ the loop parameter. Furthermore, with the help of the identity~\eqref{eq:helpful_identity}, Murray's group cocycle becomes 
        \begin{equation}\label{eq:MurrayBaez}
            c(f_1,f_2)\ =\ \exp\left(-\frac{\rmi}{2\pi}\int_0^1\rmd r\int_0^1\rmd s\,\innerLarge{f_1^{-1}(s,r)\parder[f_1(s,r)]{s}}{\parder[f_2(s,r)]{r}f_2^{-1}(s,r)}\right),
        \end{equation}
        and this is the form of the group cocycle given in Baez--Stevenson--Crans--Schreiber~\cite{Baez:2005sn}. 
        
        Next, Mickelsson provides another group cocycle given by~\cite{Mickelsson:1987:173-183}
        \begin{equation}
            \begin{aligned}
                \tilde c(f_1,f_2)\ &\coloneqq\ \exp\left(-\frac{\rmi}{4\pi}\int_0^1\rmd r\int_0^1\rmd s\left\{\innerLarge{f_1^{-1}(s,r)\parder[f_1(s,r)]{s}}{\parder[f_2(s,r)]{r}f_2^{-1}(s,r)}\right.\right.
                \\
                &\kern5cm\left.\left.-\innerLarge{f_1^{-1}(s,r)\parder[f_1(s,r)]{r}}{\parder[f_2(s,r)]{s}f_2^{-1}(s,r)}\right\}\right).
            \end{aligned}
        \end{equation}
        The group cocycles $c$ and $\tilde c$ are related by a coboundary transformation
        \begin{equation}\label{eq:coboundaryDifferentGroupCocycles}
            \tilde c(f_1,f_2)\ =\ c(f_1,f_2)(d(f_1f_2))^{-1}d(f_1)d(f_2)~,
        \end{equation}
        cf.~\eqref{eq:coboundaryCondition}, with 
        \begin{equation}
            \begin{aligned}
                d\,:\,P_0L_0\sfG\ &\rightarrow\ \sfU(1)~,
                \\
                f\ &\mapsto\ \exp\left(-\frac{\rmi}{4\pi}\int_{0}^{1}\rmd r\int_0^1\rmd s\,\innerLarge{f^{-1}(s,r)\parder[f(s,r)]{s}}{f^{-1}(s,r)\parder[f(s,r)]{r}}\right).
            \end{aligned}
        \end{equation}
        This answers the question raised in~\cite{Murray:1987ua} as to what the relation between $c$ and $\tilde c$ is.
        
        \paragraph{Lift of the action in $\caL\sfG$.} 
        Returning to the main discussion, we finally have to lift the action of $P_0\sfG$ on $L_0\sfG$ in $\caL\sfG$ to an automorphism action of $P_0\sfG$ on $P_0L_0\sfG\times\sfU(1)\big)/\sfN$. Consider the left-invariant 1-form $\xi_g\in\Omega^1(L_0\sfG,\fru(1))$ given by (cf.~\cite{Baez:2005sn} and~\cite[Section 4]{Pressley:1988qk})
        \begin{equation}\label{eq:Baez1Form}
            (\xi_g)_\ell\ \coloneqq\ \frac{\rmi}{2\pi}\int_0^{1}\rmd r\,\innerLarge{\theta_{\ell(r)}}{g(r)^{-1}\parder[g(r)]{r}}
        \end{equation}
        for all $g\in P_0\sfG$ and for all $\ell\in L_0\sfG$. We have $(\Ad_g)^*\omega=\omega+\rmd \xi_g$, see \cref{app:proofs}, which implies that for loops $h\in L_0L_0\sfG$,
        \begin{equation}\label{eq:simp_loops}
            \exp\left(-\int_0^1\rmd s\,\xi_g\left(h^{-1}(s)\parder[h(s)]{s}\right)\right)\ =\ \sfhol(ghg^{-1})\sfhol^{-1}(h)~.
        \end{equation}
        Here, $h(s)\in L_0\sfG$ for all $s\in[0,1]$. 
        
        \begin{proposition}
            The group action~\cite{Baez:2005sn}
            \begin{equation}\label{eq:actionOfP0G}
                \begin{aligned}
                    g\acton(f,z)\ &\coloneqq\ \left(gfg^{-1},z\exp\left(\int_0^1\rmd s\,\xi_g\left(f^{-1}(s)\parder[f(s)]{s}\right)\right)\right)
                    \\
                    &\,=\ \left(gfg^{-1},z\exp\left(\frac{\rmi}{2\pi}\int_0^1\rmd r\int_0^1\rmd s\,\innerLarge{f^{-1}(s,r)\parder[f(s,r)]{s}}{g^{-1}(r)\parder[g(r)]{r}}\right)\right)
                \end{aligned}
            \end{equation}
            for all $g\in P_0\sfG$ and for all $(f,z)\in P_0L_0\sfG\times\sfU(1)$ is compatible with the group product on $P_0L_0\sfG\times\sfU(1)$. It also closes on the subgroup $\sfN$ defined in~\eqref{eq:subgroupN} and so, it descents to an action of $P_0\sfG$ on $\big(P_0L_0\sfG\times\sfU(1)\big)/\sfN$. The group homomorphism~\eqref{eq:tMomorphismString2Group} and the automorphism action~\eqref{eq:actionOfP0G} satisfy the crossed module conditions~\eqref{eq:crossedModuleConditions}.
        \end{proposition}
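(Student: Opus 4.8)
The plan is to check, in turn: that each $g\in P_0\sfG$ acts by a group automorphism of $P_0L_0\sfG\times\sfU(1)$ (compatibility of $\acton$ with the product); that $\acton$ is a genuine action and preserves the normal subgroup $\sfN$, so that it descends; the first crossed-module relation $\sft(g\acton h)=g\,\sft(h)\,g^{-1}$; and the Peiffer identity. Throughout, the $\sfU(1)$-valued parts of the computations are reduced, via the closed left-invariant $2$-form $\omega$, its holonomy $\sfhol$, and Stokes' theorem, to the vanishing of $\int\omega$ over suitable closed $2$-chains, whereas the ``group parts'' are handled by the $\Ad$- and left-invariance of $\inner{-}{-}$ and of $\xi_g$. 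For the automorphism property, one expands $g\acton\big((f_1,z_1)(f_2,z_2)\big)$ and $(g\acton(f_1,z_1))(g\acton(f_2,z_2))$ using~\eqref{eq:productOnP0L0G}--\eqref{eq:actionOfP0G}; after cancelling the common factor $z_1z_2$ and the matching first component $g(f_1f_2)g^{-1}$ one is left with a scalar identity relating $c(f_1,f_2)$, $c(gf_1g^{-1},gf_2g^{-1})$, and the three exponentials $\exp\big(\int_0^1\rmd s\,\xi_g(f^{-1}\partial_s f)\big)$ for $f\in\{f_1,f_2,f_1f_2\}$. Since conjugation by $g$ commutes with concatenation and orientation reversal, $\ell(gf_1g^{-1},gf_2g^{-1})=g\,\ell(f_1,f_2)\,g^{-1}$, so~\eqref{eq:simp_loops} turns $c(gf_1g^{-1},gf_2g^{-1})$ into $c(f_1,f_2)\exp\big(-\int_0^1\rmd s\,\xi_g(\ell(f_1,f_2)^{-1}\partial_s\ell(f_1,f_2))\big)$, and what remains is the additivity $\int_{\ell(f_1,f_2)}\xi_g=\int_{f_1}\xi_g+\int_{f_2}\xi_g-\int_{f_1f_2}\xi_g$. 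This holds because $\xi_g$ is left-invariant (so its integral over the middle leg $f_1(1)f_2$ equals $\int_{f_2}\xi_g$) and the leg $\overline{f_1f_2}$ carries the reversed orientation.

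That $\acton$ is an action is elementary: $\unit\acton(f,z)=(f,z)$ since $\xi_\unit=0$, and $(g_1g_2)\acton=g_1\acton\circ\,g_2\acton$ follows from $(g_1g_2)^{-1}\partial_r(g_1g_2)=g_2^{-1}(g_1^{-1}\partial_r g_1)g_2+g_2^{-1}\partial_r g_2$ together with $\Ad$-invariance of $\inner{-}{-}$, the inner-slot conjugation by $g_2$ precisely matching the replacement $f\mapsto g_2fg_2^{-1}$ in the first component; in particular each $g\acton$ is invertible. For the descent, take $(f,z)\in\sfN$, so $f\in L_0L_0\sfG$ and $z=\sfhol^{-1}(f)$; a pointwise check at $s,r\in\{0,1\}$ (using $g(0)=\unit$ and that $f$ is based in both parameters) shows $gfg^{-1}\in L_0L_0\sfG$, and by~\eqref{eq:simp_loops} the new $\sfU(1)$-component is $\sfhol^{-1}(f)\,\sfhol(f)\,\sfhol^{-1}(gfg^{-1})=\sfhol^{-1}(gfg^{-1})$, so $g\acton(f,z)\in\sfN$. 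The first crossed-module relation is then immediate from~\eqref{eq:tMomorphismString2Group}: $\sft\big(g\acton[(f,z)]\big)=\boundary(gfg^{-1})=g\,\boundary(f)\,g^{-1}=g\,\sft([(f,z)])\,g^{-1}$, the last two expressions coinciding because both the product on $P_0\sfG$ and the action $\acton$ on $L_0\sfG\subseteq P_0\sfG$ are pointwise in the path parameter.

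The Peiffer identity $\sft([(f_1,z_1)])\acton[(f_2,z_2)]=[(f_1,z_1)][(f_2,z_2)][(f_1,z_1)]^{-1}$ is the delicate one. I would first apply the canonical $P_0\sfG$-equivariant homomorphism $\big(P_0L_0\sfG\times\sfU(1)\big)/\sfN\to L_0\sfG$, $[(f,z)]\mapsto\boundary(f)$, to both sides: since $\caL\sfG$ itself satisfies the Peiffer identity, this shows the two sides agree already in $L_0\sfG$, and the identity reduces to a single $\sfU(1)$-valued equation. Concretely, one finds $\sft([(f_1,z_1)])\acton[(f_2,z_2)]=\big[\big(\boundary(f_1)f_2\boundary(f_1)^{-1},\,z_2\exp J\big)\big]$ with $J\coloneqq\int_0^1\rmd s\,\xi_{\boundary(f_1)}(f_2^{-1}\partial_s f_2)$, while $[(f_1,z_1)][(f_2,z_2)][(f_1,z_1)]^{-1}=\big[\big(f_1f_2f_1^{-1},\,z_2\,c(f_1,f_2)\,c(f_1f_2,f_1^{-1})\,c(f_1,f_1^{-1})^{-1}\big)\big]$; a pointwise check shows $n\coloneqq\boundary(f_1)f_2\boundary(f_1)^{-1}(f_1f_2f_1^{-1})^{-1}$ lies in $L_0L_0\sfG$, so equality in the quotient is precisely $\exp J=\sfhol^{-1}(n)\,c(f_1,f_2)\,c(f_1f_2,f_1^{-1})\,c(f_1,f_1^{-1})^{-1}\,c(n,f_1f_2f_1^{-1})$. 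Here each $c(-,-)$ is the $\sfhol$ of an explicit triangle~\eqref{eq:triangle}, $\sfhol^{-1}(n)=\exp\big(\int_{D_n}\omega\big)$, and $J$ is rewritten using $(\Ad_{\boundary(f_1)})^*\omega=\omega+\rmd\xi_{\boundary(f_1)}$; with the disc-independence of $c$ and $\sfhol$ and Stokes' theorem, the identity becomes the vanishing of $\int\omega$ over a closed $2$-chain assembled from these triangles, the disc $D_n$, and the conjugation ``square'' that carries $J$. I expect the main obstacle to be exactly the correct identification and orientation of this $2$-chain; all the other verifications follow mechanically from $\Ad$- and left-invariance together with the already-established relations~\eqref{eq:simp_loops} and~\eqref{eq:cocyclePLGUBaez}.
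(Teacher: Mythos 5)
Your handling of everything except the Peiffer identity is correct. For compatibility with the product you take a genuinely nicer route than the paper's ``direct calculation'' with \eqref{eq:cocyclePLGUBaez} and \eqref{eq:Baez1Form}: observing $\ell(gf_1g^{-1},gf_2g^{-1})=g\,\ell(f_1,f_2)\,g^{-1}$ and invoking \eqref{eq:simp_loops}, you reduce the whole check to the additivity of $\int\xi_g$ over the three legs of the triangle \eqref{eq:triangle}, which follows from left-invariance of $\xi_g$ and orientation reversal; this is valid and more structural. Your closure-on-$\sfN$ argument via \eqref{eq:simp_loops} is exactly the paper's, the action property and $\sft(g\acton\hat h)=g\,\sft(\hat h)\,g^{-1}$ are fine, and your reduction of the Peiffer identity to a single $\sfU(1)$-valued equation is also sound: your $n=\boundary(f_1)f_2\boundary(f_1)^{-1}f_1f_2^{-1}f_1^{-1}$ is precisely the loop the paper calls $f$ in its appendix, and your displayed scalar identity is equivalent to the paper's claim $\sfhol^{-1}(f)=z$. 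Working directly with equality of classes even lets you bypass the paper's preliminary lemma on representative-independence.

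The genuine gap is that you stop exactly where the substantive work begins. You assert that the scalar identity ``becomes the vanishing of $\int\omega$ over a closed $2$-chain'' built from the cocycle triangles, the disc $D_n$ and a conjugation square carrying $J$, and you explicitly defer ``the correct identification and orientation of this $2$-chain'' as the expected obstacle. That identification \emph{is} the proof: in the paper's computation the triangle bounding the cocycle $c(\boundary(f_1)f_2\boundary(f_1^{-1}),f_1f_2^{-1}f_1^{-1})$ degenerates into two loops joined at the identity, one of whose discs yields $\sfhol^{-1}(f)$; the remaining discs only fit together after repeated use of the left-invariance shift \eqref{eq:holonomy_shift} to match boundaries; and what survives Stokes' theorem is a square whose sides and parametrisation must be written down explicitly, after which \eqref{eq:helpful_identity} performs one integration and gives $\int_\Box\omega=\frac{\rmi}{2\pi}\int_0^1\rmd r\int_0^1\rmd s\,\inner{f_2^{-1}\partial_s f_2}{\boundary(f_1^{-1})\,\partial_r\boundary(f_1)}$, which must be seen to cancel $\exp J$ with the correct sign. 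None of this is done or even sketched at the level where signs and orientations could be checked, and since sign errors in precisely these formulas are what the paper is at pains to correct in the literature, this cancellation cannot be taken on trust. Until that $2$-chain bookkeeping and the explicit square integral are carried out, the Peiffer identity --- the one nontrivial crossed-module condition --- remains unproven in your proposal.
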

        
        \begin{proof}
            Compatibility with the group product follows from a direct calculation using the explicit expressions~\eqref{eq:cocyclePLGUBaez} and~\eqref{eq:Baez1Form} of the group cocycle $c$ and of $\xi_g$, respectively. Furthermore, with~\eqref{eq:simp_loops}, it is also clear that the action closes on $\sfN$. 
            
            We also note that the action trivially satisfies $\sft(g\acton\hat h)=g\sft(\hat h)g^{-1}$ for all $g\in P_0\sfG$ and for all $\hat h\in P_0L_0\sfG\times\sfU(1)$. It only remains to show that $(\sft(\hat h_1)\acton\hat h_2)\hat h_1\hat h_2^{-1}\hat h_1^{-1}\in\sfN$ for all $\hat h_{1,2}\in P_0L_0\sfG\times\sfU(1)$ as this implies the Peiffer identity. The proof can be found in \cref{app:proofs}.
        \end{proof}
        
        \paragraph{String 2-group model.} In conclusion, our above discussion now yields the following definition.
        
        \begin{definition} (\cite{Baez:2005sn})
            The \uline{string 2-group model} $\sfString(\sfG)$ is the crossed module of Lie groups
            \begin{equation}\label{eq:string2GroupDefinition}
                \sfString(\sfG)\ \coloneqq\ \big(\underbrace{\big(P_0L_0\sfG\times\sfU(1)\big)/\sfN}_{\cong\,\widehat{L_0\sfG}}\overset{\sft}{\longrightarrow}P_0\sfG,\acton\big)~.
            \end{equation}
            We shall also write $\sfString(n)$ as a shorthand notation for $\sfString(\sfSpin(n))$.
        \end{definition}
        
        \paragraph{String Lie 2-algebra model.}
        In order to derive the corresponding crossed module of Lie algebras, we identify the Lie algebra of $P_0L_0\sfG\times\sfU(1)$ with $P_0L_0\frg\oplus\fru(1)$. The Lie bracket is obtained in the usual way from the group commutator using~\eqref{eq:cocyclePLGUBaez},
        \begin{equation}\label{eq:P0L0G+R_commutator}
            \begin{aligned}
                [(\gamma_1,q_1),(\gamma_2,q_2)]\ &=\ \bigg([\gamma_1,\gamma_2],-\frac{\rmi}{2\pi}\int_0^1\rmd r\int_0^1\rmd s\,\bigg\{\innerLarge{\parder[\gamma_1(s,r)]{s}}{\parder[\gamma_2(s,r)]{r}}
                \\
                &\kern6cm -\innerLarge{\parder[\gamma_2(s,r)]{s}}{\parder[\gamma_1(s,r)]{r}}\bigg\}\bigg)
                \\
                &=\ \left([\gamma_1,\gamma_2],-\frac{\rmi}{2\pi}\int_0^1\rmd    r\,\innerLarge{\boundary(\gamma_1(s,r))}{\parder[\,\boundary(\gamma_2(s,r))]{r}}\right),
            \end{aligned}
        \end{equation}
        where we have integrated by parts in the second line. Note that $\boundary(\gamma_{1,2}(s,r))=\gamma_{1,2}(1,r)$. The extra minus sign in the above formula looks a bit odd, comparing it with the expression for the Lie algebra cocycle~\eqref{eq:Lie_algebra_cocycle}, but it is expected on general grounds. This follows as the difference between the horizontal lift of a commutator of vector fields and the commutator of horizontal lifts of vector fields is given by the curvature.\footnote{More explicitly, given a $\frg$-valued connection 1-form $\mu$ on a $\sfG$-bundle whose curvature is $\omega$, we have for any vector fields $V_{1,2}$ on the base space, the formula $\omega(V_1^h,V_2^h)=-\mu([V_1^h,V_2^h])$, where ${}^h$ denotes the horizontal lift. Upon identifying the Lie algebra $\frg$ with the space of vertical vector fields it generates, we can write this as $\omega(V_1^h,V_2^h)=[V_1,V_2]^h-[V_1^h,V_2^h]$, where we have used that $[V_1^h,V_2^h]_H=[V_1,V_2]^h$, where ${}_H$ denotes the horizontal part of a vector (field). For the case of a central extension $\hat\sfK$ of the group $\sfK$ by $\sfG=\sfU(1)$, this formula at the identity gives the Lie algebra commutator $[(V_1,0),(V_2,0)]=([V_1,V_2],0)-(0,\omega(V_1^h,V_2^h))=([V_1,V_2],-\omega(V_1,V_2))$, where we have used the connection to split the Lie algebra $\hat\frk=\frk\oplus\fru(1)$ and construct the horizontal lifts. We emphasise this point here, as it had been the source of sign inconsistencies plaguing our calculations.}
        
        Furthermore, because of
        \begin{equation}\label{eq:derhol}
            \dder{t}\bigg|_0\sfhol(\exp(t\gamma))\ =\ 0
        \end{equation}
        for all $\gamma\in L_0L_0\frg$, which we verify in \cref{app:proofs}, the Lie algebra $\frn$ of $\sfN$ is just
        \begin{equation}
            \frn\ =\ \{(\gamma,0)\in L_0L_0\frg\oplus\fru(1)\}\ \cong\ L_0L_0\frg~.
        \end{equation}
        From~\eqref{eq:P0L0G+R_commutator} we immediately see that $ \frn $ is indeed an ideal. Consequently,
        \begin{equation}
            (P_0L_0\frg\oplus\fru(1))/\frn\ \cong\ L_0\frg\oplus\fru(1)~.
        \end{equation}
        In particular, since the second component in the Lie algebra commutator depends only on the endpoint, we recover the Kac--Moody 2-cocycle~\eqref{eq:Lie_algebra_cocycle} in the factor algebra. In addition, the linearisation of the action~\eqref{eq:actionOfP0G} is
        \begin{equation}
            \alpha\,\acton\,(\gamma,q)\ =\ \left([\alpha,\gamma],\frac{\rmi}{2\pi}\int_0^1\rmd r\,\innerLarge{\parder[\alpha]{r}}{\gamma}\right)
        \end{equation}
        for all $\alpha\in P_0\frg$ and for all $(\gamma,q)\in L_0\frg\oplus\fru(1)$. In summary, we thus have obtained the crossed module of Lie algebras~\cite{Baez:2005sn}
        \begin{equation}\label{eq:strict_string_Lie_2_algebra}
            \frstring(\frg)\ \coloneqq\ \big(L_0\frg\oplus\fru(1)\overset{\sft}{\longrightarrow}P_0\frg,\acton\big)
        \end{equation}
        as a strict model of the string Lie 2-algebra. Here, $\sft$ is just the projection onto $L_0 \frg$ followed by the embedding into $P_0\frg$. 
        
        \paragraph{Minimal model.}
        We recall that a minimal model of the string Lie 2-algebra is given by the 2-term $L_\infty$-algebra 
        \begin{equation}
            \frstring^\circ(\frg)\ \coloneqq\ \fru(1)\oplus\frg~,
        \end{equation}
        where $\frstring^\circ(\frg)_{-1}\coloneqq\fru(1)$ and $\frstring^\circ(\frg)_0\coloneqq\frg$ with the non-trivial higher products
        \begin{equation}
            \begin{aligned}
                \mu_2\,:\,\frg\times\frg\ &\rightarrow\ \frg~~,
                \\
                (V_1,V_2)\ &\mapsto\ [V_1,V_2]~,
                \\
                \mu_3\,:\,\frg\times\frg\times\frg\ &\rightarrow\ \fru(1)~,
                \\
                (V_1,V_2,V_3)\ &\mapsto\ \rmi\inner{V_1}{[V_2,V_3]}
            \end{aligned}
        \end{equation}
        for all $V_{1,2,3}\in\frg$. We can now extend the quasi-isomorphism~\eqref{eq:quasi-iso_Lie_algebras} between the Lie algebra $\frg$, trivially regarded as a Lie 2-algebra, and the Lie 2-algebra $\caL\frg$ to a quasi-isomorphism 
        \begin{subequations}
            \begin{equation}
                \frstring^\circ(\frg)\ \xrightarrow{~\phi~}\ \frstring(\frg)\ \xrightarrow{~\psi~}\ \frstring^\circ(\frg)~.
            \end{equation}
            Explicitly, we have the chain maps
            \begin{equation}
                \begin{gathered}
                    \begin{tikzcd}
                        \fru(1)\arrow[d,"0"]\arrow[r,"\phi_1"] & L_0\frg\oplus\fru(1)\arrow[d,"\sft"]\arrow[r,"\psi_1"] & \fru(1)\arrow[d,"0"]
                        \\
                        \frg \arrow[r,"\phi_1\coloneqq\cdot_\wp"] & P_0\frg\arrow[r,"\psi_1\coloneqq\boundary"]  & \frg
                    \end{tikzcd}
                \end{gathered}
            \end{equation}
            where in the top row $\phi_1$ and $\psi_1$ are the evident embedding and projection maps. In this case, both $\phi_2$ and $\psi_2$ are non-trivial,
            \begin{equation}
                \begin{aligned}
                    \phi_2\,:\,\frg\times\frg\ &\rightarrow\ L_0\frg~,
                    \\
                    (V_1,V_2)\ &\mapsto\ (\wp-\wp^2)\cdot[V_1,V_2]
                \end{aligned}
            \end{equation}
            for all $V_{1,2}\in\frg$ and
            \begin{equation}
                \begin{aligned}
                    \psi_2\,:\,P_0\frg\times P_0\frg\ &\rightarrow\ \fru(1)~,
                    \\
                    (V_1,V_2)\ &\mapsto\ -\rmi\int_0^1\rmd r\left(\innerLarge{\parder[V_1]{r}}{V_2}-\innerLarge{V_1}{\parder[V_2]{r}}\right)
                \end{aligned}
            \end{equation}
            for all $V_{1,2}\in P_0\frg$.
        \end{subequations}
        
        \begin{remark}
            We notice that the string 2-group model crucially depends on the correct normalisation of the defining cocycle. Usually, $\sfString(\sfG)$ is therefore defined only for simply-connected compact simple Lie groups $\sfG$. The normalisation of the cocycle translates to the correct Lie 2-group extension of $\caL\sfG$ by $\sfB\sfU(1)$, the Lie 2-group corresponding to the crossed module $(\sfU(1)\rightarrow *,\id)$. These extensions are characterised by $H^3(\sfG,\IZ)$, and we have $H^3(\sfG,\IZ)\cong \IZ$ for simply-connected compact simple Lie groups with preferred element the generator $+1$. This limits us to $\sfSpin(n)$ for $n=3$ and $n\geq 5$ for $\sfString(n)$. As we shall see, however, there are also two equivalent, preferred choices for $n=4$, where $H^3(\sfSpin(4),\IZ)\cong \IZ\times \IZ$, which are the generators $(\pm 1,\mp 1)$ of opposite signs, so that we can speak of $\sfString(4)$.
        \end{remark}
        
        \subsection{Adjusted differential cocycles}\label{sec:adj_cocyclesStringGroup}
        
        Let us now discuss adjustment for the differential cocycles based on $\sfString(\sfG)$ for a Lie group $\sfG$. We note that the local kinematical data for higher gauge theory with gauge Lie 2-algebra $\frstring(\frg)$ was already given in~\cite{Saemann:2017rjm,Saemann:2019dsl}.
        
        \paragraph{Adjustment of $\sfString(\sfG)$.}
        Recall the adjustment $\kappa$ defined in~\eqref{eq:adjustmentPathLoopGroups}. We can now lift this adjustment datum from $\caL\sfG$ to $\sfString(\sfG)$ as follows.
        
        \begin{proposition}
            The map 
            \begin{equation}\label{eq:adjustmentStringGroup}
                \begin{aligned}
                    \hat\kappa\,:\,P_0\sfG\times P_0\frg\ &\rightarrow\ L_0\frg\oplus\fru(1)~,
                    \\
                    (g,V)\ &\mapsto\ \left(\kappa(g,V),\frac{\rmi}{2\pi}\int_0^1\rmd r\,\innerLarge{g^{-1}\parder[g]{r}}{V}\right)
                \end{aligned}
            \end{equation}
            defines an adjustment on $\sfString(\sfG)$. It also induces an adjustment at the level of crossed modules of Lie algebras with 
            \begin{equation}\label{eq:adjustmentStringAlgebra}
                \begin{aligned}
                    \hat\kappa\,:\,P_0\frg\times P_0\frg\ &\rightarrow L_0\frg\oplus\fru(1)~,
                    \\
                    (V_1,V_2)\ &\mapsto\ \left(\kappa(V_1,V_2),\frac{\rmi}{2\pi}\int_0^1\rmd r\,\innerLarge{\parder[V_1]{r}}{V_2}\right)
                \end{aligned}
            \end{equation}
            for all $V_{1,2}\in P_0\frg$ where $\kappa(V_1,V_2)$ is the adjustment datum defined in~\eqref{eq:adjustmentPathLoopAlgebras}. This infinitesimal adjustment is the one derived in~\cite{Saemann:2017rjm,Saemann:2019dsl}. 
        \end{proposition}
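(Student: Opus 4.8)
The plan is to bootstrap from \cref{prop:adjustmentPathLoopGroups}. Since the first slot of $\hat\kappa$ is literally the special adjustment $\kappa$ of $\caL\sfG$, only the $\fru(1)$-slot needs a genuinely new argument. The quotient morphism $\sfString(\sfG)\to\caL\sfG$ underlying the central extension \eqref{eq:centralExtension2Groups} is a strict morphism of crossed modules which on the Lie algebras of $\sfH$ is the projection $L_0\frg\oplus\fru(1)\to L_0\frg$ and which intertwines $\sft$ and both actions; pushing the two defining identities \eqref{eq:alternativeAdjustmentCondition_a} and \eqref{eq:alternativeAdjustmentCondition_b} for $\hat\kappa$ forward along it reproduces exactly the corresponding identities for $\kappa$, which already hold. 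Hence it remains to verify the $\fru(1)$-components of \eqref{eq:alternativeAdjustmentCondition_a} and \eqref{eq:alternativeAdjustmentCondition_b}. Write $\mu(g,V)\coloneqq\tfrac{\rmi}{2\pi}\int_0^1\rmd r\,\inner{g^{-1}\partial_r g}{V}$ for the $\fru(1)$-slot of $\hat\kappa(g,V)$; here $r$ is the common parameter along which $P_0\sfG$ acts pointwise on $L_0\sfG$ and $s$ is the extra path parameter of $P_0L_0\sfG$.

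Condition \eqref{eq:alternativeAdjustmentCondition_b} in the $\fru(1)$-direction is short. Expanding $(g_2g_1)^{-1}\partial_r(g_2g_1)=g_1^{-1}(g_2^{-1}\partial_r g_2)g_1+g_1^{-1}\partial_r g_1$ and using $\Ad$-invariance of $\inner{-}{-}$ gives $\mu(g_2g_1,V)=\mu(g_1,V)+\mu(g_2,g_1Vg_1^{-1})$. On the right-hand side, the linearised action of $P_0\sfG$ on $L_0\frg\oplus\fru(1)$ read off from \eqref{eq:actionOfP0G} is $g\acton(\gamma,q)=\big(g\gamma g^{-1},\,q+\tfrac{\rmi}{2\pi}\int_0^1\rmd r\,\inner{\gamma}{g^{-1}\partial_r g}\big)$, so $g_2\acton\hat\kappa(g_1,V)$ has $\fru(1)$-slot $\mu(g_1,V)+\tfrac{\rmi}{2\pi}\int_0^1\rmd r\,\inner{\kappa(g_1,V)}{g_2^{-1}\partial_r g_2}$, while $\hat\kappa\big(g_2,g_1Vg_1^{-1}-\sft(\hat\kappa(g_1,V))\big)$ has $\fru(1)$-slot $\mu(g_2,V)+\tfrac{\rmi}{2\pi}\int_0^1\rmd r\,\inner{g_2^{-1}\partial_r g_2}{\wp\cdot\boundary(g_1Vg_1^{-1}-V)}$. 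Substituting $\kappa(g_1,V)=g_1Vg_1^{-1}-V-\wp\cdot\boundary(g_1Vg_1^{-1}-V)$, the two $\mu(g_2,V)$-terms cancel and the two $\wp\cdot\boundary$-terms cancel by symmetry of $\inner{-}{-}$, leaving $\mu(g_1,V)+\mu(g_2,g_1Vg_1^{-1})$, which matches the left-hand side. The $\wp\cdot\boundary$-correction built into $\kappa$ is exactly what makes this work.

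Condition \eqref{eq:alternativeAdjustmentCondition_a} in the $\fru(1)$-direction is the main computation. I would represent $\hat h\in\widehat{L_0\sfG}$ by $[(f,z)]$, so $\sft(\hat h)=\boundary(f)$, compute $\exp(\eps V)\acton\hat h^{-1}$ from \eqref{eq:actionOfP0G}, left-translate by $\hat h$ using the product \eqref{eq:productOnP0L0G}, and differentiate at $\eps=0$. The $z$-dependence and the constant $c(f,f^{-1})$ drop out, and the surviving $\fru(1)$-contribution is $\tfrac{\rmd}{\rmd\eps}\big|_0\log c(f,k(\eps))-\tfrac{\rmi}{2\pi}\int_0^1\rmd s\int_0^1\rmd r\,\inner{f^{-1}\partial_s f}{f^{-1}(\partial_r V)f}$, with $k(\eps)\coloneqq\exp(\eps V)f^{-1}\exp(-\eps V)$ and the double integral coming from the $\xi_g$-factor via \eqref{eq:Baez1Form}. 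Differentiating \eqref{eq:cocyclePLGUBaez} and simplifying $\tfrac{\rmd}{\rmd\eps}\big|_0\big[(\partial_r k)k^{-1}\big]=-[V,f^{-1}\partial_r f]+\partial_r V-f^{-1}(\partial_r V)f$, the $f^{-1}(\partial_r V)f$-terms cancel and one is left with $-\tfrac{\rmi}{2\pi}\int_0^1\rmd s\int_0^1\rmd r\,\inner{f^{-1}\partial_s f}{\partial_r V}+\tfrac{\rmi}{2\pi}\int_0^1\rmd s\int_0^1\rmd r\,\inner{f^{-1}\partial_s f}{[V,f^{-1}\partial_r f]}$. Integrating the first term by parts in $r$ and using the Maurer--Cartan identity $\partial_r(f^{-1}\partial_s f)=\partial_s(f^{-1}\partial_r f)+[f^{-1}\partial_s f,f^{-1}\partial_r f]$ produces a commutator term cancelling the second, together with a total $s$-derivative which collapses to $\tfrac{\rmi}{2\pi}\int_0^1\rmd r\,\inner{\boundary(f)^{-1}\partial_r\boundary(f)}{V}=\mu(\sft(\hat h),V)$ since $f(0,\cdot)=\unit$ and $f(1,\cdot)=\boundary(f)$. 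Linearity of $\hat\kappa$ in $V$ is immediate from \eqref{eq:adjustmentStringGroup}, so this finishes the group-level statement.

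For the infinitesimal adjustment I would linearise $\hat\kappa$ in its first argument as in \eqref{eq:linearisedKappa}: the first slot of $\hat\kappa(\exp(\eps V_1),V_2)$ linearises to $P([V_1,V_2])$ by the example around \eqref{eq:adjustmentPathLoopAlgebras}, and since $\exp(\eps V_1)^{-1}\partial_r\exp(\eps V_1)=\eps\,\partial_r V_1+O(\eps^2)$ the $\fru(1)$-slot linearises to $\tfrac{\rmi}{2\pi}\int_0^1\rmd r\,\inner{\partial_r V_1}{V_2}$, which is \eqref{eq:adjustmentStringAlgebra}; the comparison with the $L_\infty$-adjustment of~\cite{Saemann:2017rjm,Saemann:2019dsl} is then term-by-term. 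I expect the only real obstacle to be the bookkeeping in the $\fru(1)$-direction of \eqref{eq:alternativeAdjustmentCondition_a} — keeping the two parameters $s,r$ and the several group-cocycle and $\xi_g$ contributions straight so that the telescoping in $s$ genuinely produces the single endpoint integral $\mu(\sft(\hat h),V)$.
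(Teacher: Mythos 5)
Your proposal is correct and takes essentially the same route as the paper: your $\fru(1)$-slot computation for condition \eqref{eq:alternativeAdjustmentCondition_a} re-derives the formula \eqref{eq:hAhinv} that the paper's proof cites, and your cancellation for condition \eqref{eq:alternativeAdjustmentCondition_b} is exactly the $\fru(1)$-content of the identities \eqref{eq:StringkappaIdentities} and \eqref{eq:StringkappaIdentity_b} invoked there. Your only (minor) streamlining is handling the $L_0\frg$-slot by pushing forward along the strict projection $\sfString(\sfG)\rightarrow\caL\sfG$ and appealing to \cref{prop:adjustmentPathLoopGroups}, which is a legitimate shortcut.
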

        
        \begin{proof}
            This follows from using the results from \cref{app:kappaProperties,app:proofs}. In particular,~\eqref{eq:alternativeAdjustmentCondition_a} directly follows from~\eqref{eq:hAhinv}. Moreover, using~\eqref{eq:StringkappaIdentities} and~\eqref{eq:StringkappaIdentity_b}, we ultimately obtain~\eqref{eq:alternativeAdjustmentCondition_b}. The infinitesimal adjustment is then derived by a short computation.
        \end{proof}
        
        Let us now briefly list the adjusted differential cocycles from \cref{def:adj_cocycles}, specialised to structure 2-group $\sfString(\sfG)$ with adjustment~\eqref{eq:adjustmentStringGroup}, and simplified. Such differential cocycles consists of maps
        \begin{subequations}
            \begin{equation}
                \begin{aligned}
                    \hat h=[(h,h')]\ &\in\ \scC^\infty\Big(Y^{[3]},\big(P_0L_0\sfG\times\sfU(1)\big)/\sfN\Big)~,
                    \\
                    (g,\hat\Lambda=(\Lambda,\Lambda'))\ &\in\ \scC^\infty(Y^{[2]},P_0\sfG)~\oplus~\Omega^1(Y^{[2]},L_0\frg\oplus\fru(1))~,
                    \\
                    (A,\hat B=(B,B'))\ &\in\ \Omega^1(Y^{[1]},P_0\frg)~\oplus~\Omega^2(Y^{[1]},L_0\frg\oplus\fru(1))~,
                \end{aligned}
            \end{equation}
            such that for all appropriate $(i,j,\ldots)\in Y^{[n]}$, with $Y^{[n]}$ given in~\eqref{eq:fibreProducts}, we have
            \begin{equation}
                \begin{gathered}
                    g_{ik}\ =\ \sft(h_{ijk})g_{ij}g_{jk}~,
                    \\
                    (h_{ikl}h_{ijk})^{-1}h_{ijl}(g_{ij}\acton h_{jkl})\ \in\ L_0L_0\sfG~,
                    \\
                    h'_{ikl}h'_{ijk}\ =\ h'_{ijl}h'_{jkl} \exp\left(\frac{\rmi}{2\pi}\int_0^1\rmd s\int_0^1\rmd r\,\innerLarge{g_{ij}^{-1}\parder[g_{ij}]{r}}{h_{jkl}^{-1}\parder[h_{jkl}]{s}}\right)\exp\left(-\int_\Box\omega\right),
                \end{gathered}
            \end{equation}
            where $\Box$ denotes the loop $\overline{h_{ikl}}\circ\big(\boundary(h_{ikl})\overline{h_{ijk}}\big)\circ\big(\boundary(h_{ijl})(g_{ij}\acton h_{jkl})\big)\circ h_{ijl}$. The cocycle conditions for the differential refinement specialise to 
            \begin{equation}
                \begin{aligned}
                    \Lambda_{ik}\ &=\ \Lambda_{jk}+g_{jk}^{-1}\acton\Lambda_{ij}-g_{ik}^{-1}\acton\big(\boundary(h_{ijk})\nabla_i\boundary(h_{ijk}^{-1})\big)\,,
                    \\
                    \Lambda'_{ik}\ &=\ \Lambda'_{jk}+\Lambda'_{ij}-h'_{ijk}\rmd h'^{-1}_{ijk}
                    \\
                    &\kern1cm-\frac{\rmi}{2\pi}\int_0^1\rmd r\,\left\{\innerLarge{\parder[g_{jk}]{r}g_{jk}^{-1}}{\Lambda_{ij}}-\innerLarge{\parder[g_{ik}]{r}g_{ik}^{-1}}{\boundary(h_{ijk})\nabla_i\boundary(h_{ijk}^{-1})}\right.
                    \\
                    &\kern2cm+\left.\innerLarge{\boundary(h_{ijk}^{-1})\parder[\,\boundary(h_{ijk})]{r}}{A_i}+\int_{0}^{1}\rmd s\,\innerLarge{\parder[h_{ijk}]{r}h_{ijk}^{-1}}{\parder[(h_{ijk}\rmd h_{ijk}^{-1})]{s}}\right\},\\
                    A_j\ &=\ g^{-1}_{ij}A_ig_{ij}+g^{-1}_{ij}\rmd g_{ij}- \Lambda_{ij}~,\\
                    B_j\ &=\ B_i+\rmd\Lambda_{ij}+A_j\acton \Lambda_{ij}+\tfrac12[\Lambda_{ij},\Lambda_{ij}]-\kappa(g^{-1}_{ij},F_i-B_i)~,\\
                    B'_j\ &=\ B'_i+\rmd\Lambda'_{ij} +\frac{\rmi}{2\pi}\int_0^1\rmd r\left\{\innerLarge{\parder[A_j]{r}}{\Lambda_{ij}}+\frac12\innerLarge{\parder[\Lambda_{ij}]{r}}{\Lambda_{ij}}+\innerLarge{\parder[g_{ij}]{r}g_{ij}^{-1}}{F_i-B_i}\right\}.
                \end{aligned}
            \end{equation}
        \end{subequations}
        Furthermore, using the linearisation~\eqref{eq:adjustmentStringAlgebra}, the adjusted curvatures~\eqref{eq:adjustedCurvatures} reduce to
        \begin{subequations}\label{eq:curvature_LG_compressed_extended}
            \begin{equation}
                \begin{aligned}
                    F_i\ &=\ \rmd A_i+\tfrac12[A_i,A_i]+\sft(\hat B_i)~,
                    \\
                    H_i\ &=\ \rmd B_i+A_i\acton B_i-\kappa(A_i,F_i)\ =\ (\id-\wp\cdot\boundary)(\rmd F_i)~,
                    \\
                    H'_i\ &=\ \rmd B'_i-\frac{\rmi}{2\pi}\int_0^1\rmd r\,\innerLarge{\parder[A_i]{r}}{F_i-\sft(B_i)}\ =\ \rmd\big(B'_i-\tfrac12\omega(A_i,A_i)\big)-\tfrac{\rmi}{4\pi}{\rm cs}(\boundary(A_i))
                \end{aligned}
            \end{equation}
            for all $(i)\in Y^{[1]}$. Here, in analogy to writing $\hat B=(B,B')$ we did write for the 3-form curvature $\hat H=(H,H')$, and we have made us of the Chern--Simons 3-form
            \begin{equation}\label{eq:chern-simons}
                {\rm cs}(A_i)\ \coloneqq\ \inner{A_i}{\rmd A_i}+\tfrac13\inner{A_i}{[A_i,A_i]}~
            \end{equation}
        \end{subequations}
        for all $(i)\in Y^{[1]}$.  The adjusted Bianchi identities~\eqref{eq:adjustedBianchiIdentities} thus become
        \begin{equation}\label{eq:Bianchi_identities_extended}
            \begin{aligned}
                \nabla_iF_i\ &=\ \sft(H_i+\kappa(A_i,F_i))~,
                \\
                \rmd H_i\ &=\ 0~,
                \\
                \rmd H'_i\ &=\ -\tfrac{\rmi}{4\pi}\inner{\boundary(F_i-\sft(B_i))}{\boundary(F_i-\sft(B_i))}\ =\ -\tfrac{\rmi}{4\pi}\inner{\boundary(F_i)}{\boundary(F_i)}
            \end{aligned}
        \end{equation}
        for all $(i)\in Y^{[1]}$. 
        
        \subsection{The string 2-bundle over \texorpdfstring{$S^4$}{S4}}\label{sec:instantonAntiInstantonStringLift}
        
        Having a lift of the spin group $\sfSpin(n)$ to a 2-group $\sfString(n)$, it is natural to ask, if there is a lift of
        \begin{equation}
            \sfSpin(5)\ \rightarrow\ \sfSpin(5)/\sfSpin(4)~~~\mbox{to}~~~\sfString(5)\ \rightarrow\ \sfString(5)/\sfString(4)~.
        \end{equation}
        As we will show, such a lift exists, and we first learnt about this construction, albeit without connection, from~\cite{Roberts:2022wwl}. It turns out that both quotient spaces can be identified with $S^4$, the latter, with the manifold $S^4$ trivially regarded as the discrete Lie groupoid $(S^4\multirightarrow{2}S^4)$, and we begin by commenting on this point.
        
        \paragraph{Quotients of crossed modules.}
        Recall that the categories of crossed modules of Lie groups and of strict Lie 2-groups are equivalent, cf.~\cite{Brown:1976:296-302,Baez:0307200,Porst:0812.1464}. Explicitly, given a crossed module of Lie groups $\caG=(\sfH\overset{\sft}{\longrightarrow}\sfG,\acton)$, we associated to $\caG$ a strict Lie 2-group, denoted by $\grp{\caG}$ and given by the data
        \begin{equation}\label{eq:crossedModuleGroupoid}
            \begin{gathered}
                \grp{\caG}\ \coloneqq\ (\sfG\ltimes\sfH\multirightarrow{2}\sfG)
                \ewith
                \kern-.8cm
                \begin{tikzcd}[column sep=2.0cm,row sep=large]
                    \phantom{\sft(h)}g & \sft(h^{-1})\,g\arrow[l,bend left,swap,out=-20,in=200]{}{(g,h)}
                \end{tikzcd}\!,
                \\
                (g_1,h_1)\circ(\sft(h^{-1}_1)g_1,h_2)\ =\ (g_1,h_1h_2)~,
                \\
                (g_1,h_1)\otimes (g_2,h_2)\ \coloneqq\ (g_1g_2,(g_1\acton h_2)h_1)
            \end{gathered}
        \end{equation}
        for all $g,g_{1,2}\in\sfG$ and for all $h,h_{1,2}\in\sfH$. Note that our conventions here are those of~\cite{Jurco:2014mva}, which are slightly different from~\cite{Baez:0307200}.
        
        By an \uline{embedding} of a crossed module $\caG_1\coloneqq(\sfH_1\overset{\sft_1}{\longrightarrow}\sfG_1,\acton_1)$ into a crossed module $\caG_2\coloneqq(\sfH_2\overset{\sft_2}{\longrightarrow}\sfG_2,\acton_2)$, we mean a (strict) injective morphism of crossed modules $\sfe:\caG_1\hookrightarrow\caG_2$ which is given by embeddings of groups $\sfe_0:\sfG_1\hookrightarrow\sfG_2$ and $\sfe_1:\sfH_1\hookrightarrow\sfH_2$ such that
        \begin{equation}\label{eq:consistency_2_group_embedding}
            \sfe_0(\sft_1(h_1))\ =\ \sft_2(\sfe_1(h_1))
            \eand 
            \sfe_0(g_1)\acton_2\sfe_1(h_1)\ =\ \sfe_1(g_1\acton_1 h_1)
        \end{equation}
        for all $h_1\in\sfH_1$ and for all $g_1\in\sfG$. Such an embedding then induces a right action of the morphisms of $\grp{\caG_1}$ on the morphisms of $\grp{\caG_2}$ by means of
        \begin{equation}
            \caR_{(g_1,h_1)}(g_2,h_2)\ \coloneqq\ (g_2,h_2)\otimes (\sfe_0(g_1),\sfe_1(h_1))\ =\ \big(g_2 \sfe_0(g_1),(g_2\acton_2\sfe_1(h_1))h_2\big)
        \end{equation}
        for all $(g_1,h_1)\in\sfG_1\ltimes\sfH_1$ and for all $(g_2,h_2)\in\sfG_2\ltimes\sfH_2$. Upon identifying morphisms under this action, we obtain the \uline{quotient Lie groupoid}\footnote{A more sophisticated definition in terms of quotient stacks is certainly possible, but not necessary here.}
        \begin{equation}\label{eq:quotionLieGroupoid}
            \caG_2/\caG_1\ \coloneqq\ \grp{\caG_2}/\grp{\caG_1}\ \coloneqq\ \Big(\big(\sfG_2\ltimes\sfH_2)/(\sfG_1\ltimes\sfH_1)\multirightarrow{2}\sfG_2/\sfG_1\Big)~,
        \end{equation}
        where, again, the equivalence relation is
        \begin{equation}
            g_2\ \sim\ g_2\sfe_0(g_1)
            \eand
            (g_2,h_2)\ \sim\ (g_2 \sfe_0(g_1),(g_2\acton_2\sfe_1(h_1))h_2)
        \end{equation}
        for all $(g_1,h_1)\in\sfG_1\ltimes\sfH_1$ and for all $(g_2,h_2)\in\sfG_2\ltimes\sfH_2$. The composition of morphisms in this groupoid is induced from the vertical composition $\circ$ in~\eqref{eq:crossedModuleGroupoid}.
        
        \begin{example}\label{ex:bu1_quotient}
            A useful and simple example of the above construction for the purposes of our discussion is given by the crossed modules $\sfB\sfU(1)=(\sfU(1)\longrightarrow*,\id)$ and $\sfString(\sfG)$ as defined in~\eqref{eq:string2GroupDefinition}. We can embed $\sfB\sfU(1)$ into $\sfString(\sfG)$ as follows:
            \begin{subequations}\label{eq:first_coset_example}
                \begin{equation}
                    \begin{tikzcd}
                        \sfU(1)\arrow[r,"\sfe_1"]\arrow[d] & \widehat{L_0\sfG}\arrow[d]
                        \\
                        *\arrow[r,"\sfe_0"] & P_0\sfG
                    \end{tikzcd}
                    \quad
                    \begin{array}{c}
                        {}
                        \\[-5pt]
                        \sfe_1\,:\,\sfU(1)\ \ni\ z\ \mapsto\ [(\unit,z)]\ \in\ \big(P_0L_0\sfG\times\sfU(1)\big)/\sfN\ \cong\ \widehat{L_0\sfG}~,
                        \\
                        \kern-4pt\sfe_0\,:\,*\ \ni\ \unit\ \mapsto\ \unit\ \in\ P_0\sfG~.    
                    \end{array}
                \end{equation}
                The resulting quotient is
                \begin{equation}\label{eq:cosetStringBU}
                    \sfString(\sfG)/\sfB\sfU(1)\ \cong\ \caL\sfG
                \end{equation}
            \end{subequations}        
            with $\caL\sfG$ the crossed module defined in~\eqref{eq:definitionLoopLieCrossed}. This is, in fact, the essence of the short exact sequence~\eqref{eq:centralExtension2Groups}. Since $\sfG$ can be trivially identified with the crossed module $(*\hookrightarrow\sfG,\id)$ and because of the latter's identification with $\caL\sfG$ by virtue of~\eqref{eq:loopGroupButterfly}, any Lie group $\sfG$ can thus be viewed as the coset $\sfString(\sfG)/\sfB\sfU(1)$.
        \end{example}
        
        We note that the string 2-group $\sfString(1)$ can be identified with the group $\sfB\sfU(1)$, so that \cref{ex:bu1_quotient} can be regarded as the quotient $\sfString(\sfG)/\sfString(1)$. We shall consider more general quotients of the string group later on.
        
        Also, we note that a special case of \cref{ex:bu1_quotient} is the categorification of the Hopf fibration $S^1\hookrightarrow S^3\rightarrow S^2$ to
        \begin{equation}\label{eq:categorified_Hopf}
            \sfString(1)\ \hookrightarrow\ \sfString(3)\ \rightarrow\ \caL\sfSpin(3) \ \cong\ \sfSpin(3)~.
        \end{equation}
        This fibration underlies the Abelian self-dual string, see e.g.~\cite{Saemann:2017rjm} for more details.
        
        \paragraph{Quotients of string 2-groups.} Consider two string groups $\sfString(\sfH)$ and $\sfString(\sfG)$ of the form~\eqref{eq:string2GroupDefinition} for some Lie groups $\sfH$ and $\sfG$ together with an embedding $\sfH\hookrightarrow\sfG$. Such an embedding induces an embedding of crossed modules $\sfString(\sfH)\hookrightarrow \sfString(\sfG)$, and we have resulting group homomorphisms
        \begin{equation}
            P_0\sfH\ \hookrightarrow\ P_0\sfG~,
            \quad
            P_0L_0\sfH\ \hookrightarrow\ P_0L_0\sfG~,
            \quad
            L_0L_0\sfH\ \hookrightarrow\ L_0L_0\sfG~.
        \end{equation}
        From the definition~\eqref{eq:subgroupN}, it is furthermore clear that we obtain an embedding of the normal subgroup $\sfN_\sfH$ into $\sfN_\sfG$. These embeddings are compatible with the structure maps, and we have
        \begin{equation}
            \begin{aligned}
                \sfString(\sfG)/\sfString(\sfH)\ &= \ \left(\frac{P_0\sfG\ltimes((P_0L_0\sfG\times\sfU(1))/\sfN_\sfG)}{P_0\sfH\ltimes((P_0L_0\sfH\times \sfU(1))/\sfN_\sfH)}\multirightarrow{2}P_0\sfG/P_0\sfH\right)
                \\
                \ &\cong \ \left(\frac{P_0\sfG\ltimes L_0\sfG}{P_0\sfH\ltimes L_0\sfH}\multirightarrow{2}P_0\sfG/P_0\sfH\right)
                \\
                \ &\cong \ \caL\sfG/\caL\sfH
                \\
                \ &\cong \ \sfG/\sfH~.
            \end{aligned}
        \end{equation}
        In the last equivalence, we have used the fact that our notion of quotient of Lie groups is compatible with the equivalences $\caL\sfG\cong\sfG_{\rm cm}$, see~\eqref{eq:crossedModuleLg} and~\eqref{eq:definitionLieGroupLieCrossed}, which immediately follows from the fact that underlying this equivalence are surjective morphisms $\caL\sfG\rightarrow \sfG_{\rm cm}$ and $\caL\sfH\rightarrow\sfH_{\rm cm}$ which map orbits to orbits. The above discussion has the following corollary.
        
        \begin{corollary}
            The quotient $\sfString(5)/\sfString(4)$ is a Lie groupoid, which is equivalent to $S^4$, regarded as the discrete Lie groupoid $(S^4\multirightarrow{2}S^4)$.
        \end{corollary}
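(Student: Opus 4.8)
The plan is to obtain the corollary as the special case $\sfG=\sfSpin(5)$, $\sfH=\sfSpin(4)$ of the quotient formula $\sfString(\sfG)/\sfString(\sfH)\cong\sfG/\sfH$ for string 2-groups established in the displayed computation just above. By that same discussion, the standard inclusion $\sfSpin(4)\hookrightarrow\sfSpin(5)$ underlying the bundle~\eqref{eq:Spin4-bundle} induces an embedding of crossed modules $\sfString(4)\hookrightarrow\sfString(5)$ in the sense of~\eqref{eq:consistency_2_group_embedding} --- it is assembled from the path-space, loop-space and $\sfN$ ingredients of~\eqref{eq:string2GroupDefinition} and is therefore automatically compatible with $\sft$ and $\acton$ --- so the quotient Lie groupoid~\eqref{eq:quotionLieGroupoid} $\sfString(5)/\sfString(4)$ is defined.

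Then I would simply run the chain of equivalences in the order used for the general case. Collapsing the $\sfU(1)$ and $\sfN$ factors identifies $\sfString(5)/\sfString(4)$ with $\caL\sfSpin(5)/\caL\sfSpin(4)$ as Lie groupoids; next, the strict surjective quasi-isomorphisms $\boundary\colon\caL\sfSpin(n)\to\sfSpin(n)_{\rm cm}$ attached to the butterfly~\eqref{eq:loopGroupButterfly}, which intertwine the two embeddings and map orbits to orbits, descend to an equivalence of Lie groupoids $\caL\sfSpin(5)/\caL\sfSpin(4)\simeq\sfSpin(5)_{\rm cm}/\sfSpin(4)_{\rm cm}$. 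Since the crossed modules $\sfSpin(n)_{\rm cm}=(\unit\hookrightarrow\sfSpin(n),\id)$ of~\eqref{eq:definitionLieGroupLieCrossed} have trivial $\sfH$-part, the Lie 2-groups $\grp{\sfSpin(n)_{\rm cm}}$ of~\eqref{eq:crossedModuleGroupoid} are the discrete Lie groupoids $(\sfSpin(n)\multirightarrow{2}\sfSpin(n))$, so by~\eqref{eq:quotionLieGroupoid} the quotient $\sfSpin(5)_{\rm cm}/\sfSpin(4)_{\rm cm}$ is precisely the discrete Lie groupoid on the homogeneous space $\sfSpin(5)/\sfSpin(4)$. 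Finally,~\eqref{eq:Spin4-bundle} identifies this homogeneous space with $S^4$, and assembling the steps gives $\sfString(5)/\sfString(4)\simeq(S^4\multirightarrow{2}S^4)$.

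The one point that demands genuine care --- hence the closest thing to an obstacle, although it has already been dealt with in the general argument --- is the compatibility of the equivalence $\caL\sfG\simeq\sfG_{\rm cm}$ with the quotient construction: one must check that the object map $P_0\sfSpin(5)/P_0\sfSpin(4)\to\sfSpin(5)/\sfSpin(4)$ induced by $\boundary$ is a surjective submersion, and that together with its lift to morphisms it is an essentially surjective, fully faithful morphism of Lie groupoids. This reduces to the identity $P_0\sfSpin(4)\cdot L_0\sfSpin(5)=\boundary^{-1}(\sfSpin(4))$ inside $P_0\sfSpin(5)$ together with the observation that the fibres of $\boundary$ over objects are exactly absorbed by the extra morphisms valued in $L_0\sfSpin(5)/L_0\sfSpin(4)$; these are the finite-dimensional shadows of the orbit-to-orbit property invoked above and are routine once unwound (as is, per the footnote to~\eqref{eq:quotionLieGroupoid}, the mild point that the Fréchet quotients occurring here are well behaved, or else that one passes to quotient stacks). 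Everything else is a verbatim specialisation of the general formula.
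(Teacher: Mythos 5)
Your proposal is correct and follows essentially the same route as the paper: the corollary is obtained as the specialisation $\sfG=\sfSpin(5)$, $\sfH=\sfSpin(4)$ of the general chain $\sfString(\sfG)/\sfString(\sfH)\cong\caL\sfG/\caL\sfH\cong\sfG/\sfH$ established in the preceding discussion, with the compatibility of $\caL\sfG\cong\sfG_{\rm cm}$ with quotients handled exactly as in the paper (surjective morphisms mapping orbits to orbits). Your extra remarks on the submersion property and fully faithful/essentially surjective checks simply spell out that same orbit-to-orbit argument in more detail.
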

        
        \paragraph{Total space description.} 
        Before presenting explicit cocycles, let us briefly sketch the total space description of the string bundle over $S^4$, which is 
        \begin{equation}\label{eq:nonAbelian_coset_bundle}
            \sfString(5)\ \rightarrow\ S^4\ \cong\ \sfString(5)/\sfString(4)~.
        \end{equation}
        Together with the groupoid equivalence
        \begin{equation}
            (S^4\multirightarrow{2}S^4)\ \cong\ \big(L_0S^4\multirightarrow{2}P_0S^4\big)\,,
        \end{equation}
        we then have the following diagram
        \begin{equation}
            \begin{tikzcd}
                \widehat{L_0\sfSpin(5)} \arrow[d]&
                \\
                L_0\sfSpin(5)\arrow[r,shift left]\arrow[r,shift right] \arrow[d] & P_0\sfSpin(5)\arrow[d]
                \\
                L_0S^4\arrow[r,shift left]\arrow[r,shift right] & P_0S^4\arrow[d,"\boundary"]
                \\
                & S^4
            \end{tikzcd}
        \end{equation}
        This diagram underlies the description of the principal $\sfString(4)$-bundle as a non-Abelian bundle gerbe, cf.~\cite{Aschieri:2003mw,Jurco:2005qj} as well as~\cite{Nikolaus:2011ag}, but we refrain from giving further details.
        
        \paragraph{Cocycle description.} 
        The above total space description is not suitable for the direct extraction of a cocycle description: the bundle $\widehat{L_0\sfSpin(5)}$ is a non-trivial higher principal bundle over $L_0S^4$. We therefore need to switch to a different cover. We also note that the different lifts form a torsor for $H^3(S^4,\IZ)$. Because this group is trivial, the lift is unique up to isomorphism. This allows us to work with the same simple cover $\bar U_1\sqcup\bar U_2\rightarrow S^4$ defined in~\eqref{eq:2patchCover}, which proved to be sufficient in the description of the coset space $\caL\sfSpin(5)/\caL\sfSpin(4)\cong\sfSpin(5)/\sfSpin(4)\cong S^4$ in \cref{sec:Spin4Bundle}. The cocycles for this bundle will then evidently be an extension of the cocycles introduced in~\eqref{eq:cocycles_LSpin_bundle}.
        
        Out second main result is then the following.
        
        \begin{theorem}
            Subordinate to the cover $\bar U_1\sqcup\bar U_2$, the string bundle with an adjusted connection describing a string structure of $S^4$ is given by the following cocycle data 
            \begin{subequations}
                \begin{equation}
                    \begin{aligned}
                        \hat h_{ijk}\,:\,\bar Y^{[3]}\ &\rightarrow\ \big(P_0L_0\sfSpin(4)\times\sfU(1)\big)/\sfN~,
                        \\
                        q_+\ &\mapsto\ \big[\big((s,t)\mapsto h^\circ_{ijk}(q_+,st),h'_{ijk}\big)\big]\,,
                    \end{aligned}
                \end{equation}
                as well as
                \begin{equation}
                    h'_{111}\ \coloneqq\ h'_{222}\ \coloneqq\ h'_{112}\ \coloneqq\ h'_{122}\ \coloneqq\ h'_{121}\ \coloneqq\ 1~,
                \end{equation}
                and
                \begin{equation}
                    \Lambda'_{11}\ \coloneqq\ \Lambda'_{22}\ \coloneqq\ \Lambda'_{12}\ \coloneqq\ 0 \eand B'_{1}\ \coloneqq\ 0~,
                \end{equation}
                together with the data in~\eqref{eq:cocycles_LSpin_bundle}. The cocycle relations~\eqref{eq:adjustedCocycleConditions} then reduce to 
                \begin{equation}
                    \begin{aligned}
                        \hat h_{212}\ &=\ g_{21}\acton \hat h_{121}~,
                        \\
                        \hat \Lambda_{21}\ &=\ \hat h_{121}\nabla_1\hat h_{121}^{-1}-g_{21}^{-1}\acton \hat \Lambda_{12}~,
                        \\
                        \hat B_2\ &=\ \hat B_1+\rmd\hat \Lambda_{12}+A_2\acton \hat \Lambda_{12}+\tfrac12[\hat \Lambda_{12},\hat \Lambda_{12}]-\hat \kappa(g^{-1}_{12},F_1-B_1)
                    \end{aligned}
                \end{equation}
                thus providing the missing data. The additional components arising in the lift to a string bundle read as
                \begin{equation}
                    \begin{aligned}
                        h_{212}\ &=\ h_{121}~,
                        \\
                        h'_{212}\ &=\ \exp\left(\frac{\rmi}{2\pi}\int_{0}^{1}\rmd t \int_{0}^{1}\rmd s\innerLarge{g_{21}^{\circ-1}(q,t)\parder[g^\circ_{21}(q,t)]{t}}{h^{\circ-1}_{121}(q,st)\parder[h^{\circ}_{121}(q,st)]{s}}\right) 
                        \\
                        &=\ \exp(\rmi\theta_q)\ =\ \frac{\Re(q)}{|q|}+\rmi\frac{|\Im(q)|}{|q|}~,
                        \\
                        \Lambda'_{21}\ &=\ -\frac{\rmi}{2\pi}\int_0^1\rmd r\left\{\innerLarge{\parder[g^\circ_{12}]{r}g_{12}^{\circ-1}}{A^\circ_1}+\innerLarge{\parder[g^\circ_{21}]{r}g_{21}^{\circ-1}}{A^\circ_2}\right.
                        \\
                        &\kern2cm-\left.\innerLarge{\parder[g^\circ_{21}]{r}g_{21}^{\circ-1}}{g_{12}^{\circ-1} \rmd g^\circ_{12}}-\int_{0}^{1}\rmd s\,\innerLarge{\parder[h^\circ_{121}]{r}h_{121}^{\circ-1}}{\parder[(h^\circ_{121}\rmd h_{121}^{\circ-1})]{s}}\right\},
                        \\
                        B'_{2}\ &=\ \frac{\rmi}{4\pi}\left.\left(\innerLarge{g_{12}^{\circ-1}\rmd g^\circ_{12} + g_{12}^{\circ-1} A^\circ_1 g^\circ_{12}}{A^\circ_2} +\innerLarge{g_{12}^{\circ-1} A^\circ_1 g^\circ_{12}}{g_{12}^{\circ-1}\rmd g^\circ_{12}}\right)\right|_{r=1}
                        \\
                        &\kern1cm+\frac{\rmi}{2\pi}\rmd\int_0^1\rmd r\innerLarge{\parder[g^\circ_{12}]{r}g_{12}^{\circ-1}}{A^\circ_1} 
                        \\
                        &\kern1cm+\frac{\rmi}{4\pi}\int_0^1\rmd r\left\{\innerLarge{\parder[A^\circ_1]{r}}{A^\circ_1}-\innerLarge{\parder[A^\circ_2]{r}}{A^\circ_2}+\innerLarge{\parder[g_{12}^{\circ-1}\rmd g^\circ_{12}]{r}}{g_{12}^{\circ-1}\rmd g^\circ_{12}}\right\}.
                    \end{aligned}
                \end{equation}
            \end{subequations}
        \end{theorem}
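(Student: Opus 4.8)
The plan is to realise the claimed descent data by lifting, patch by patch, the explicit $\caL\sfSpin(4)$-cocycle of the spin structure on $S^4$ from \cref{sec:Spin4Bundle} along the central extension of $2$-groups~\eqref{eq:centralExtension2Groups}, and then to pin the result down using the triviality of $H^3(S^4,\IZ)$. A string lift exists at all because the obstruction is (a lift of) $\tfrac12 p_1$, whose de~Rham image already vanished for the bundle in \cref{sec:Spin4Bundle}; since $H^4(S^4,\IZ)\cong\IZ$ is torsion-free, the integral class vanishes too. Concretely, I would keep the $P_0\sfSpin(4)$-valued maps $g^\circ_{ij}$ and the connection forms $A^\circ_i$, $\Lambda^\circ_{ij}$, $B^\circ_i$ of~\eqref{eq:cocycles_LSpin_bundle} as the $P_0$-components of the lifted cocycle, and lift each loop $h^\circ_{ijk}\in L_0\sfSpin(4)$ to $\widehat{L_0\sfSpin(4)}\cong\big(P_0L_0\sfSpin(4)\times\sfU(1)\big)/\sfN$ by choosing a based path in $L_0\sfSpin(4)$ from the constant loop to $h^\circ_{ijk}$; the canonical such path is the rescaling $(s,t)\mapsto h^\circ_{ijk}(\,\cdot\,,st)$ of the loop parameter, which is exactly the map $\hat h_{ijk}$ in the statement, decorated by a $\sfU(1)$-component $h'_{ijk}$. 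Projection by endpoint evaluation recovers the original $\caL\sfSpin(4)$-data, so what remains is to fix $h'$, $\Lambda'$, $B'$ so that the cocycle relations of \cref{def:adjustedDescentData}, in their $\sfString(\sfG)$-form with adjustment~\eqref{eq:adjustmentStringGroup} spelled out in \cref{sec:adjustedDescentDataStringGroup}, hold.

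Next I would use the $2$-bundle isomorphisms. Since $\bar U_1\sqcup\bar U_2$ from~\eqref{eq:2patchCover} has essentially one genuine double and one genuine triple overlap, the $\sfB\sfU(1)$-part of the cocycle, i.e.\ the circle-valued maps $h'$ together with the $\fru(1)$-valued $\Lambda'$ and $B'$, can be adjusted by a gauge transformation so that $h'_{111}=h'_{222}=h'_{112}=h'_{122}=h'_{121}=1$, $\Lambda'_{11}=\Lambda'_{22}=\Lambda'_{12}=0$ and $B'_1=0$, which is the normalisation in the statement. With these choices the $L_0\frg$-sector of the general relations reproduces verbatim the already-established $\caL\sfSpin(4)$-relations of \cref{sec:Spin4Bundle} (in particular $h_{212}=h_{121}$, since $g^\circ_{12}$ and $g^\circ_{21}$ commute), so the general relations collapse precisely to the three displayed equations, and these directly define the missing data. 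The relation $\hat h_{212}=g_{21}\acton\hat h_{121}$ fixes $h'_{212}$ as the $\sfU(1)$-twist that the action~\eqref{eq:actionOfP0G} produces on the path representing $\hat h_{121}$; inserting $g^\circ_{21}$ and $h^\circ_{121}$ from~\eqref{eq:cocycles_LSpin_bundle} and carrying out the double integral gives $\exp(\rmi\theta_q)=\Re(q)/|q|+\rmi|\Im(q)|/|q|$ with $\theta_q$ as in~\eqref{eq:cocycles_LSpin_bundle_theta}. Analogously, the $\fru(1)$-components of the $\hat\Lambda_{21}$- and $\hat B_2$-relations, evaluated on the same explicit data, the Kac--Moody cocycle~\eqref{eq:cocyclePLGUBaez}, the $1$-form $\xi_g$ of~\eqref{eq:Baez1Form}, and the adjustment term $\hat\kappa(g_{12}^{-1},F_1-B_1)$, yield after integration the stated formulas for $\Lambda'_{21}$ and $B'_2$. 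Consistency of the remaining (Peiffer-type and higher-coboundary) relations and closure of the adjusted Bianchi identities~\eqref{eq:Bianchi_identities_extended} then follow because the curvature enters only through $\hat\kappa$, as already built into \cref{sec:adjustedDescentDataStringGroup}.

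For uniqueness I would argue that by~\eqref{eq:centralExtension2Groups} any two $\sfString(4)$-lifts of the same $\caL\sfSpin(4)$-bundle with adjusted connection differ by a $\sfB\sfU(1)$-gerbe; the set of such differences is a torsor modelled on $H^3(S^4,\IZ)$, which vanishes, while the residual globally defined $\Omega^2$-freedom in the $B$-field is realised by a gauge transformation, so the lift is essentially unique up to $2$-bundle isomorphism. The main obstacle is the $\fru(1)$-sector bookkeeping: one must first check that $\hat h_{ijk}$ is well defined modulo $\sfN$, i.e.\ independent of the contractible choices in its representative, via Stokes' theorem as in the proof that $\sfN$ is normal, cf.~\eqref{eq:subgroupN}, and then verify that the gauge-fixed $\fru(1)$-cocycle conditions --- which interleave the group cocycle~\eqref{eq:cocyclePLGUBaez}, the form $\xi_g$, the Chern--Simons term of~\eqref{eq:chern-simons}, and $\hat\kappa$ --- are mutually consistent with the correct sign normalisations. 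Essentially all of the computational work is concentrated there.
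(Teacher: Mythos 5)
Your proposal matches the paper's approach: the paper's proof is simply a direct (and admittedly tedious) verification of the adjusted $\sfString(4)$-cocycle relations for the explicit lift built from the $\caL\sfSpin(4)$-data of \cref{sec:Spin4Bundle}, with uniqueness handled exactly as you say, via the observation (made just before the theorem) that the lifts form a torsor for $H^3(S^4,\IZ)=0$. Your outline of the gauge-fixing of the $\sfU(1)$-sector and the extraction of $h'_{212}$, $\Lambda'_{21}$, $B'_2$ from the reduced relations is precisely where that verification lives, so the proposal is correct and essentially identical in route.
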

        \begin{proof}
            The proof is simply a verification of the adjusted cocycle relations for $\sfString(4)$-bundles, which is a tedious but straightforward exercise.
        \end{proof}
        
        \noindent Finally, we note that \cref{rem:better_cover} also applies here, and much more suitable covers may exist, that are, however, less suitable for immediate physical applications. A physical interpretation of the above string structure on $S^4$ is given in \ref{ssec:nonAbelianSelfDualString}.
        
        \section{Physical applications}
        
        We close with some remarks on the physical application of our constructions. 
        
        \subsection{Supergravities with gauge potentials}
        
        Adjusted connections appear widely in the context of heterotic and gauged supergravities. The latter involve principal $n$-bundles with $n=2,\ldots,6$, but the technology we developed above limits us to the case $n=2$. Let us therefore focus on heterotic supergravity; for a detailed account of the local situation in terms of adjusted connections, see~\cite{Borsten:2021ljb}.
        
        In this theory, the string theory Kalb--Ramond $B$-field needs to be coupled to a Lie-algebra valued one-form gauge potential. The local, infinitesimal description is already found in~\cite{Bergshoeff:1981um,Chapline:1982ww}, and the deformation of curvature form due to the adjustment and the deformation of 2-form gauge transformations were already determined in these papers, see e.g.~\cite[Eqs.~(5),(7)]{Chapline:1982ww}. 
        
        Our construction now allows now a) to describe explicit finite gauge transformations and b) to obtain a global description of this type of supergravity on arbitrary manifolds. The latter is particularly interesting for discussing T-duality, which requires compact space-time directions.
        
        Explicitly, the underlying gauge Lie 2-algebra is a weak form of the string Lie 2-algebra we consider in this paper. Using the morphism from this weak Lie 2-algebra to our strict Lie 2-algebra found e.g.~in~\cite{Baez:2005sn} or~\cite{Saemann:2019dsl}, one can now in principle formulate this theory globally. 
        
        We stress that in order to obtain the local cocycle description usually required in physics computations makes using a formalism such as our indispensable: the $B$-field here is not simply a connective structure on an abelian gerbe.
        
        \subsection{Non-Abelian self-dual strings}\label{ssec:nonAbelianSelfDualString}
        
        More concretely, and very close to our above discussion is the non-abelian version of the self-dual string equation as found in~\cite{Saemann:2017rjm} or~\cite{Akyol:2012cq}.
        
        \paragraph{Abelian self-dual strings.}
        Recall that a \uline{Dirac} or \uline{$\sfU(1)$-monopole} is a principal circle bundle with connection $A$ and curvature $F$ on $\IR^3\setminus\{0\}$ together with a Higgs field given by a function $\phi:\IR^3\setminus\{0\}\rightarrow\fru(1)$ with $F={\star\rmd\phi}$. The fundamental (i.e.~charge~1) Dirac monopole is topologically given by the Hopf fibration. In string theory, Dirac monopoles can be obtained from D1-branes ending on D3-branes. A categorified version of this is the \uline{Abelian self-dual string}, which describes M2-branes ending on an M5-brane~\cite{Howe:1997ue}. Mathematically, it is given by an Abelian gerbe with connection $B$ and curvature $H$ on $\IR^4\setminus\{0\}$ together with a Higgs field or function $\phi:\IR^4\setminus\{0\}\rightarrow\fru(1)$ such that $H={\star\rmd\phi}$. The fundamental self-dual string is topologically described by the categorified Hopf fibration~\eqref{eq:categorified_Hopf}.
        
        \paragraph{Non-Abelian generalisation.}
        The generalisation of the Dirac monopole to a non-Abelian principal bundle is straightforward and usually called the \uline{Bogomolny monopole}. This is a principal $\sfG$-bundle with connection over\footnote{In the non-Abelian case, non-trivial solutions exist for trivial principal bundles.} $\IR^3$ and a $\frg$-valued Higgs field $\phi$ satisfying $F={\star\nabla\phi}$. 
        
        As argued in~\cite{Saemann:2017rjm}, the appropriate non-Abelian generalisation of the self-dual string is a principal $\sfString(\sfG)$-bundle with adjusted connection together with a Higgs field $\phi:\IR^4\rightarrow \widehat{L_0\frg}$ such that 
        \begin{equation}\label{eq:non_Abelian_SDS}
            H\ =\ {\star\nabla\phi}~,
            \quad
            \pr_1(\phi)\ =\ 0~,
            \eand
            F\ =\ {\star F}~,
        \end{equation}
        where $F$ and $H$ are the 2- and 3-form components of the curvature, $\nabla=\rmd+A\acton$, and $\pr_1:\widehat{L_0\frg}\rightarrow L_0\frg$ is the evident projection. As shown in~\cite{Saemann:2017rjm}, these equations posses a natural dimensional reduction of the non-Abelian self-dual string equation to the Bogomolny monopole equation. Moreover, these configurations are BPS equations of an interesting six-dimensional superconformal field theory~\cite{Saemann:2017rjm}, see also~\cite{Akyol:2012cq}. This is expected to be related to the description of M2-branes ending on M5-branes.
        
        We emphasise that in~\cite{Saemann:2017rjm}, only infinitesimal gauge transformations and the local bundle structure were available. With the techniques developed in \cref{sec:stringBundlesWithConnection}, we now also have finite gauge transformations and the total bundle structure under full control. This is vital, e.g., if one aims for a higher form of the Nahm transform, which describes the duality between Nahm data and monopoles. 
        
        \subsection{Outline of the twistor space description of self-dual strings}\label{sec:commentsTwistorApproach}
        
        Twistor descriptions of higher gauge theories have been studied before in the unadjusted setting~\cite{Saemann:2011nb,Saemann:2012uq,Saemann:2013pca,Jurco:2014mva,Jurco:2016qwv,Samann:2017dah}, but fake-flatness was a fundamental obstruction to obtaining interesting solutions. One may now hope that adjusted connections solve also this problem, but the situation is more subtle as we explain in the following.
        
        \paragraph{Penrose--Ward transform.} 
        Many important field equations can be recast as the partial flatness of the curvature of a gauge potential when restricted to certain subspaces of space-time $X$. The moduli space of the relevant subspaces is then called \uline{twistor space} and commonly denoted by $Z$. Fibred over both is the \uline{correspondence space} $Y$, the (disjoint) union of relevant subspaces containing a particular space-time point over all space-time points:
        \begin{equation}\label{eq:twistorDoubleFibration}
            \begin{tikzcd}
                & Y\arrow[dl,"\pi_2",swap] \arrow[dr,"\pi_1"]
                \\
                Z & & X
            \end{tikzcd}
        \end{equation}
        The projection $\pi_1$ forgets the subspaces considered, whilst the projection $\pi_2$ forgets the contained space-time point. We thus have the \uline{geometric twistor correspondence},
        \begin{equation}
            Z\ \ni\ z\ \leftrightarrow\ \pi_1(\pi_2^{-1}(z))\ \subseteq\ X 
            \eand
            Z\ \supseteq\ \pi_2(\pi_1^{-1}(x))\ \leftrightarrow\ x\ \in\ X~.
        \end{equation}
        
        Under suitable topological conditions on the double fibration~\eqref{eq:twistorDoubleFibration},\footnote{see e.g.~\cite{Eastwood:1985aa}} the \uline{Penrose--Ward transform} then identifies the gauge orbits of solutions to some field equations with isomorphism classes of a certain principal bundle $P$ over the corresponding twistor space $Z$. Examples include self-dual Yang--Mills theory~\cite{Ward:1977ta,Hitchin:1982gh} and its supersymmetric extensions~\cite{Witten:2003nn,Popov:2004rb,Popov:2005uv} as well as Yang--Mills theory~\cite{Isenberg:1978kk,Witten:1978xx,Khenkin:1980ff,Pool:1981aa,Buchdahl:1985aa,Popov:2021mfl} and its supersymmetric extensions~\cite{Witten:1978xx,Saemann:2005ji,Saemann:2012rr}.\footnote{We note that such constructions also exist for self-dual gravity~\cite{Penrose:1976js,Atiyah:1978wi} and its supersymmetric extensions~\cite{Merkulov:1992,Wolf:2007tx,Mason:2007ct}, as well as Einstein gravity~\cite{0264-9381-2-4-020} and its supersymmetric extensions~\cite{Merkulov:1992qa}.} For instance, in the context of four-dimensional self-dual (supersymmetric) Yang--Mills theory, the relevant subspaces are self-dual (super) surfaces in (super) space-time whereas for four-dimensional $\caN=3$ supersymmetric Yang--Mills theory, the relevant subspaces are super null-geodesics in $\caN=3$ super space-time, respectively.
        
        Roughly speaking, we can pull back the bundle $P$ along $\pi_2$ and perform a trivialising isomorphism $\pi_2^*P\rightarrow\tilde P$. This will lead to a relative connection along $\pi_2$ on the bundle $\tilde P$ over $Y$, and this connection is relatively flat. It turns out that this connection is necessarily of a form that allows for a push-down along $\pi_1$, leading to a connection on a principal bundle $\pi_1^!\tilde P$ on space-time $X$. Relative flatness of the connection on $\tilde P$ now implies flatness of the connection along the subspaces $\pi_1(\pi_2^{-1}(z))\subseteq X$ for all $z\in Z$, which, in turn, implies the desired field equations on space-time.
        
        \paragraph{Generalisation to higher gauge theory.}
        Starting from the Abelian setting in~\cite{Saemann:2011nb,Mason:2011nw}, this construction was generalised to higher gauge theories in~\cite{Saemann:2011nb,Saemann:2012uq,Saemann:2013pca,Jurco:2014mva,Jurco:2016qwv,Samann:2017dah}. These generalisations even allow for replacing the manifolds appearing in the double fibration~\eqref{eq:twistorDoubleFibration} by categorified spaces~\cite{Jurco:2016qwv}. 
        
        However, what all these generalisations have in common is that the fake-flatness condition for the connection is built in. Unfortunately, adjusted connections do not alleviate this problem. There is a tension between the fibres of $\pi_2:Y\rightarrow Z$ being large enough to produce the full information of a 2-form $B$ on space-time, together with a constraint on $H$, and them being small enough to allow for a non-flat relative connection with a non-trivial relative 2-form curvature along $\pi_2:Y\rightarrow Z$. The latter, however, is required for our generalisation of adjusted string bundles to be noticeable. We show in the following how this issue can be circumvented for the non-Abelian self-dual string equation~\eqref{eq:non_Abelian_SDS}.
        
        \paragraph{Twistor space diagram.} 
        For convenience, we consider a flat space-time, which we complexify, and we also work with a complexified gauge group. We then have the following fibrations of twistor space:
        \begin{equation}
            \begin{tikzcd}
                & & \IC^4\times \IC P^1\times \IC P^1 \arrow[dl,"\pi_2",swap] \arrow[ddrr,"\pi_1"]
                \\[10pt]
                & Z_{\rm Ins}\times\IC P^1\arrow[dl,"\pi_{3}",swap] \arrow[dr,"\pi_{4}"] & & 
                \\
                Z_{\rm Hyp} & & Z_{\rm Ins} & & \IC^4
            \end{tikzcd}
        \end{equation}
        Here, $Z_{\rm Ins}$ denotes the Penrose twistor space for an instanton, which is the total space of the holomorphic vector bundle $\caO(1)\oplus\caO(1)\rightarrow\IC P^1$ and $Z_{\rm Hyp}$ is the hyperplane twistor space introduced in~\cite{Saemann:2011nb}, which is the total space of the holomorphic vector bundle $\caO(1,1)\rightarrow\IC P^1\times\IC P^1$. Since $\sfSpin(4)\cong\sfSU(2)\times\sfSU(2)$, we can use spinorial indices $\alpha,\beta,\ldots,\dot\alpha,\dot\beta,\ldots=1,2$. We take $x^{\alpha\dot\beta}$ as coordinates on $\IC^4$, $\lambda_{\dot\alpha}$ and $\mu_\alpha$ are homogeneous coordinates on the two projective spaces $\IC P^1$, and $y$ and $z^\alpha$ and $z$ are the fibre coordinates of $\caO(1,1)\rightarrow\IC P^1\times\IC P^1$ and $\caO(1)\oplus\caO(1)\rightarrow\IC P^1$, respectively. Then, the various projections are given as follows,
        \begin{equation}
            \begin{gathered}
                \pi_1\,:\,(x^{\alpha\dot\beta},\mu_\alpha,\lambda_{\dot\alpha})\ \mapsto\ (x^{\alpha\dot\beta})~,
                \\
                \pi_2\,:\,(x^{\alpha\dot\beta},\mu_\alpha,\lambda_{\dot\alpha})\ \mapsto\ (z^\alpha,\mu_\alpha,\lambda_{\dot\alpha})\ \coloneqq\ (x^{\alpha\dot\beta}\lambda_{\dot\beta},\mu_\alpha,\lambda_{\dot\alpha})~,
                \\
                \pi_3\,:\,(z^\alpha,\mu_\alpha,\lambda_{\dot\alpha})\ \mapsto\ (y,\mu_\alpha,\lambda_{\dot\alpha})\ \coloneqq\ (z^\alpha\mu_\alpha,\mu_\alpha,\lambda_{\dot\alpha})~,
                \\
                \pi_4\,:\,(z^\alpha,\mu_\alpha,\lambda_{\dot\alpha})\ \mapsto\ (z^\alpha,\lambda_{\dot\alpha})~.
            \end{gathered}
        \end{equation}
        For more details and the conventions we use, see e.g.~\cite{Popov:2004rb,Wolf:2010av} as well as~\cite{Saemann:2011nb}. 
        
        \paragraph{Iterative Penrose--Ward transform.}
        Consider a smoothly trivial holomorphic principal $\sfG$-bundle $P_0$ over $Z_{\rm Ins}$ which is holomorphically trivial when restricted to $\IC P^1\cong(\pi_4\circ\pi_2)(\pi_1^{-1}(x))\subseteq Z_{\rm Ins}$ for all $x\in\IC^4$ and which has vanishing second Chern class. As is well-known~\cite{Ward:1977ta}, this bundle describes a solution to the self-dual Yang--Mills equations on space-time with gauge group $\sfG$ via the Penrose--Ward transform along the span given by $\pi_4\circ\pi_2$ and $\pi_1$. We define the pullback 
        \begin{equation}
            P_1\ \coloneqq\ \pi^*_4 P_0~,
        \end{equation}
        and this bundle can be lifted to a holomorphic principal $\sfString(\sfG)$-bundle $\scP_1$ as the second Chern class of $P_1$ necessarily vanishes.
        
        Next, consider a holomorphic Abelian gerbe $\scG$ over $Z_{\rm Hyp}$. As shown in~\cite{Saemann:2011nb}, this gerbe by itself describes an Abelian self-dual string on $\IC^4$ by means of a Penrose--Ward transform along the span given by $\pi_4\circ\pi_3$ and $\pi_1$. We can now tensor the principal $\sfString(\sfG)$-bundle $\scP_1$ by the pull-back $\pi_3^*\scG$, which yields the holomorphic principal $\sfString(\sfG)$-bundle 
        \begin{equation}
            \scP_2\ \coloneqq\ \scP_1\otimes\pi^*_3\scG~.
        \end{equation}
        Whilst the lift of $P_1$ to $\scP_1$ is not unique, the different choices are parametrised precisely by the choices of $\scG$. Consequently, we have to provide an additional, non-canonical choice for defining the lift. Nevertheless, we are certain to cover the solution space with any such choice, because the possible self-dual strings form a torsor for the group of the pull-backs of the possible gerbes $\scG$. The resulting holomorphic principal string bundle $\scP_2$ can then be taken as input data for a straightforward higher Penrose--Ward transform as in~\cite{Saemann:2012uq}, and the result is a solution to the non-Abelian self-dual string equations~\eqref{eq:non_Abelian_SDS} for $\sfString(\sfG)$ over $\IC^4$. We note that a supersymmetric extension is also readily implemented, cf.~\cite{Saemann:2012uq}.
        
        \appendix
        \addappheadtotoc
        \appendixpage
        
        \section{Useful relations and detailed computations}
        
        In this appendix, we collect a number of very useful identities and detailed computations that would have distracted from the key arguments in the main part of this paper.
        
        \subsection{Computations with crossed modules}\label{app:useful}
        
        Let $(\sfH\overset{\sft}{\longrightarrow}\sfG,\acton)$ be a crossed module of Lie groups and $(\frh\overset{\sft}{\longrightarrow}\sfg,\acton)$ the corresponding crossed module of Lie algebras, respectively. 
        
        \paragraph{Algebraic relations.} 
        Firstly, recall that $g\acton\unit=\unit$ and $\unit\acton h=h$ for all $g\in\sfG$ and for all $h\in\sfH$. We have the following implications:
        \begin{subequations}
            \begin{equation}\label{eq:app_identity_t(hA1/h)}
                \begin{gathered}
                    \sft(h^{-1}(g\acton h))\ =\ \sft(h^{-1})g\sft(h)g^{-1}
                    \\
                    \Rightarrow\quad
                    \sft(h^{-1}(V\acton h))\ =\ \sft(h^{-1})V\sft(h)-V~,
                \end{gathered}
            \end{equation}
            \begin{equation}
                \begin{gathered}
                    g_1\acton(g_2\acton h)\ =\ g_2\acton((g_2^{-1}g_1g_2)\acton h)
                    \\
                    \Rightarrow\quad 
                    V\acton(g\acton h)\ =\ g\acton((g^{-1}Vg)\acton h)~,
                \end{gathered}
            \end{equation}
            \begin{equation}
                \begin{gathered}
                    (gVg^{-1})\acton W\ =\ g\acton(V\acton(g^{-1}\acton W))
                    \\
                    \Rightarrow\quad
                    [V_1,V_2]\acton W\ =\ V_1\acton V_2\acton W -  V_2\acton V_1\acton W~,
                \end{gathered}
            \end{equation}
            \begin{equation}
                \begin{gathered}
                    g\acton(\sft(h_1)\acton h_2)\ =\ \sft(g\acton h_1)\acton(g\acton h_2)
                    \\
                    \Rightarrow\quad
                    V\acton[W_1,W_2]\ =\ [V\acton W_1,W_2]+[W_1,V\acton W_2]~,
                \end{gathered}
            \end{equation}
            \begin{equation}\label{eq:app_identity_AhA/h}
                \begin{gathered}
                    g_1\acton(h(g_2\acton h^{-1}))\ =\ \big[(g_1\acton h)h^{-1}\big]h((g_1 g_2 g_1^{-1})\acton h^{-1})\big[(g_1 g_2 g_1^{-1})\acton(h(g_1\acton h^{-1}))\big]
                    \\
                    \Rightarrow\quad V_1\acton(h(V_2\acton h^{-1}))\ =\ V_2\acton(h(V_1\acton h^{-1}))+h([V_1,V_2]\acton h^{-1})
                    \\
                    -[h(V_1\acton h^{-1}),h(V_2\acton h^{-1})]
                \end{gathered}
            \end{equation}
        \end{subequations}
        for all $g,g_{1,2}\in\sfG$, for all $h,h_{1,2}\in\sfH$, for all $V,V_{1,2}\in\frg$, and for all $W,W_{1,2}\in\frh$. These implications are derived by appropriate linearisation. In particular, one would write $g=\exp(\eps V)$ or $h=\exp(\eps W)$, for instance, and then linearise in $\eps$.
        
        We also have 
        \begin{subequations}
            \begin{align}
                \sft(h_1)\acton(h_2(V\acton h_2^{-1}))\ &=\ h_1h_2(V\acton (h_1h_2)^{-1})-h_1(V\acton h_1^{-1})~,\label{eq:app_identity_ht(B)/h}
                \\
                h(\sft(W)\acton h^{-1})\ &=\ \sft(h)\acton W-W
            \end{align}
        \end{subequations}
        for all $h,h_{1,2}\in\sfH$, for all $V\in\frg$, and for all $W\in\frh$.
        
        \paragraph{Differential properties.}
        Let $g\in\scC^\infty(U,\sfG)$ and $h\in\scC^\infty(U,\sfH)$ for $U\subseteq X$ open. Then,
        \begin{equation}
            (g\acton h)^{-1}\rmd(g\acton h)\ =\ g\acton[ h^{-1}((g^{-1}\rmd g)\acton h)]+g\acton( h^{-1}\rmd h)
        \end{equation}
        which follows from formally writing $\rmd(g\acton h)=\rmd g\acton h+g\acton\rmd h$. Therefore, 
        \begin{subequations}
            \begin{align}
                \rmd(\alpha\acton h)\ &=\ \rmd\alpha\acton h+(-1)^p\alpha\acton\rmd h~,\label{eq:app_identity_d(gLambda)}
                \\
                \rmd(g\acton\beta)\ &=\ g\acton((g^{-1}\rmd g)\acton\beta)+g\acton\rmd\beta~,\label{eq:app_identity_d(hA/h)}
                \\
                \rmd(h^{-1}(\alpha\acton h))\ &=\ h^{-1}(\rmd\alpha\acton h)+(-1)^p[h^{-1}(\alpha\acton h),h^{-1}\rmd h]+(-1)^p\alpha\acton h^{-1}\rmd h 
            \end{align}
        \end{subequations}
        for all $\alpha\in\Omega^p(U,\frg)$ and for all $\beta\in\Omega^q(U,\frh)$.
        
        \subsection{Properties of the adjustment datum}\label{app:kappaProperties}
        
        Let again $(\sfH\overset{\sft}{\longrightarrow}\sfG,\acton)$ be a crossed module of Lie groups and $(\frh\overset{\sft}{\longrightarrow}\sfg,\acton)$ the corresponding crossed module of Lie algebras, respectively. 
        
        \paragraph{Algebraic properties.}
        Note that from~\eqref{eq:alternativeAdjustmentCondition_b} it follows that
        \begin{equation}\label{eq:kappa_gginv}
            g\acton\kappa(g^{-1},V)\ =\ -\kappa(g,g^{-1}Vg-\sft(\kappa(g^{-1},V)))
        \end{equation}
        for all $g\in\sfG$ and for all $V\in\frg$. Recall the linearised $\kappa$ defined in~\eqref{eq:linearisedKappa} by
        \begin{equation}
            \kappa(V_1,V_2)\ \coloneqq\ \lim_{\eps\to0}\tfrac1\eps\kappa(\exp(\eps V_1),V_2)
        \end{equation}
        for all $V_{1,2}\in\frg$. Consequently, for all $V\in\frg$ and for all $W\in\frh$ the condition~\eqref{eq:alternativeAdjustmentCondition_a} implies that 
        \begin{equation}\label{eq:kappat_linear}
            \kappa(\sft(W),V)\ =\ -V\acton W~.
        \end{equation}
        It is then immediate that
        \begin{equation}\label{eq:kappa_tt}
            \kappa(\sft(W_1),\sft(W_2))\ =\ -\kappa(\sft(W_2),\sft(W_1))\ =\ [W_1,W_2]
        \end{equation}
        for all $W_{1,2}\in\frh$.
        
        Recall from~\cref{def:generalAdjustment} that $\kappa$ is assumed to be linear in the second slot. Linearity in the first slot of the linearised $\kappa$ follows from~\eqref{eq:alternativeAdjustmentCondition_b} and~\eqref{eq:kappa_gginv}. Hence, the linearised $\kappa$ is a bilinear map $\frg\times\frg\rightarrow\frh$.
        
        Next, consider~\eqref{eq:alternativeAdjustmentCondition_b} and set $g_2=g$ and $g_1=\exp(\eps V_1)$ for $V_1\in\frg$. Similarly, one can consider the same identity but with $g_{1,2}$ interchanged. Upon linearising in $\eps$ and using the fact that $g\exp(\eps V_1)g^{-1}=\exp(\eps gV_1g^{-1})$, we get 
        \begin{equation}\label{eq:kappa_galfaX}
            \begin{aligned}
                &(gV_1g^{-1})\acton\kappa(g,V_2)+\kappa(gV_1g^{-1},gV_2g^{-1}-\sft(\kappa(g,V_2)))
                \\ 
                &\kern4cm=\ g\acton\kappa(V_1,V_2)+\kappa(g,[V_1,V_2]-\sft(\kappa(V_1,V_2)))
            \end{aligned}
        \end{equation}
        for all $V_2\in\frg$.
        
        Let us now see how $\kappa$ acts on a commutator in the first slot, starting from the identity
        \begin{equation}\label{eq:kappa_ggginv}
            \begin{aligned}
                \kappa(g_1g_2g^{-1}_1,V)\ &=\ g_1g_2\acton\kappa(g_1^{-1},V)+g_1\acton\kappa(g_2,g_1^{-1}Vg_1-\sft(\kappa(g_1^{-1},V)))
                \\
                &\kern1cm+\kappa\big(g_1,g_2g_1^{-1}Vg_1g_2^{-1}-\sft(g_2\acton\kappa(g_1^{-1},V))
                \\
                &\kern3cm-\sft(\kappa(g_2,g_1^{-1}Vg_1-\sft(\kappa(g_1^{-1},V))))\big)\,,
            \end{aligned}
        \end{equation}
        which again follows from~\eqref{eq:alternativeAdjustmentCondition_b}. Setting $g_{1,2}=\exp(\eps_{1,2}V_{1,2})$ for $V_{1,2}\in\frg$ and keeping only terms of order $\eps_1\eps_2$, the identity~\eqref{eq:kappa_ggginv} yields 
        \begin{equation}\label{eq:kappa_commutator}
            \begin{aligned}
                \kappa([V_1,V_2],V_3)\ &=\ V_1\acton\kappa(V_2,V_3)-V_2\acton\kappa(V_1,V_3) 
                \\
                &\kern1cm+\kappa(V_1,[V_2,V_3])-\kappa(V_2,[V_1,V_3])
                \\
                &\kern1cm-\kappa(V_1,\sft(\kappa(V_2,V_3)))+\kappa(V_2,\sft(\kappa(V_1,V_3)))
            \end{aligned}
        \end{equation}
        for all $V_3\in\frg$.
        
        \paragraph{Differential properties.}
        Suppose now that $g\in\scC^\infty(U,\sfG)$ and $\alpha\in\Omega^p(U,\frg)$ for $U\subseteq X$ open. Then,
        \begin{equation}\label{eq:dofkappa}
            \rmd\kappa(g,\alpha)\ =\ \kappa(g,\rmd\alpha)+g\acton\kappa(g^{-1}\rmd g,\alpha)+\kappa(g,[g^{-1}\rmd g,\alpha]-\sft(\kappa(g^{-1}\rmd g,\alpha)))~.
        \end{equation}
        This follows by formally writing $\rmd\kappa(g,\alpha)=\kappa(\rmd g,\alpha)+\kappa(g,\rmd\alpha)$ and then using~\eqref{eq:alternativeAdjustmentCondition_b} in $\kappa(g\cdot g^{-1}\rmd g,\alpha)$, utilising the fact that $g^{-1}\rmd g$ is valued in $\frg$. Note that upon combining~\eqref{eq:kappa_galfaX} and~\eqref{eq:dofkappa}, we find that
        \begin{equation}
            \begin{aligned}
                \nabla\kappa(g,\alpha)\ \coloneqq\ (\rmd+A\acton)\kappa(g,\alpha)\ &=\ \kappa(g,\nabla\alpha)+g\acton\kappa(g^{-1}Ag+g^{-1}\rmd g,\alpha) 
                \\
                &\kern0.5cm+\kappa(g,[g^{-1}Ag+g^{-1}\rmd g,\alpha]-\sft(\kappa(g^{-1}Ag+g^{-1}\rmd g,\alpha)))
                \\
                &\kern0.5cm-\kappa(A,g\alpha g^{-1}-\sft(\kappa(g,\alpha)))~. 
            \end{aligned}
        \end{equation}
        
        \paragraph{Specialisation to $\sfString(\sfG)$.}
        The adjustment datum $\hat\kappa$ for the Lie 2-group $\sfString(\sfG)$ given in~\eqref{eq:adjustmentStringGroup} (and thus also $\kappa$ for $\caL\sfG$ given in~\eqref{eq:adjustmentPathLoopGroups}) satisfies extra identities. Let $g,g_{1,2}\in\scC^\infty(X,P_0\sfG)$, $\alpha,\alpha_{1,2}\in\Omega^p(U,P_0\frg)$, and $\beta\in\Omega^q(U,L_0\frg\oplus\fru(1))$. First of all, we have 
        \begin{subequations}\label{eq:StringkappaIdentities}
            \begin{equation}
                \hat\kappa(g,\sft(\beta))\ =\ g\acton \beta - \beta~,
            \end{equation}
            from which it follows that
            \begin{equation}
                \hat\kappa(\alpha,\sft(\beta))\ =\ \alpha\acton\beta \ =\ -(-1)^{pq}\hat\kappa(\sft(\beta),\alpha)~.
            \end{equation}
        \end{subequations}
        \begin{subequations}
            In addition, $\hat\kappa$ also obeys
            \begin{align}
                \rmd\hat\kappa(g,\alpha)\ &=\ \hat\kappa(g,\rmd\alpha)+\hat\kappa(\rmd g\,g^{-1},g\alpha g^{-1})~,
                \\
                \label{eq:StringkappaIdentity_b}
                \hat\kappa(g_2g_1,\alpha)\ &=\ \hat\kappa(g_2,g_1\alpha g_1^{-1})+\hat\kappa(g_1,\alpha)~,
                \\
                \hat\kappa(g,[\alpha_1,\alpha_2])\ &=\ \hat\kappa(g\alpha_1 g^{-1},g\alpha_2 g^{-1})-\hat\kappa(\alpha_1,\alpha_2)~.
            \end{align}
        \end{subequations}
        The above identities follow directly from the definition of $\hat\kappa$ and $\acton$ for $\sfString(\sfG)$.  
        
        \subsection{Consistency of cocycle conditions}\label{app:cocycleConsistency}
        
        Here, we give the computation of the composition of the gluing relations from \cref{sec:higherPrincipalBundles}. On triple overlaps $(i,j,k)\in Y^{[3]}$, we can express $A_k$ in terms of $A_i$ in two different ways that have to be consistent
        \begin{equation}
            \begin{aligned}
                A_k\ &=\ g^{-1}_{ik}A_ig_{ik}+g^{-1}_{ik}\rmd g_{ik}-\sft(\Lambda_{ik})
                \\
                &\overset{!}{=}\ g^{-1}_{jk}\left(g^{-1}_{ij}A_ig_{ij}+g^{-1}_{ij}\rmd g_{ij}-\sft(\Lambda_{ij})\right)g_{jk}+g^{-1}_{jk}\rmd g_{jk}-\sft(\Lambda_{jk})~.
            \end{aligned}
        \end{equation}
        Using the cocycle condition 
        \begin{equation}\label{eq:cocycleForG}
            g_{ij}g_{jk}\ =\ \sft(h^{-1}_{ijk}) g_{ik}
        \end{equation}
        and the properties of $\acton$ and $\sft$, we can rewrite the second line as
        \begin{equation}
            g^{-1}_{ik}\sft(h_{ijk})A_i\sft(h^{-1}_{ijk})g_{ik}+\sft\big(g^{-1}_{ik}\acton (h_{ijk} \rmd h^{-1}_{ijk})\big)+g^{-1}_{ik}\rmd g_{ik}-\sft(\Lambda_{jk}+g^{-1}_{jk}\acton\Lambda_{ij})~.
        \end{equation}
        Finally, using identity~\eqref{eq:app_identity_t(hA1/h)} and assuming $\sft$ to be injective, we get the cocycle condition for $\Lambda$
        \begin{equation}\label{eq:cocycleForLambda}
            \Lambda_{ik}\ =\ \Lambda_{jk}+g_{jk}^{-1}\acton\Lambda_{ij}-g_{ik}^{-1}\acton(h_{ijk}\nabla_i h_{ijk}^{-1})~,
        \end{equation} 
        where again $\nabla_i=\rmd+A_i\acton$. In order to check that the cocycle condition for $\Lambda$ is consistent one needs to work over quadruple overlaps. This is a lengthy but straightforward calculation, so we leave it as an exercise.
        
        \paragraph{Derivation of the fake-flatness condition.}
        We repeat the consistency check on triple overlaps of the cocycle condition for $ B $, as we did above for $ A $: 
        \begin{equation}\label{eq:BConsistencyCheck}
            \begin{aligned}
                B_k\ &=\ g^{-1}_{ik}\acton B_i+\nabla_k\Lambda_{ik}+\tfrac12[\Lambda_{ik},\Lambda_{ik}]
                \\
                &\overset{!}{=}\ g^{-1}_{jk}\acton (g^{-1}_{ij}\acton B_i+\nabla_j\Lambda_{ij}+\tfrac12[\Lambda_{ij},\Lambda_{ij}])+\nabla_k\Lambda_{jk}+\tfrac12[\Lambda_{jk},\Lambda_{jk}]~,
            \end{aligned}
        \end{equation}
        where in the second line we have first expressed $B_k$ in terms of $B_j$ and then $B_j$ in terms of $B_i$. Thus, we have to check whether this identity holds. Using the cocycle condition~\eqref{eq:cocycleForLambda} for $\Lambda$, we can immediately cancel a few terms appearing on both sides of~\eqref{eq:BConsistencyCheck}. After employing~\eqref{eq:cocycleForG} and~\eqref{eq:app_identity_ht(B)/h}, we are left with checking that
        \begin{equation}
            \begin{aligned}
                g^{-1}_{ik}\acton(h_{ijk}\sft(B_i)\acton h^{-1}_{ijk})\ &\overset{!}{=}\ \nabla_k\left(g^{-1}_{jk}\acton \Lambda_{ij}-g^{-1}_{ik}\acton(h_{ijk}\nabla_i h^{-1}_{ijk})\right)-g^{-1}_{jk}\acton(\nabla_j\Lambda_{ij})
                \\
                &\kern1cm+[\Lambda_{jk},g^{-1}_{jk}\acton \Lambda_{ij}-g^{-1}_{ik}\acton(h_{ijk}\nabla_i h^{-1}_{ijk})]
                \\
                &\kern1cm-[g^{-1}_{jk}\Lambda_{ij},g^{-1}_{ik}\acton(h_{ijk}\nabla_i h^{-1}_{ijk})]
                \\
                &\kern1cm+\tfrac12 g^{-1}_{ik}\acton[h_{ijk}\nabla_i h^{-1}_{ijk},h_{ijk}\nabla_i h^{-1}_{ijk}]~.
            \end{aligned}
        \end{equation} 
        We first take care of the first term on the right-hand side. Using the identity
        \begin{equation}
            \nabla_A(g\acton \Lambda)\ =\ g\acton(\nabla_{g^{-1}Ag+g^{-1}\rmd g}~\Lambda)~,
        \end{equation}
        which follows from~\eqref{eq:app_identity_d(gLambda)}, and the properties of $\acton$ and $\sft$ we get
        \begin{equation}
            \begin{aligned}
                \nabla_k(g^{-1}_{jk}\acton\Lambda_{ij})\ &=\ g^{-1}_{jk}\acton(\nabla_j\Lambda_{ij})-[\Lambda_{jk},g^{-1}_{jk}\acton\Lambda_{ij}]~,
                \\
                \nabla_k\left(g^{-1}_{ik}\acton(h_{ijk}\nabla_i h^{-1}_{ijk})\right)\ &=\ g^{-1}_{ik}\acton\left(\nabla_i(h_{ijk}\nabla_i h^{-1}_{ijk})\right)-[\Lambda_{ik},g^{-1}_{ik}\acton(h_{ijk}\nabla_i h^{-1}_{ijk})]~, 
            \end{aligned}
        \end{equation} 
        where we have also used the cocycle condition for $A$. Furthermore, from~\eqref{eq:app_identity_AhA/h},~\eqref{eq:app_identity_d(hA/h)}, and using $\rmd(h\rmd h^{-1})=-\tfrac12[h\rmd h^{-1},h\rmd h^{-1}]$, we get
        \begin{equation}
            \nabla_A(h\nabla_A h^{-1})\ =\ h\left(\rmd A+\tfrac12[A,A]\right)\acton h^{-1}-\tfrac12\left[h\nabla_A h^{-1},h\nabla_A h^{-1}\right].
        \end{equation}
        Putting everything together we get after some gymnastics
        \begin{equation}
            \begin{aligned}
                &g^{-1}_{ik}\acton\left(h_{ijk}\left(\rmd A_i+\tfrac12[A_i,A_i]+\sft(B_i)\right)\acton h^{-1}_{ijk}\right)
                \\
                &\kern1cm=\ \left[\Lambda_{ik}-\Lambda_{jk}-g_{jk}^{-1}\acton \Lambda_{ij}+g_{ik}^{-1}\acton(h_{ijk}\nabla_i h_{ijk}^{-1}),g_{ik}^{-1}\acton(h_{ijk}\nabla_i h_{ijk}^{-1})\right].
            \end{aligned}
        \end{equation}
        Again using~\eqref{eq:cocycleForLambda} the right side vanishes, and we are left with
        \begin{equation}\label{eq:appFakeFlatUnadjusted}
            g^{-1}_{ik}\acton\left(h_{ijk}(F_i\acton h^{-1}_{ijk})\right)\ =\ 0~.
        \end{equation}
        Using the cocycle condition~\eqref{eq:cocycleForG} again, this is equivalent to
        \begin{equation}
            (g_{ij}g_{jk})^{-1}\acton\left(h^{-1}_{ijk}(F_i\acton h_{ijk})\right)\ =\ 0~.
        \end{equation}
        In the adjusted case, formulas~\eqref{eq:BConsistencyCheck} get the adjustment term $\kappa(g^{-1},F)$, which modifies this condition to
        \begin{equation}\label{eq:appFakeFlatAdjusted}
            (g_{ij}g_{jk})^{-1}\acton\left(h^{-1}_{ijk}(F_i\acton h_{ijk})\right)\ =\ \kappa(g^{-1}_{ik},F_i)-g_{jk}^{-1}\acton\kappa(g^{-1}_{ij},F_i)-\kappa(g^{-1}_{jk},F_j)~.
        \end{equation}
        
        Finally, let us remark that the above calculations can be straightforwardly adapted also to the case of gauge transformations. Therefore, all the formulas are consistent if and only if~\eqref{eq:appFakeFlatUnadjusted} holds in the unadjusted case or~\eqref{eq:appFakeFlatAdjusted} holds in the adjusted case. In particular, the cocycles glue correctly and (higher) gauge transformations compose.
        
        \section{Helpful lightening reviews}
        
        Below, we provide concise reviews of three important topics relevant to our discussion, to provide a more self-contained account: Yang--Mills theory on reductive homogeneous spaces, group structures on manifolds (including string structures), and central group extensions.
        
        \subsection{Yang--Mills theory on reductive homogeneous spaces}\label{app:YMCoset}
        
        To support the discussion in \cref{sec:Spin4Bundle}, we briefly review the construction of the canonical connection on a reductive homogeneous space, see also~\cite[Section II.11]{Kobayashi:1963}. This construction is particularly appealing, as it results in solutions to the Yang--Mills equations~\cite{Trautman:1977im,Harnad:1980ct} that are, in many cases, Bogomolny--Prasad--Sommerfield (BPS) solutions~\cite{Trautman:1977im,Bais:1985ns}.\footnote{For a recent discussion of instantons on coset spaces, see also~\cite{Ivanova:2009yi}.} 
        
        \paragraph{Reductive coset spaces.}
        Let $\sfG$ be a Lie group, $\sfH$ a closed Lie subgroup of $\sfG$, and let $\frg$ and $\frh$ be the corresponding Lie algebras. We further assume that $\sfG/\sfH$ is a \uline{reductive homogeneous space}, that is, there is a decomposition
        \begin{equation}\label{eq:reductiveCosetDecomposition}
            \frg\ \cong\ \frh\oplus\frm
        \end{equation}
        with $\frm$ being $\Ad(\sfH)$-invariant. The latter condition implies that $[\frh,\frm]\subseteq\frm$, and the converse is true if $\sfH$ is connected. Altogether, we have the relations
        \begin{equation}\label{eq:reductiveCosetConditions}
            [\frh,\frh]\ \subseteq\ \frh~,
            \quad
            [\frh,\frm]\ \subseteq\ \frm~,
            \eand
            [\frm,\frm]\ \subseteq\ \frh\oplus\frm~.
        \end{equation}
        We denote the Cartan--Killing form on $\frg$ by $\inner{-}{-}$ and assume that the decomposition~\eqref{eq:reductiveCosetDecomposition} is orthogonal with respect to $\inner{-}{-}$.
        
        Let $\theta$ be the (left-invariant) Maurer--Cartan form on $\sfG$. Because of the decomposition~\eqref{eq:reductiveCosetDecomposition}, we may write
        \begin{equation}
            \theta\ =\ A+m
            \ewith
            A\ \in\ \Omega^1(\sfG,\frh)
            \eand
            m\ \in\ \Omega^1(\sfG,\frm)~.
        \end{equation}
        Under the right $\sfH$-action $g\mapsto gh$ on $\sfG$ for all $g\in\sfG$ and for all $h\in\sfH$, we have
        \begin{equation}
            A\ \mapsto\ h^{-1}Ah+h^{-1}\rmd h
            \eand
            m\ \mapsto\ h^{-1}mh~.
        \end{equation}
        Furthermore, the Maurer--Cartan equation $\rmd\theta+\frac12[\theta,\theta]=0$ then decomposes as
        \begin{equation}\label{eq:decompositionMCFormReductive}
            \rmd A+\tfrac12[A,A]\ =\ -\tfrac12[m,m]_\frh
            \eand
            \nabla_Am\ =\ -\tfrac12[m,m]_\frm~,
        \end{equation}
        where $\rmd$ denotes the exterior derivative on $\sfG$ and the subscript indicates the projection onto the respective subspaces.
        
        \paragraph{Yang--Mills connection.}
        Recall that a principal $\sfG$-bundle $\pi:P\rightarrow X$ over some manifold $X$ can conveniently be described subordinate to itself, because the pullback bundle $\pi^*P\coloneqq P\times_XP=P^{[2]}$ in
        \begin{equation}
            \begin{tikzcd}
                P\times_X P\arrow[r]\arrow[d] & P\arrow[d,"\pi"]
                \\
                P\arrow[r,"\pi"] & X
            \end{tikzcd}
        \end{equation}
        is trivial. In particular, there is a global section $P\rightarrow P^{[2]}$ given by $p\mapsto(p,p)$ for all $p\in P$. Hence, $P^{[2]}\cong P\times\sfG$ and, more generally, $P^{[n]}\cong P\times\sfG^{n-1}$. The \v Cech cocycles subordinate to the cover $\pi$ are then given by maps $(p_1,p_2)\mapsto g_{p_1p_2}$ for all $(p_1,p_2)\in P^{[2]}$, where $g_{p_1p_2}\in \sfG$ is uniquely defined by
        \begin{equation}
            p_2\ \eqqcolon\ p_1g_{p_1p_2}~,
        \end{equation}
        and the cocycle condition $g_{p_1p_2}g_{p_2p_3}=g_{p_1p_3}$ is automatically satisfied for all $(p_1,p_2,p_3)\in P^{[3]}$. In the case of the principal $\sfH$-bundle $\sfG\rightarrow\sfG/\sfH$, this yields the cocycle
        \begin{equation}
            \begin{aligned}
                g\,:\,\sfG^{[2]}\ &\rightarrow\ \sfH~,
                \\
                (g_1,g_2)\ &\mapsto\ h_{g_1g_2}\ \coloneqq\ g_1^{-1}g_2~.
            \end{aligned}
        \end{equation}
        
        Let $\{U_i\}_{i\in I}$ be a collection of open subsets of $X$ such that $X=\bigcup_{i\in I}U_i$ and $U_i$, $U_i\cap U_j$, $U_i\cap U_j\cap U_k$, etc for all $i,j,k,\ldots\in I$ are contractible. We shall work with respect to the cover $Y\coloneqq\bigsqcup_{i\in I}U_i$ of $X$. Furthermore, let $s_i:U_i\rightarrow\sfG$ be a local section. Using~\eqref{eq:decompositionMCFormReductive}, we define the 1-forms
        \begin{equation}\label{eq:canonicalConnection}
            A_{s_i}\ \coloneqq\ s_i^*A\ \in\ \Omega^1(U_i,\frh)
            \eand
            m_{s_i}\ \coloneqq\ s_i^*m\ \in\ \Omega^1(U_i,\frm)~,
        \end{equation}
        on $Y_i$, and they satisfy
        \begin{equation}\label{eq:decompositionMCFormReductive2}
            \rmd A_{s_i}+\tfrac12[A_{s_i},A_{s_i}]\ =\ -\tfrac12[m_{s_i},m_{s_i}]_\frh
            \eand
            \nabla_{A_{s_i}}m_{s_i}\ =\ -\tfrac12[m_{s_i},m_{s_i}]_\frm~,
        \end{equation}
        where now $\rmd$ denotes the exterior derivative on $Y_i$. Next, let $\star$ be the Hodge star operator with respect to the metric on $\sfG/\sfH$ that is given by the pullback via ${s_i}$ of the left-invariant metric on $\sfG$ which is induced by the Cartan--Killing form $\inner{-}{-}$. It then follows that
        ~\cite{Trautman:1977im,Harnad:1980ct}
        \begin{equation}\label{eq:canonicalCurvature}
            \nabla_{A_{s_i}}{\star}F_{s_i}\ =\ 0
            \ewith
            F_{s_i}\ \coloneqq\ \rmd A_{s_i}+\tfrac12[A_{s_i},A_{s_i}]~.
        \end{equation}
        To show this, one makes use of the Jacobi identity and the relations~\eqref{eq:reductiveCosetConditions} and~\eqref{eq:decompositionMCFormReductive2}. Hence, $A_{s_i}$ is a Yang--Mills gauge potential on $U_i$ for the gauge group $\sfH$. 
        
        Finally, let $(g_1,g_2)\in\sfG^{[2]}$ and suppose that $s_{1,2}:U_{1,2}\rightarrow\sfG$ are two local sections such that $(s_{1,2}\circ\pi)(g_{1,2})=g_{1,2}$ and $s_2=s_1h_{g_1g_2}$. Therefore, with $A_{s_{1,2}}\coloneqq s_{1,2}^*A\in\Omega^1(U_{1,2},\frh)$, we obtain
        \begin{equation}
            A_{s_2}\ =\ h^{-1}_{g_1g_2}A_{s_1}h_{g_1g_2}+h^{-1}_{g_1g_2}\rmd h_{g_1g_2}~.
        \end{equation}
        
        \subsection{Group structures on principal bundles}\label{app:group_structures}
        
        \paragraph{$\sfG$-structures on $\sfH$-bundles.} 
        Consider a principal $\sfH$-bundle $P$ over a manifold $X$. A \uline{reduction/lift of the structure group from $\sfH$ to $\sfG$} is a monomorphism/epimorphism of Lie groups $\phi:\sfG\rightarrow\sfH$, a principal $\sfG$-bundle $Q$ over $X$, and a bundle isomorphism $\phi_Q:Q\times_\sfG\sfH\xrightarrow{\cong}P$.
        
        If we consider a smooth vector bundle $E\rightarrow X$ of rank $n$, then its frame bundle is a principal $\sfGL(n)$-bundle over $X$. A \uline{$\sfG$-structure} on $E$ is then a reduction of the structure group from $\sfGL(n)$ to $\sfG$ or the reduction from a $\sfG$-structure to another $\sfG$-structure.
        
        Well-known examples of $\sfG$-structures are obtained by considering the tangent bundle $E=TX$. An \uline{orientation} on $X$ is the $\sfG$-structure arising from a reduction of the structure group from $\sfGL(n)$ to $\sfGL^+(n)$ with $n=\dim(X)$. A \uline{Riemannian metric} corresponds to a reduction of the structure group from $\sfGL(n)$ to $\sfO(n)$, which is always possible. A \uline{volume form} on a Riemannian manifold corresponds to a further reduction of the structure group from $\sfO(n)$ to $\sfSO(n)$. Finally, a \uline{spin structure} is a lift of the structure group from $\sfSO(n)$ to $\sfSpin(n)$. The obstructions to a manifold being orientable and carrying a volume form is governed by the first Stiefel--Whitney class, whilst the obstruction to a manifold being spin is governed by the second Stiefel--Whitney class. 
        
        \paragraph{String structures.} 
        If one is interested in the classical dynamics of strings, it is reasonable to consider the free loop space $LX$ of a manifold $X$. It can now be argued that $LX$ should be considered orientable when $X$ is a spin manifold~\cite{Witten:1987cg}; see also the discussions in~\cite{Killingback:1986rd,mclaughlin1992orientation}. As then shown in~\cite{mclaughlin1992orientation}, the loop space $LX$ is spin, if $X$ is \uline{string}, that is, a lift of the structure group of the frame bundle from $\sfSpin(n)$ to $\sfString(n)$, the \uline{string group}. 
        
        Specifically, $\sfString(n)$ is a topological group that fits into the Whitehead tower
        \begin{equation}\label{eq:whitehead_tower}
            \cdots\ \longrightarrow\ \underbrace{\sfString(n)}_{\eqqcolon\,W_3}\ \longrightarrow\ \underbrace{\sfSpin(n)}_{\eqqcolon\,W_2}\ \longrightarrow\ \underbrace{\sfSpin(n)}_{\eqqcolon\,W_1}\ \longrightarrow\ \underbrace{\sfSO(n)}_{\eqqcolon\,W_0}\ \longrightarrow\ \underbrace{\sfO(n)}_{\eqqcolon\,W}~,
        \end{equation}
        where the group homomorphisms are isomorphisms on all but the lowest homotopy groups: $\pi_i(W_j)=0$ for $i\leq j$ and the homotopy groups $\pi_i(W_j)$ for all $i>j$ are all isomorphic. The string group $\sfString(n)$ is thus a 3-connected cover of $\sfSpin(n)$~\cite{Stolz:2004aa}. It should be emphasised that the string group is not uniquely defined via the sequence~\eqref{eq:whitehead_tower}. In fact, it is most conveniently described by an equivalent categorified or 2-group $\sfString(n)$, cf.~\cite{Nikolaus:2011zg} and references therein. A string structure is then simply a topological principal $\sfString(n)$-bundle. In our cases, we shall work with a strict, but infinite-dimensional Lie 2-group $\sfString(n)$; for the complications encountered when working with finite-dimensional models, see e.g.~\cite{Demessie:2016ieh}.
        
        We note that the obstruction for a $\sfSpin(n)$-bundle $P$ to allow for a lift of the structure group to $\sfString(n)$, and thus a string structure, is (for a simply connected manifold and $n\geq 5$) precisely $\tfrac12p_1(P)$ with $p_1(P)$ the first Pontryagin class of $P$~\cite{mclaughlin1992orientation}. We recall from~\cite[Section 5]{Stolz:2004aa} that if a principal bundle $P$ admits a string structure, then the possible choices form a torsor for the cohomology group $H^3(X,\IZ)$. Furthermore,~\cite{Waldorf:2009uf} has shown that isomorphism classes of string structures on a principal bundle $P$ are in bijection with isomorphism classes of gerbes trivialising the Chern--Simons 2-gerbe associated to $P$ by the construction of~\cite{Carey:2004xt}.
        
        \subsection{Central extension and cocycles}\label{app:centralExtensions}
        
        \paragraph{Abstract setting.}
        A \uline{central extension} $\sfE$ of a group $\sfG$ by an Abelian group $\sfA$ is a short exact sequence of groups
        \begin{equation}\label{eq:centralExtension}
            \unit\ \longrightarrow\ \sfA\ \overset{\iota}{\longrightarrow}\ \sfE\ \overset{\pi}{\longrightarrow}\ \sfG\ \longrightarrow\ \unit
        \end{equation}
        such that $\im(\iota)$ is contained in the centre of $\sfE$. Let now $s:\sfG\rightarrow\sfE$ be a section of $\pi:\sfE\rightarrow\sfG$, that is, $\pi\circ s=\id$. Then, for all $g_{1,2}\in\sfG$,
        \begin{equation}
            \pi(s(g_1)s(g_2))\ =\ \pi(s(g_1))\pi(s(g_2))\ =\ \pi(s(g_1g_2))~,
        \end{equation}
        and we parametrise the failure of $s$ to be a group homomorphism by a map $c:\sfG\times\sfG\rightarrow\sfA$ according to
        \begin{equation}\label{eq:cocycle}
            s(g_1g_2)\iota(c(g_1,g_2))\ \coloneqq\ s(g_1)s(g_2)~.
        \end{equation}
        It then follows that $c$ satisfies the condition
        \begin{equation}\label{eq:cocycleConditionapp}
            c(g_1,g_2)c(g_1g_2,g_3)\ =\ c(g_1,g_2g_3)c(g_2,g_3)
        \end{equation}
        for all $g_{1,2,3}\in\sfG$. This condition, when evaluated for $(g_1,g_2,g_3)=(g,\unit,g)$ and $(g_1,g_2,g_3)=(\unit,\unit,g)$ for $g\in\sfG$, yields
        \begin{equation}\label{eq:normalisation1}
            c(\unit,\unit)\ =\ c(\unit,g)\ =\ c(g,\unit)~.
        \end{equation}
        Furthermore, we have a bijection of sets $\phi:\sfG\times\sfA\to\sfE$ defined by $\phi(g,a)\coloneqq s(g)\iota(a)$ for all $(g,a)\in\sfG\times\sfA$. It then straightforwardly follows that $\phi(g_1,a_1)\phi(g_1,a_1)=s(g_1g_2)\iota(a_1a_2c(g_1,g_2))$ and hence, we obtain a product on $\sfG\times\sfA$ given by
        \begin{equation}\label{eq:product}
            (g_1,a_1)(g_2,a_2)\ \coloneqq\ (g_1g_2,a_1a_2c(g_1,g_2))
        \end{equation}
        for all $(g_{1,2},a_{1,2})\in\sfG\times\sfA$. This product is associative if and only if~\eqref{eq:cocycleConditionapp} holds. Evidently, the product~\eqref{eq:product} makes $\phi$ an isomorphism of groups. Moreover, note that $(\unit,(c(\unit,\unit))^{-1})$ is the neutral element with respect to~\eqref{eq:product}, and the inverse of an element $(g,a)\in\sfG\times\sfA$ is $(g,a)^{-1}=(g^{-1},(ac(\unit,\unit)c(g,g^{-1}))^{-1})$.
        
        If now $\tilde s$ is another section of $\sfE$, from $\pi\circ\tilde s=\id=\pi\circ s$ it then immediately follows that $\tilde s(g)=s(g)\iota(d(g))$ for all $g\in\sfG$ and where $d:\sfG\rightarrow\sfA$. In turn, this yields
        \begin{equation}\label{eq:coboundaryCondition}
            \tilde c(g_1,g_2)\ =\ c(g_1,g_2)(d(g_1g_2))^{-1}d(g_1)d(g_2) 
        \end{equation}
        for all $g_{1,2}\in\sfG$. Using~\eqref{eq:coboundaryCondition},~\eqref{eq:normalisation1} then implies that without loss of generality we can always normalise $c$ so that
        \begin{equation}\label{eq:normalisation2}
            c(\unit,\unit)\ =\ c(\unit,g)\ =\ c(g,\unit)\ =\ \unit
        \end{equation}
        for all $g\in\sfG$.
        
        Consider now the nerve of the groupoid $\sfB\sfG$. That is, let $\sfG^p\coloneqq\sfG\times\cdots\times\sfG$ ($p$-copies) with $\sfG^0\coloneqq\unit$ the trivial group. We define (face) maps $m_i:\sfG^{p}\rightarrow\sfG^{p-1}$ for $i=1,\ldots,p+1$ by
        \begin{equation}
            m_i(g_1,\ldots,g_{p})\ \coloneqq\ 
            \begin{cases}
                (g_2,\ldots,g_{p}) &\efor i=1
                \\
                (g_1,\ldots,g_{i-2},g_{i-1}g_i,g_{i+1},\ldots,g_{p}) &\efor i=2,\ldots,p
                \\
                (g_1,\ldots,g_{p-1}) &\efor i=p+1
            \end{cases}
        \end{equation}
        for all $g_{1,\ldots,p}\in\sfG$. This allows us to introduce the map $\delta:\sfMap(\sfG^p,\sfA)\rightarrow\sfMap(\sfG^{p+1},\sfA)$ by
        \begin{equation}
            \delta(c)(g_1,\ldots,g_{p+1}) \coloneqq \prod_{i=0}^{\left\lceil p/2\right\rceil}(c\circ m_{2i+1})(g_1,\ldots,g_{p+1})((c\circ m_{2i+2})(g_1,\ldots,g_{p+1}))^{-1}
        \end{equation}
        for all $g_{1,\ldots,p+1}\in\sfG$, and it is easy to see that this is a differential yielding the complex
        \begin{equation}\label{eq:extensionComplex}
            \underbrace{\sfMap(\sfG^0,\sfA)}_{\cong\,\sfA}\ \overset{\delta}{\longrightarrow}\ \sfMap(\sfG^1,\sfA)\ \overset{\delta}{\longrightarrow}\ \sfMap(\sfG^2,\sfA)\ \overset{\delta}{\longrightarrow}\ \cdots~.
        \end{equation}
        Furthermore, for $c\in\sfMap(\sfG^2,\sfA)$, we have that~\eqref{eq:cocycleConditionapp} is equivalent to $\delta(c)=\unit$ and for $d\in\sfMap(\sfG^1,\sfA)$, the condition~\eqref{eq:coboundaryCondition} can be written as $\tilde c(g_1,g_2)=c(g_1,g_2)\delta(d)(g_1,g_2)$ for all $g_{1,2}\in\sfG$. In summary, central extensions are classified by the second cohomology group $H^2(\sfG,\sfA)$ associated with the complex~\eqref{eq:extensionComplex}. In this context, we shall refer to~\eqref{eq:cocycleConditionapp} as a \uline{group cocycle condition} and to~\eqref{eq:coboundaryCondition} as a \uline{group coboundary condition}, respectively. 
        
        Likewise, the first cohomology group $H^1(\sfG,\sfA)$ can be interpreted in terms of the automorphisms of the central extension. Indeed, we say that two automorphisms of the central extension~\eqref{eq:centralExtension} are equivalent whenever they differ by the conjugation automorphism induced by an element of $\sfA$. Then, it can be shown that $H^1(\sfG,\sfA)$ acts transitively on the set of all automorphisms of the central extension~\eqref{eq:centralExtension} modulo this equivalence. See e.g.~\cite{Azcarraga:2011hqa} and in particular~\cite[Section 4]{Mickelsson:1989hp} for more details.
        
        \begin{remark}\label{rmk:normalisation}
            We shall always assume that the cocycles in $H^2(\sfG,\sfA)$ are normalised such that~\eqref{eq:normalisation2} holds. This means that the neutral element with respect to the product~\eqref{eq:product} is $(\unit,\unit)$ and the inverse of an element $(g,a)\in\sfG\times\sfA$ is given by $(g,a)^{-1}=(g^{-1},(ac(g,g^{-1}))^{-1})$. Note that the cocycle condition~\eqref{eq:cocycleConditionapp} then also implies that 
            \begin{equation}\label{eq:conjugation}
                (g_1,a_1)^{-1}(g_2,a_2)(g_1,a_1)\ =\ (g^{-1}_1g_2g_1,a_2c(g_1^{-1}g_2,g_1)(c(g_1,g_1^{-1}g_2))^{-1})
            \end{equation}
            for all $(g_{1,2},a_{1,2})\in\sfG\times\sfA$.
        \end{remark}
        
        \paragraph{Fr{\'e}chet--Lie groups.}
        So far, we have not specified the type of groups for which we wish to discuss central extensions. As we are mainly concerned with Lie groups, not necessarily finite-dimensional, let now $\sfG$ be a Fr{\'e}chet--Lie group and $\sfA$ an Abelian Fr{\'e}chet--Lie group. We are interested in central extensions~\eqref{eq:centralExtension} for which $\sfE$ carries a Fr{\'e}chet--Lie group structure with respect to the product~\eqref{eq:product} and $\pi:\sfE\rightarrow\sfG$ is a principal $\sfA$-bundle. If, in addition, $\sfG$ is also connected, then one can show that this is the case if and only if the cocycle $c\in H^2(\sfG,\sfA)$ is smooth in a neighbourhood of $(\unit,\unit)\in\sfG\times\sfG$. See~\cite[Prop.~3.11]{Tuynman:1987ij} for the finite-dimensional case and~\cite{neeb2002central} for the infinite-dimensional case.
        
        \paragraph{Principal $\sfU(1)$-bundle.} Let us now specialise to $\sfA=\sfU(1)$. A central extension of Fr{\'e}chet--Lie groups~\eqref{eq:centralExtension} gives rise to a principal circle bundle. We shall develop the relation between the group cocycle and the corresponding \v Cech cocycle $t\in \check H^2(\sfG,\IZ)$ in the following.\footnote{We thank Paul Skerritt for discussions on these constructions.} Let $s:\sfG\rightarrow\sfE$ be a section that is smooth in an open neighbourhood $U$ of $\unit\in\sfG$. Then, the associated cocycle characterising the extension is given by~\eqref{eq:cocycle}, and it   is smooth in the open neighbourhood $U\times U$ of $(\unit,\unit)\in\sfG\times\sfG$. We can construct an open cover $\{U_g\}_{g\in\sfG}$ of $\sfG$ from the patches $U_g\coloneqq L_g(U)$, where $L$ denotes left-multiplication. On each patch $U_g$, we consider the section $\sigma_g:U_g\rightarrow\pi^{-1}(U_g)$ of $\pi:\sfE\rightarrow\sfG$ defined by 
        \begin{equation}\label{eq:locSec}
            \sigma_g\ \coloneqq\ L_{s(g)}\circ s\circ L_{g^{-1}}~.
        \end{equation}
        Explicitly, $\sigma_g(h)=s(g)(s(g^{-1}h))=s(h)\iota(c(g,g^{-1}h))$ for all $h\in U_g$. From the smoothness of $s$ on $U$ it follows that $\sigma_g$ is smooth on $U_g$. On non-empty intersections of patches we introduce smooth maps $t_{g_1g_2}:U_{g_1}\cap U_{g_2}\rightarrow\sfU(1)$ by 
        \begin{equation}
            \sigma_{g_1}(h)(\sigma_{g_2}(h))^{-1}\ =\ \iota(c(g_1,g_1^{-1}h)(c(g_2,g_2^{-1}h))^{-1})\ \eqqcolon\ \iota(t_{g_1g_2}(h))
        \end{equation}
        for all $h\in U_{g_1}\cap U_{g_2}$.\footnote{Note that $U_g\ni h\mapsto c(g,g^{-1}h)\in\sfU(1)$ is not necessarily smooth since $c$ is only smooth on a neighbourhood of $(\unit,\unit)\in\sfG\times\sfG$.} Evidently, for all $g_{1,2,3}\in\sfG$ with $U_{g_1}\cap U_{g_2}\cap U_{g_3}\neq\emptyset$ we obtain
        \begin{equation}
            t_{g_1g_2}(h)t_{g_2g_3}(h)\ =\ t_{g_1g_3}(h)
        \end{equation}
        for all $h\in U_{g_1}\cap U_{g_2}\cap U_{g_3}$. That is, the maps $t_{g_1g_2}:U_{g_1}\cap U_{g_2}\rightarrow\sfU(1)$ satisfy a {\v C}ech cocycle condition. Similarly, we introduce smooth maps $t_g:U_g\rightarrow\sfU(1)$ given by $\iota(t_g(h))\coloneqq d(g)d(g^{-1}h)$ for all $h\in U_g$, and the group coboundary transformation~\eqref{eq:coboundaryCondition} yields
        \begin{equation}
            \tilde t_{g_1g_2}(h)\ =\ t_{g_1}(h)t_{g_1g_2}(h)(t_{g_2}(h))^{-1}~,
        \end{equation}
        that is, {\v C}ech coboundary transformation. Altogether, the central extension is characterised by an element of the {\v C}ech cohomology group $\check H^1(\{U_g\}_{g\in\sfG},\sfU(1))$ and, upon taking the direct limit, of $\check H^1(\sfG,\sfU(1))\cong\check H^2(\sfG,\IZ)$. It is important to realise, however, that the map $H^2(\sfG,\sfU(1))\rightarrow\check H^2(\sfG,\IZ)$ thus constructed is not a bijection but rather a many-to-one map. Indeed, the {\v C}ech cohomology only knows about the bundle structure of $\sfE\rightarrow\sfG$ but not about the group structure. In fact, one can have non-trivial central extensions which admit global smooth sections but which are not morphisms of groups.  
        
        \section{Proofs for \texorpdfstring{\cref{sec:stringBundlesWithConnection}}{Section 4}}\label{app:proofs}
        
        \paragraph{A formula for $(\Ad_g)^* \omega$.}
        Consider the 2-form curvature $\omega$ defined in~\eqref{eq:2FormCurvatureKacMoody} and the left-invariant 1-form $\xi_g$ for fixed $g\in\sfG$ defined in~\eqref{eq:Baez1Form}. Then 
        \begin{equation}
            (\Ad_g)^* \omega\ =\ \omega+\rmd\xi_g~,
        \end{equation}
        cf.~\cite[Equation (4.6.6)]{Pressley:1988qk}. In order to prove this relation, it is sufficient to establish this at the identity $\unit\in L_0\sfG$ due to left-invariance. For any two Lie algebra elements $U,V\in L_0\frg$ we have
        \begin{equation}
            \begin{aligned}
                \omega_\unit(\Ad_gU,\Ad_gV)\ &=\ \frac{\rmi}{2\pi}\int_0^1\rmd r\innerLarge{\Ad_{g(r)}U(r)}{\parder[\Ad_{g(r)}V(r)]{r}}
                \\ 
                &=\ \frac{\rmi}{2\pi}\int_0^1\rmd r\innerLarge{\Ad_{g(r)}U(r)}{\Ad_{g(r)}\left(\parder[V(r)]{r}+\left[g^{-1}(r)\parder[g(r)]{r},V(r)\right]\right)}.
            \end{aligned}
        \end{equation}
        Using the $\Ad$-invariance of the inner product and Cartan's formula $(\rmd\xi_g|_\unit)(U,V)=-\xi_g|_\unit([U,V])$, we see that
        \begin{equation}
            \begin{aligned}
                \omega_\unit(\Ad_gU,\Ad_gV)\ &=\ \frac{\rmi}{2\pi}\int_0^1\rmd r\innerLarge{U(r)}{\parder[V(r)]{r}}-\frac{\rmi}{2\pi}\int_0^1\rmd r\innerLarge{[U(r),V(r)]}{g^{-1}(r)\parder[g(r)]{r}}
                \\
                &=\ \omega_\unit(U,V)+(\rmd\xi_g|_e)(U,V)~.
            \end{aligned}
        \end{equation}
        This verifies the desired formula.
        
        \paragraph{Peiffer identity.}
        Next, we wish to show that $(\sft(\hat h_1)\acton\hat h_2)\hat h_1\hat h_2^{-1}\hat h_1^{-1}\in\sfN$ for all $\hat h_{1,2}\in P_0L_0\sfG\times\sfU(1)$. Note that for all $\hat n_{1,2}\in\sfN$, we have 
        \begin{equation}
            \begin{aligned}
                &(\sft(\hat h_1\hat n_1)\acton(\hat h_2\hat n_2))\hat h_1\hat n_1(\hat h_2\hat n_2)^{-1}(\hat h_1\hat n_1)^{-1}
                \\
                &\kern1cm=\ (\sft(\hat h_1)\acton(\hat h_2\hat n_2))\hat h_1\hat n_1\hat h_2^{-1}\hat n_2^{-1}\hat n_1^{-1}\hat h_1^{-1} 
                \\
                &\kern1cm=\ (\sft(\hat h_1)\acton\hat h_2)(\sft(\hat h_1)\acton\hat n_2)\hat h_1\hat n_1\hat h_2^{-1}\hat n_2^{-1}\hat n_1^{-1}\hat h_1^{-1}
                \\
                &\kern1cm=\ (\sft(\hat h_1)\acton\hat h_2)\hat h_1\hat h_2^{-1}\hat h_1^{-1}
                \\
                &\kern2cm\times\underbrace{\hat h_1\hat h_2\hat h_1^{-1}(\sft(\hat h_1)\acton\hat n_2)\hat h_1\hat h_2^{-1}\hat h_1^{-1}}_{\in\,\sfN}
                \\
                &\kern2cm\times\underbrace{\hat h_1\hat h_2\hat n_1\hat h_2^{-1}\hat n_2^{-1}\hat n_1^{-1}\hat h_1^{-1}}_{\in\,\sfN}~,
            \end{aligned}
        \end{equation}
        where we have used~\eqref{eq:tMomorphismString2Group}, the normality of~$\sfN$, and the fact that the action closes on~$\sfN$. Consequently, $(\sft(\hat h_1)\acton\hat h_2)\hat h_1\hat h_2^{-1}\hat h_1^{-1}\in\sfN$ implies the Peiffer identity $\sft([\hat h_1])\acton[\hat h_2]=[\hat h_1][\hat h_2][\hat h_1]^{-1}$ for all $[\hat h_1],[\hat h_2]\in\big(P_0L_0\sfG\times\sfU(1)\big)/\sfN$ for the morphism~\eqref{eq:tMomorphismString2Group} and the action~\eqref{eq:actionOfP0G}. 
        
        To verify that indeed $(\sft(\hat h_1)\acton\hat h_2)\hat h_1\hat h_2^{-1}\hat h_1^{-1}\in\sfN$, let us set $\hat h_{1,2}=(f_{1,2},z_{1,2})$. Then, using~\eqref{eq:conjugation} together with~\eqref{eq:tMomorphismString2Group} and~\eqref{eq:actionOfP0G}, we find
        \begin{subequations}
            \begin{equation}
                f\ \coloneqq\ \boundary(f_1)f_2\boundary(f_1^{-1})f_1f_2^{-1}f_1^{-1}
            \end{equation}
            for the $P_0L_0\sfG$-component of $(\sft(\hat h_1)\acton\hat h_2)\hat h_1\hat h_2^{-1}\hat h_1^{-1}$. Evidently, $\boundary(f)=\unit\in P_0L_0\sfG$ and so $f\in L_0L_0\sfG$. Likewise, again using~\eqref{eq:conjugation} together with~\eqref{eq:tMomorphismString2Group} and~\eqref{eq:actionOfP0G}, the $\sfU(1)$-component of $(\sft(\hat h_1)\acton\hat h_2)\hat h_1\hat h_2^{-1}\hat h_1^{-1}$ is
            \begin{equation}\label{eq:peiffer_proof}
                \begin{aligned}
                    z\ &\coloneqq\ c(\boundary(f_1)f_2\boundary(f_1^{-1}),f_1f_2^{-1}f_1^{-1})\,c^{-1}(f_1f_2,f_1^{-1})
                    \\ 
                    &\kern1cm\times c(f_1^{-1},f_1f_2)\,c^{-1}(f_1f_2f_1^{-1},f_1f_2^{-1}f_1^{-1})\,\exp\left(\int_0^1\rmd t\,\xi_{\boundary(f_1)}\left(f_2^{-1}\parder[f_2]{t}\right)\right)
                \end{aligned}
            \end{equation}
            with $\xi_g$ defined in~\eqref{eq:Baez1Form}.
        \end{subequations}
        Therefore, it remains to show that
        \begin{equation}
            \sfhol^{-1}(f)\ =\ z~.
        \end{equation}
        To this end, recall that the cocycle is given as an integral over a disk with a particular boundary~\eqref{eq:triangle}. In the case of $c(\boundary(f_1)f_2\boundary(f_1^{-1}),f_1f_2^{-1}f_1^{-1})$, this boundary is, in fact, formed by two loops joined at the identity. The integral over one of the two disks produces exactly $\sfhol^{-1}(f)$. We can combine the second integral with the remaining cocycle terms in~\eqref{eq:peiffer_proof} by matching the boundaries and using~\eqref{eq:holonomy_shift}. Because $\omega$ defined in~\eqref{eq:2FormCurvatureKacMoody} is closed, we now use Stokes' theorem, and we are left with an integral along a square $\Box$ whose sides are given by 
        \begin{equation}
            \left(\boundary(f_1)f_1^{-1},\boundary(f_1)\overline{f_2},\boundary(f_1f_2)\overline{f_1^{-1}},\boundary(f_1)f_2\boundary(f_1^{-1})\right)
        \end{equation}
        with the orientation from right to left. We can then parametrise this square as
        \begin{equation}
            (s,t)\ \mapsto\ (\boundary(f_1))(r)f_2(s,r)f_1^{-1}(1-t,r)
            \eforall
            s,t\ \in\ [0,1]~,
        \end{equation}
        where $r$ is the loop parameter, to obtain
        \begin{equation}
            \int_\Box\omega\ =\ \frac{\rmi}{2\pi}\int_0^1\rmd r\int_0^1\rmd s\innerLarge{f_2^{-1}(s,r)\parder[f_2(s,r)]{s}}{\boundary(f_1^{-1}(s,r))\parder[\,\boundary(f_1(s,r))]{r}}.
        \end{equation}
        Here, we have used the left-invariance of $\omega$ to remove $\boundary(f_1)$ in the square parametrisation and~\eqref{eq:helpful_identity} to perform the $t$ integral. Upon exponentiating this expression, we see that this term precisely cancels the last factor in~\eqref{eq:peiffer_proof}.
        
        \paragraph{Infinitesimal holonomy.}
        Let $\gamma\in L_0L_0\frg$. For $\exp(t\gamma)\in L_0L_0\sfG$ with $t$ sufficiently small, we can write the holonomy as
        \begin{equation}\label{eq:holonomy}
            \sfhol(\exp(t\gamma))\ =\ \exp\left(-\int_{D_{\exp(t\gamma)}}\omega\right)\ =\ \exp\left(-\int_{\tilde D_{t\gamma}}\tilde\omega\right),
        \end{equation}
        where $\tilde D_{t\gamma}$ is a disk in $L_0\frg$ with boundary $t\gamma$ such that $\exp(\tilde D_{t\gamma})=D_{\exp t\gamma}$, and 
        \begin{equation}
            \tilde\omega\ \coloneqq\ \exp^*\omega\ =\ \frac{\rmi}{4\pi}\int_0^1\rmd r\innerLarge{\theta_{V(r)}}{\parder[\theta_{V(r)}]{r}}\ \in\ \Omega^2(L_0\frg)~,
        \end{equation}
        where
        \begin{equation}
            \theta_V\ \coloneqq\ \frac{1-\rme^{-\ad_V}}{\ad_V}\rmd V
        \end{equation}
        and $V$ the identity function $V\mapsto V$ on $L_0\frg$. If we now expand~\eqref{eq:holonomy} in $t$ to the first non-vanishing order, we find that
        \begin{equation}
            \begin{aligned}
                \sfhol(\exp(t\gamma))\ &=\ 1-\frac{\rmi}{4\pi}\int_{\tilde D_{t\gamma}}\int_0^1\rmd r\innerLarge{V(r)}{\parder[V(r)]{r}}+\caO(t^3)
                \\
                &=\ 1+\frac{\rmi}{4\pi}\int_{t\gamma}\int_0^1\rmd r\innerLarge{\rmd V(r)}{\parder[V(r)]{r}}+\caO(t^3)
                \\
                &=\ 1+\frac{\rmi t^2}{4\pi}\int_0^1\rmd s\int_0^1\rmd r\innerLarge{\parder[\gamma(s,r)]{s}}{\parder[\gamma(s,r)]{r}}+\caO(t^3)~.
            \end{aligned} 
        \end{equation}
        In the second step, we have used Stokes' theorem and then parametrised the path in $L_0\frg$ as $s\mapsto t\gamma(s) $. As a corollary, we immediately obtain
        \begin{equation}\label{eq:holder}
            \dder{t}\bigg|_0\sfhol(\exp(t\gamma))\ =\ 0~,
        \end{equation}
        which is~\eqref{eq:derhol}.\footnote{A more general statement of this result can be found as Lemma 7.1 in~\cite{Mackaay:2000ac} or in~\cite{Barrett:1991aj}.}
        
        Finally, utilising the corollary we derive a useful formula. Consider a path $ \ell $ in $ L_0 L_0 \sfG $ parametrised by $ t $ and beginning at $ \ell_0$. Then, we have 
        \begin{equation}\label{eq:holderhol}
            \begin{aligned}
                &\sfhol(\ell_0)\dder{t}\bigg|_0\sfhol^{-1}(\ell(t)) 
                \\
                &\kern.5cm=\ -\frac{\rmi}{2\pi}\int_0^1\rmd s\int_0^1\rmd r\innerLarge{\parder{s}\left(\dder{t}\bigg|_0\ell(t,s,r)\ell_0^{-1}(s,r)\right)}{\parder[\ell_0(s,r)]{r}\ell_0^{-1}(s,r)}
                \\
                &\kern.5cm=\ -\frac{\rmi}{2\pi}\int_0^1\rmd s\int_0^1\rmd r\innerLarge{\ell_0^{-1}(s,r)\parder[\ell_0(s,r)]{s}}{\parder{r}\left(\ell_0^{-1}(s,r)\dder{t}\bigg|_0\ell(t,s,r)\right)}.
            \end{aligned}
        \end{equation}
        This formula follows from
        \begin{equation}
            \dder{t}\bigg|_0\big(\sfhol(\ell_0)\,\sfhol^{-1}(\ell(t))\,c^{-1}(\ell_0,\ell_0^{-1})\,c(\ell_0^{-1},\ell(t))\big)\ =\ \dder{t}\bigg|_0\sfhol^{-1}(\ell_0^{-1}\ell(t))\ =\ 0~,
        \end{equation}
        where the first equality is obtained using the fact $ (\ell_0, \sfhol^{-1}(\ell_0))^{-1}\cdot (\ell(t),\sfhol^{-1}(\ell(t))) \in \sfN $, and the last equality is just~\eqref{eq:holder}.
        
        \paragraph{Connection and curvature on $\widehat{L_0\sfG}$.}
        We start by giving a formula for the connection 1-form on $ P_0L_0\sfG\times\sfU(1)$ for the 2-form curvature $\omega$ as given in~\eqref{eq:2FormCurvatureKacMoody}. As $ \omega \in \Omega^2(L_0\sfG,\fru(1))$ is closed and $P_0L_0\sfG$ is contractible, we can use the contracting homotopy $\Phi_t:P_0L_0\sfG\rightarrow P_0L_0\sfG$ for all $t\in[0,\infty)$ given by $(\Phi_t f)(s,r)\coloneqq f(\rme^{-t}s,r)$ for all $f\in P_0L_0\sfG$. Then, we set
        \begin{equation}
            \hat\mu\ \coloneqq\ \mu'-\int_0^\infty\rmd t\,\Phi^*_t(\iota_\xi\pi_1^*\boundary^*\omega)\ \in\ \Omega^1(P_0L_0\sfG\times\sfU(1),\fru(1))~,
        \end{equation}
        where $\mu'$ is the pull-back to $P_0L_0\sfG\times\sfU(1)$ of the (imaginary) left-invariant Maurer--Cartan form on $\sfU(1)$, $\xi$ is the vector field of the flow $\Phi_t$, $\iota_\xi$ is the contraction by $\xi$, $ \pi_1 $ is the projection onto the first component from $ P_0L_0\sfG \times \sfU(1)$, and $\boundary$ is the endpoint evaluation map $P_0L_0\sfG\rightarrow L_0\sfG$. Explicitly, for all $(f,z)\in P_0L_0\sfG\times\sfU(1)$ we obtain
        \begin{equation}\label{eq:hatmu}
            \hat{\mu}_{(f,z)}\ =\ z^{-1}\rmd z+\frac{\rmi}{2\pi}\int_0^1\rmd s\int_0^1\rmd r\,\innerLarge{f^{-1}(s,r)\parder[f(s,r)]{s}}{\parder{r}(f^{-1}(s,r)\rmd f(s,r))}.
        \end{equation}
        Using the identities
        \begin{subequations}
            \begin{equation}
                \rmd\left(f^{-1}(s,r)\parder[f(s,r)]{s}\right)\ =\ \parder{s}(f^{-1}(s,r)\rmd f(s,r))+\left[f^{-1}(s,r)\parder[f(s,r)]{s},f^{-1}(s,r)\rmd f(s,r)\right],    
            \end{equation}
            \begin{equation}
                \begin{aligned}
                    &\rmd\innerLarge{f^{-1}(s,r)\parder[f(s,r)]{s}}{\parder{r}(f^{-1}(s,r)\rmd f(s,r))}
                    \\
                    &\kern1cm=\ \innerLarge{\parder{s}(f^{-1}(s,r)\rmd f(s,r))}{\parder{r}(f^{-1}(s,r)\rmd f(s,r))},
                \end{aligned}
            \end{equation}
            and
            \begin{equation}
                \begin{aligned}
                    &\int_0^1\rmd s\int_0^1\rmd r\innerLarge{\parder{s}(f^{-1}(s,r)\rmd f(s,r))}{\parder{r}(f^{-1}(s,r)\rmd f(s,r))}
                    \\
                    &\kern2cm=\ \frac{1}{2}\int_0^1\rmd r\innerLarge{f^{-1}(1,r)\rmd f(1,r)}{\parder{r}(f^{-1}(1,r)\rmd f(1,r))},
                \end{aligned}
            \end{equation}
        \end{subequations}
        we then find that 
        \begin{equation}\label{eq:dofmu}
            \rmd\hat\mu\ =\ \pi_1^*\boundary^*\omega~.
        \end{equation}
        Finally, we wish to verify that $\hat\mu$ descends to the quotient space $(P_0L_0\sfG\times\sfU(1))/\sfN\cong\widehat{L_0\sfG}$. This requires checking that it does not depend on elements in the subgroup $\sfN$. Indeed, for all $\hat h\in P_0L_0\sfG\times\sfU(1)$ and for all $\hat n\in\sfN$, it follows that
        \begin{equation}\label{eq:muinvar}
            R^*_{\hat n}\hat\mu_{\hat h\hat n}\ =\ \hat\mu_{\hat h}\ =\ L^*_{\hat n}\hat\mu_{\hat n \hat h}~,
        \end{equation}
        where $R_{\hat n}$ and $ L_{\hat n} $ are the right and left multiplications by $\hat n$, respectively. To see this, one uses formula~\eqref{eq:holderhol}. Alternatively and more explicitly, since the Lie derivative generates pullbacks\footnote{Recall that $\Phi_t^*\alpha=\rme^{t\caL_V}\alpha$ for $\Phi_t$ a flow generated by some vector field $V$ and a form $\alpha$.} it is enough to check that 
        \begin{equation}
            \caL_{V^{R,L}}\hat \mu\ =\ 0 ~,
        \end{equation}
        where $V^R$ and $V^L$ are the right and left fundamental vector fields corresponding to the action by right and left multiplications by elements in $\sfN$, respectively. Using Cartan's formula $\caL_V=\rmd\iota_V+\iota_V\rmd$ and the fact that $\iota_V\rmd\hat\mu =\iota_V\pi_1^*\boundary^*\omega=0$, which follows from the verticality of $V^{R,L}$ with respect to $\boundary$, we only need to check that
        \begin{equation}\label{eq:hatmuX}
            \iota_V\hat\mu\ =\ 0~.
        \end{equation}
        For this, consider a path with $L(t)=(\ell(t),\sfhol^{-1}(\ell(t)))\in\sfN$ which begins at the identity $L(0)=(\unit_{P_0L_0\sfG},\unit_{\sfU(1)})$. This implies that $\dot\ell(0)\coloneqq\dder{t}\big|_0\ell(t)\in L_0L_0\frg$. At a point $F=(f,z)\in P_0L_0\sfG$, the fundamental vector field corresponding to right multiplication by $\sfN$ has the explicit form
        \begin{equation}
            V^R_F\ =\ \dder{t}\bigg|_0F\cdot L(t)\ =\ \left(f\dot\ell(0),z\dder{t}\bigg|_0c(f,\ell(t))\right),
        \end{equation}
        where we have used~\eqref{eq:holder} in the second equality. Contracting the vector field with~\eqref{eq:hatmu}, we find
        \begin{equation}\label{eq:hatmuX_explicit}
            \hat\mu_F (X_F^R)\ =\ \dder{t}\bigg|_0 c(f,\ell(t))+\frac{\rmi}{2\pi}\int_0^1\rmd s\int_0^1\rmd r\,\innerLarge{f^{-1}(s,r)\parder[f(s,r)]{s}}{\parder[\dot \ell(0)]{r}}\ =\ 0~,
        \end{equation} 
        which follows from the explicit expression for the cocycle in~\eqref{eq:cocyclePLGUBaez} and identity~\eqref{eq:helpful_identity}. We can establish~\eqref{eq:hatmuX} for $ V^L $ analogously, or just argue by normality of $ \sfN $. 
        
        Notice that~\eqref{eq:hatmuX_explicit} (and thus~\eqref{eq:hatmuX}), which is the necessary condition for $\hat\mu$ to descend to $\widehat{L_0\sfG}$, relates the expression for the cocycle to that of the connection and through~\eqref{eq:dofmu} to that of the curvature $\omega$. In particular, the correct signs of individual terms in~\eqref{eq:hatmu} are crucial for consistency.
        
        \paragraph{Maurer--Cartan form on $\widehat{L_0\sfG}$.}
        Next, we construct the (left and right) Maurer--Cartan form on $\widehat{L_0\sfG}$. We start with the Maurer--Cartan form on $P_0L_0\sfG\times\sfU(1)$. Consider a path $\hat h(t)=(f(t),z(t))$ in $P_0L_0\sfG\times\sfU(1)$ for all $t\in[0,1]$ such that $\dder{t}\big|_0\hat h(t)\in T_{\hat h(0)}(P_0L_0\sfG\times\sfU(1))$ is its tangent vector at $\hat h(0)$. The left-invariant Maurer--Cartan form $\theta^L$ is then given by its action on $\dder{t}\big|_0\hat h(t)$,
        \begin{equation}
            \hat\theta^L_{\hat h(0)}\left(\dder{t}\bigg|_0\hat h(t)\right)\ \coloneqq\ \hat h^{-1}(0)\dder{t}\bigg|_0\hat h(t)~,
        \end{equation}
        so that
        \begin{equation}\label{eq:mcl}
            \hat\theta^L_{(f,z)}\ =\ \big(\theta^L_f,\hat\mu_{(f,z)}\big)
        \end{equation}
        for all $(f,z)\in P_0L_0\sfG\times\sfU(1)$ and where $\hat\mu$ was given in~\eqref{eq:hatmu}. Likewise, one can compute the right-invariant Maurer--Cartan form 
        \begin{equation}\label{eq:mcr}
            \hat\theta^R_{(f,z)}\ =\ \left(\theta^R_f,\hat\mu_{(f,z)}+\frac{\rmi}{2\pi}\int_0^1\rmd r\innerLarge{f^{-1}(1,r)\parder[\,f(1,r)]{r}}{\theta^L_{f(1,r)}}\right),
        \end{equation}
        where $\theta^R$ is the right-invariant Maurer--Cartan form on $P_0L_0\sfG$. Due to the identity~\eqref{eq:dofmu}, it is easy to verify the Maurer--Cartan equations
        \begin{equation}
            \rmd\hat\theta^{L,R}\ =\ \mp\tfrac12[\hat\theta^{L,R},\hat\theta^{L,R}]~,
        \end{equation}
        where the upper/lower sign is for $ L/R $. As shown previously, $\hat\mu$ descends to $\widehat{L_0\sfG}$. This, in turn, implies that also $\hat\theta^L$ and $\hat\theta^R $ descend to $\widehat{L_0\sfG}$. 
        
        \paragraph{A formula for $\hat h\nabla\hat h^{-1}$.} 
        If we want to explicitly work out the cocycle and coboundary conditions~\eqref{eq:adjustedCocycleConditions},~\eqref{eq:adjustedCoboundaryConditions}, and~\eqref{eq:adjustedHigherGaugeTransformations} for a $ \sfString(\sfG) $-bundle, we have to make sense of the expression
        \begin{equation}
            \hat h\nabla\hat h^{-1}\ =\ \hat h\rmd\hat h^{-1}+ \hat h(A\acton\hat h^{-1})~,
        \end{equation} 
        where $\hat h=[(f,z)]\in (P_0L_0\sfG\times\sfU(1))/\sfN$ and $A$ is a 1-form valued in $P_0\frg$. We have already identified the first part as $\hat h\rmd\hat h^{-1}=-\hat \theta^R_{\hat h}$ in the previous paragraph. The second part is obtained by differentiating at the identity of $P_0\sfG$ the formula
        \begin{equation}
            \begin{aligned}
                &\hat h(g\acton\hat h^{-1})
                \\
                &\kern5pt\ =\  [(f,z)](g\acton[(f,z)]^{-1})
                \\
                &\kern5pt\ =\ \bigg[\bigg(fgf^{-1}g^{-1},c^{-1}(f,f^{-1})c(f,gf^{-1}g^{-1})
                \\
                &\kern3cm\times\exp\left(-\frac{\rmi}{2\pi}\int_0^1\rmd s\int_0^1\rmd r\innerLarge{g^{-1}(r)\parder[g(r)]{r}}{\parder[f(s,r)]{s}f^{-1}(s,r)}\right)\bigg)\bigg],
            \end{aligned}
        \end{equation}
        where $g\in P_0\sfG$, and which is derived using~\eqref{eq:actionOfP0G} and the group multiplication in $P_0L_0\sfG\times \sfU(1)$. Setting $ g=1+A $ in the above formula, using the fact that $\parder{s}A=0$, and dividing by the ideal $\frn$, we find
        \begin{equation}\label{eq:hAhinv}
            \hat h (A\acton\hat h^{-1})\ =\ \left(\boundary(f)[A,\boundary(f^{-1})],\frac{\rmi}{2\pi}\int_0^1\rmd r\innerLarge{\boundary(f^{-1}(s,r))\parder[\,\boundary(f(s,r))]{r}}{A(r)}\right).
        \end{equation}
        
    \end{body}
    

\begin{thebibliography}{100}

\bibitem{Murray:9407015}
M.~K.~Murray,
{\em Bundle gerbes,}
\href{https://dx.doi.org/10.1112/jlms/54.2.403}{J. Lond. Math. Soc. {\bf 54}
  (1996) 403} [{\tt
  \href{https://www.arxiv.org/abs/dg-ga/9407015}{dg-ga/9407015}}].

\bibitem{Murray:2007ps}
M.~K.~Murray,
{\em {An introduction to bundle gerbes},}
in: ``The many facets of geometry: A tribute to Nigel Hitchin,'' eds.\ O.\
  Garc{\'i}a--Prada, J.-P.\ Bourguignon and S.\ Salamon, Oxford University
  Press, Oxford, 2010
[\href{https://dx.doi.org/10.1093/acprof:oso/9780199534920.003.0012}{doi}]
[{\tt \href{https://www.arxiv.org/abs/0712.1651}{0712.1651 [math.DG]}}].

\bibitem{Hitchin:1999fh}
N.~J.~Hitchin,
{\em {Lectures on special Lagrangian submanifolds},}
{\tt \href{https://www.arxiv.org/abs/math.DG/9907034}{math.DG/9907034}}.

\bibitem{Chatterjee:1998}
D.~S.~Chatterjee,
{\em On the construction of abelian gerbs,} PhD thesis, Trinity College
  Cambridge (1998)
available
  \href{https://ethos.bl.uk/OrderDetails.do?did=1&uin=uk.bl.ethos.286657}{online}.

\bibitem{Bergshoeff:1981um}
E.~Bergshoeff, M.~de~Roo, B.~de~Wit, and P.~van~Nieuwenhuizen,
{\em {Ten-dimensional Maxwell--Einstein supergravity, its currents, and the
  issue of its auxiliary fields},}
\href{https://dx.doi.org/10.1016/0550-3213(82)90050-5}{Nucl. Phys. B {\bf 195}
  (1982)~97}.

\bibitem{Chapline:1982ww}
G.~F.~Chapline and N.~S.~Manton,
{\em {Unification of Yang--Mills theory and supergravity in ten dimensions},}
\href{https://dx.doi.org/10.1016/0370-2693(83)90633-0}{Phys. Lett. B {\bf 120}
  (1983) 105}.

\bibitem{Nikolaus:2018qop}
T.~Nikolaus and K.~Waldorf,
{\em Higher geometry for non-geometric T-duals,}
\href{https://dx.doi.org/10.1007/s00220-019-03496-3}{Commun. Math. Phys. {\bf
  374}  (2019) 317} [{\tt
  \href{https://www.arxiv.org/abs/1804.00677}{1804.00677 [math.AT]}}].

\bibitem{Kim:2022opr}
H.~Kim and C.~Saemann,
{\em Non-geometric T-duality as higher groupoid bundles with connections,}
{\tt \href{https://www.arxiv.org/abs/2204.01783}{2204.01783 [hep-th]}}.

\bibitem{Saemann:2017zpd}
C.~Saemann and L.~Schmidt,
{\em {Towards an M5-brane model I: A 6d superconformal field theory},}
\href{https://dx.doi.org/10.1063/1.5026545}{J. Math. Phys. {\bf 59}  (2018)
  043502} [{\tt \href{https://www.arxiv.org/abs/1712.06623}{1712.06623
  [hep-th]}}].

\bibitem{Saemann:2019dsl}
C.~Saemann and L.~Schmidt,
{\em {Towards an {M}5-brane model {II}: Metric string structures},}
\href{https://dx.doi.org/10.1002/prop.202000051}{Fortschr. Phys. {\bf 68}
  (2020) 2000051} [{\tt \href{https://www.arxiv.org/abs/1908.08086}{1908.08086
  [hep-th]}}].

\bibitem{Rist:2020uaa}
D.~Rist, C.~Saemann, and M.~van~der~Worp,
{\em Towards an {M}5-brane model {III}: {S}elf-duality from additional trivial
  fields,}
\href{https://dx.doi.org/10.1007/JHEP06(2021)036}{JHEP {\bf 2106}  (2021) 036}
  [{\tt \href{https://www.arxiv.org/abs/2012.09253}{2012.09253 [hep-th]}}].

\bibitem{Fiorenza:2020hiq}
D.~Fiorenza, H.~Sati, and U.~Schreiber,
{\em {Twisted cohomotopy implies twisted string structure on M5-branes},}
\href{https://dx.doi.org/10.1063/5.0037786}{J. Math. Phys. {\bf 62}  (2021)
  042301} [{\tt \href{https://www.arxiv.org/abs/2002.11093}{2002.11093
  [hep-th]}}].

\bibitem{Jurco:2019woz}
B.~Jur\v{c}o, U.~Schreiber, C.~Saemann, and M.~Wolf,
{\em {Higher structures in M-theory},}
in: ``Higher Structures in M-Theory,'' proceedings of the
  \href{http://www.maths.dur.ac.uk/lms/109/index.html}{LMS/EPSRC Durham
  Symposium}, 12-18 August 2018
[{\tt \href{https://www.arxiv.org/abs/1903.02807}{1903.02807 [hep-th]}}].

\bibitem{Saemann:2012uq}
C.~Saemann and M.~Wolf,
{\em {Non-abelian tensor multiplet equations from twistor space},}
\href{https://dx.doi.org/10.1007/s00220-014-2022-0}{Commun. Math. Phys. {\bf
  328}  (2014) 527} [{\tt \href{https://www.arxiv.org/abs/1205.3108}{1205.3108
  [hep-th]}}].

\bibitem{Saemann:2013pca}
C.~Saemann and M.~Wolf,
{\em {Six-dimensional superconformal field theories from principal 3-bundles
  over twistor space},}
\href{https://dx.doi.org/10.1007/s11005-014-0704-3}{Lett. Math. Phys. {\bf 104}
   (2014) 1147} [{\tt \href{https://www.arxiv.org/abs/1305.4870}{1305.4870
  [hep-th]}}].

\bibitem{Jurco:2014mva}
B.~Jur\v{c}o, C.~Saemann, and M.~Wolf,
{\em {Semistrict higher gauge theory},}
\href{https://dx.doi.org/10.1007/JHEP04(2015)087}{JHEP {\bf 1504}  (2015) 087}
  [{\tt \href{https://www.arxiv.org/abs/1403.7185}{1403.7185 [hep-th]}}].

\bibitem{Jurco:2016qwv}
B.~Jur\v{c}o, C.~Saemann, and M.~Wolf,
{\em {Higher groupoid bundles, higher spaces, and self-dual tensor field
  equations},}
\href{https://dx.doi.org/10.1002/prop.201600031}{Fortschr. Phys. {\bf 64}
  (2016) 674} [{\tt \href{https://www.arxiv.org/abs/1604.01639}{1604.01639
  [hep-th]}}].

\bibitem{Samann:2017dah}
C.~Saemann and M.~Wolf,
{\em {Supersymmetric {Y}ang--{M}ills theory as higher {C}hern--{S}imons
  theory},}
\href{https://dx.doi.org/10.1007/JHEP07(2017)111}{JHEP {\bf 1707}  (2017) 111}
  [{\tt \href{https://www.arxiv.org/abs/1702.04160}{1702.04160 [hep-th]}}].

\bibitem{Nikolaus:2011ag}
T.~Nikolaus and K.~Waldorf,
{\em {Four equivalent versions of non-abelian gerbes},}
\href{https://dx.doi.org/10.2140/pjm.2013.264.355}{Pacific J. Math. {\bf 246}
  (2013) 355} [{\tt \href{https://www.arxiv.org/abs/1103.4815}{1103.4815
  [math.DG]}}].

\bibitem{Breen:math0106083}
L.~Breen and W.~Messing,
{\em Differential geometry of gerbes,}
\href{https://dx.doi.org/10.1016/j.aim.2005.06.014}{Adv. Math. {\bf 198}
  (2005) 732} [{\tt
  \href{https://www.arxiv.org/abs/math.AG/0106083}{math.AG/0106083}}].

\bibitem{Aschieri:2003mw}
P.~Aschieri, L.~Cantini, and B.~Jur\v{c}o,
{\em Nonabelian bundle gerbes, their differential geometry and gauge theory,}
\href{https://dx.doi.org/10.1007/s00220-004-1220-6}{Commun. Math. Phys. {\bf
  254}  (2005) 367} [{\tt
  \href{https://www.arxiv.org/abs/hep-th/0312154}{hep-th/0312154}}].

\bibitem{Sati:2008eg}
H.~Sati, U.~Schreiber, and J.~Stasheff,
{\em $L_\infty$-algebra connections and applications to String- and
  Chern--Simons $n$-transport,}
in: ``Quantum Field Theory,'' eds. B. Fauser, J. Tolksdorf and E. Zeidler, p.
  303, Birkh{\"a}user 2009
[\href{https://dx.doi.org/10.1007/978-3-7643-8736-5_17}{doi}]
[{\tt \href{https://www.arxiv.org/abs/0801.3480}{0801.3480 [math.DG]}}].

\bibitem{Baez:0511710}
J.~C.~Baez and U.~Schreiber,
{\em Higher gauge theory,}
\href{https://dx.doi.org/10.1090/conm/431/08264}{Contemp. Math. {\bf 431}
  (2007)~7} [{\tt
  \href{https://www.arxiv.org/abs/math.DG/0511710}{math.DG/0511710}}].

\bibitem{Schreiber:0705.0452}
U.~Schreiber and K.~Waldorf,
{\em Parallel transport and functors,}
\href{https://tcms.org.ge/Journals/JHRS/xvolumes/2009/n1a10/v4n1a10hl.pdf}{J.
  Homot. Relat. Struct. {\bf 4}  (2009) 187} [{\tt
  \href{https://www.arxiv.org/abs/0705.0452}{0705.0452 [math.DG]}}].

\bibitem{Schreiber:2008aa}
U.~Schreiber and K.~Waldorf,
{\em Connections on non-abelian gerbes and their holonomy,}
\href{http://www.tac.mta.ca/tac/volumes/28/17/28-17.pdf}{Th. Appl. Cat. {\bf
  28}  (2013) 476} [{\tt \href{https://www.arxiv.org/abs/0808.1923}{0808.1923
  [math.DG]}}].

\bibitem{Gastel:2018joi}
A.~Gastel,
{\em {Canonical gauges in higher gauge theory},}
\href{https://dx.doi.org/10.1007/s00220-019-03530-4}{Commun. Math. Phys. {\bf
  376}  (2019) 1053} [{\tt
  \href{https://www.arxiv.org/abs/1810.06278}{1810.06278 [math-ph]}}].

\bibitem{Sati:2009ic}
H.~Sati, U.~Schreiber, and J.~Stasheff,
{\em {Differential twisted string and fivebrane structures},}
\href{https://dx.doi.org/10.1007/s00220-012-1510-3}{Commun. Math. Phys. {\bf
  315}  (2012) 169} [{\tt \href{https://www.arxiv.org/abs/0910.4001}{0910.4001
  [math.AT]}}].

\bibitem{Fiorenza:2010mh}
D.~Fiorenza, U.~Schreiber, and J.~Stasheff,
{\em {\v{C}ech cocycles for differential characteristic classes -- An
  infinity-Lie theoretic construction},}
\href{https://dx.doi.org/10.4310/ATMP.2012.v16.n1.a5}{Adv. Theor. Math. Phys.
  {\bf 16}  (2012) 149} [{\tt
  \href{https://www.arxiv.org/abs/1011.4735}{1011.4735 [math.AT]}}].

\bibitem{Kim:2019owc}
H.~Kim and C.~Saemann,
{\em Adjusted parallel transport for higher gauge theories,}
\href{https://dx.doi.org/10.1088/1751-8121/ab8ef2}{J. Phys. A {\bf 52}  (2020)
  445206} [{\tt \href{https://www.arxiv.org/abs/1911.06390}{1911.06390
  [hep-th]}}].

\bibitem{Borsten:2021ljb}
L.~Borsten, H.~Kim, and C.~Saemann,
{\em $EL_\infty$-algebras, generalized geometry, and tensor hierarchies,}
{\tt \href{https://www.arxiv.org/abs/2106.00108}{2106.00108 [hep-th]}}.

\bibitem{mclaughlin1992orientation}
D.~McLaughlin,
{\em Orientation and string structures on loop space,}
\href{https://msp.org/pjm/1992/155-1/pjm-v155-n1-p08-p.pdf}{Pacific J. Math.
  {\bf 155}  (1992) 143}.

\bibitem{Nikolaus:2011zg}
T.~Nikolaus, C.~Sachse, and C.~Wockel,
{\em {A smooth model for the string group},}
\href{https://dx.doi.org/10.1093/imrn/rns154}{Int. Math. Res. Notices. {\bf
  2013}  (2013) 3678} [{\tt
  \href{https://www.arxiv.org/abs/1104.4288}{1104.4288 [math.AT]}}].

\bibitem{Baez:2005sn}
J.~C.~Baez, D.~Stevenson, A.~S.~Crans, and U.~Schreiber,
{\em {From loop groups to 2-groups},}
\href{http://projecteuclid.org/euclid.hha/1201127333}{Homol. Homot. Appl. {\bf
  9}  (2007) 101} [{\tt
  \href{https://www.arxiv.org/abs/math.QA/0504123}{math.QA/0504123}}].

\bibitem{Schommer-Pries:0911.2483}
C.~Schommer{--}Pries,
{\em Central extensions of smooth 2-groups and a finite-dimensional string
  2-group,}
\href{https://dx.doi.org/10.2140/gt.2011.15.609}{Geom. Top. {\bf 15}  (2011)
  609} [{\tt \href{https://www.arxiv.org/abs/0911.2483}{0911.2483 [math.AT]}}].

\bibitem{Demessie:2016ieh}
G.~A.~Demessie and C.~Saemann,
{\em {Higher gauge theory with string 2-groups},}
\href{https://dx.doi.org/10.4310/ATMP.2017.v21.n8.a2}{Adv. Theor. Math. Phys.
  {\bf 21}  (2017) 1895} [{\tt
  \href{https://www.arxiv.org/abs/1602.03441}{1602.03441 [math-ph]}}].

\bibitem{Bagger:2007jr}
J.~Bagger and N.~D.~Lambert,
{\em Gauge symmetry and supersymmetry of multiple M2-branes,}
\href{https://dx.doi.org/10.1103/PhysRevD.77.065008}{Phys. Rev. D {\bf 77}
  (2008) 065008} [{\tt \href{https://www.arxiv.org/abs/0711.0955}{0711.0955
  [hep-th]}}].

\bibitem{Gustavsson:2007vu}
A.~Gustavsson,
{\em Algebraic structures on parallel M2-branes,}
\href{https://dx.doi.org/10.1016/j.nuclphysb.2008.11.014}{Nucl. Phys. B {\bf
  811}  (2009)~66} [{\tt \href{https://www.arxiv.org/abs/0709.1260}{0709.1260
  [hep-th]}}].

\bibitem{Saemann:2017rjm}
C.~Saemann and L.~Schmidt,
{\em {The non-abelian self-dual string and the (2,0)-theory},}
\href{https://dx.doi.org/10.1007/s11005-019-01250-3}{Lett. Math. Phys. {\bf
  110}  (2020) 1001} [{\tt
  \href{https://www.arxiv.org/abs/1705.02353}{1705.02353 [hep-th]}}].

\bibitem{Baez:0307200}
J.~C.~Baez and A.~D.~Lauda,
{\em Higher-dimensional algebra V: 2-groups,}
\href{http://www.kurims.kyoto-u.ac.jp/EMIS/journals/TAC/volumes/12/14/12-14.pdf}{Th.
  App. Cat. {\bf 12}  (2004) 423} [{\tt
  \href{https://www.arxiv.org/abs/math.QA/0307200}{math.QA/0307200}}].

\bibitem{Aldrovandi:0808.3627}
E.~Aldrovandi and B.~Noohi,
{\em Butterflies I: Morphisms of 2-group stacks,}
\href{https://dx.doi.org/10.1016/j.aim.2008.12.014}{Adv. Math. {\bf 221}
  (2009) 687} [{\tt \href{https://www.arxiv.org/abs/0808.3627}{0808.3627
  [math.CT]}}].

\bibitem{Noohi:0506313}
B.~Noohi,
{\em On weak maps between 2-groups,}
{\tt \href{https://www.arxiv.org/abs/math.CT/0506313}{math.CT/0506313}}.

\bibitem{Noohi:0910.1818}
B.~Noohi,
{\em Integrating morphisms of Lie 2-algebras,}
\href{https://dx.doi.org/10.1112/S0010437X1200067X}{Compositio Math. {\bf 149}
  (2013) 264} [{\tt \href{https://www.arxiv.org/abs/0910.1818}{0910.1818
  [math.QA]}}].

\bibitem{Jurco:2018sby}
B.~Jur\v{c}o, L.~Raspollini, C.~Saemann, and M.~Wolf,
{\em {$L_\infty$-algebras of classical field theories and the
  {B}atalin--{V}ilkovisky formalism},}
\href{https://dx.doi.org/10.1002/prop.201900025}{Fortsch. Phys. {\bf 67}
  (2019) 1900025} [{\tt \href{https://www.arxiv.org/abs/1809.09899}{1809.09899
  [hep-th]}}].

\bibitem{Jurco:2019bvp}
B.~Jur\v{c}o, T.~Macrelli, L.~Raspollini, C.~Saemann, and M.~Wolf,
{\em {$L_\infty$-algebras, the BV formalism, and classical fields},}
in: ``Higher Structures in M-Theory,'' proceedings of the
  \href{http://www.maths.dur.ac.uk/lms/109/index.html}{LMS/EPSRC Durham
  Symposium}, 12--18 August 2018
[{\tt \href{https://www.arxiv.org/abs/1903.02887}{1903.02887 [hep-th]}}].

\bibitem{Demessie:2014ewa}
G.~A.~Demessie and C.~Saemann,
{\em {Higher Poincar\'e lemma and integrability},}
\href{https://dx.doi.org/10.1063/1.4929537}{J. Math. Phys. {\bf 56}  (2015)
  082902} [{\tt \href{https://www.arxiv.org/abs/1406.5342}{1406.5342
  [hep-th]}}].

\bibitem{Ho:2012nt}
P.-M.~Ho and Y.~Matsuo,
{\em {Note on non-abelian two-form gauge fields},}
\href{https://dx.doi.org/10.1007/JHEP09(2012)075}{JHEP {\bf 1209}  (2012) 075}
  [{\tt \href{https://www.arxiv.org/abs/1206.5643}{1206.5643 [hep-th]}}].

\bibitem{Munkres:2014aa}
J.~Munkres,
{\em Topology,} Springer, 2014.

\bibitem{Dabrowski:1986en}
L.~Dabrowski and A.~Trautman,
{\em {Spinor structures in spheres and projective spaces},}
\href{https://dx.doi.org/10.1063/1.527021}{J. Math. Phys. {\bf 27}  (1986)
  2022}.

\bibitem{Fiorenza:2019usl}
D.~Fiorenza, H.~Sati, and U.~Schreiber,
{\em Twisted cohomotopy implies M-theory anomaly cancellation on 8-manifolds,}
\href{https://dx.doi.org/10.1007/s00220-020-03707-2}{Commun. Math. Phys. {\bf
  377}  (2020) 1961} [{\tt
  \href{https://www.arxiv.org/abs/1904.10207}{1904.10207 [hep-th]}}].

\bibitem{Porteous:1995eh}
I.~R.~Porteous,
{\em {Clifford algebras and the classical groups},} Cambridge University Press,
  1995.

\bibitem{Atiyah:1979iu}
M.~F.~Atiyah,
{\em Geometry of Yang--Mills fields,} Lezioni Fermiane, Pisa, 1979.

\bibitem{Popov:2015wsa}
A.~D.~Popov,
{\em Loop groups in {Y}ang--{M}ills theory,}
\href{https://dx.doi.org/10.1016/j.physletb.2015.07.041}{Phys. Lett. B {\bf
  748}  (2015) 439} [{\tt
  \href{https://www.arxiv.org/abs/1505.06634}{1505.06634 [hep-th]}}].

\bibitem{Waldorf:2009uf}
K.~Waldorf,
{\em String connections and Chern--Simons theory,}
\href{https://dx.doi.org/10.1090/S0002-9947-2013-05816-3}{Trans. Amer. Math.
  Soc. {\bf 365}  (2013) 4393} [{\tt
  \href{https://www.arxiv.org/abs/0906.0117}{0906.0117 [math.DG]}}].

\bibitem{Tellez-Dominguez:2023wwr}
R.~Tellez-Dominguez,
{\em {C}hern correspondence for higher principal bundles,}
{\tt \href{https://www.arxiv.org/abs/2310.12738}{2310.12738 [math.DG]}}.

\bibitem{Carey:1989ck}
A.~L.~Carey and M.~K.~Murray,
{\em String structures and the path fibration of a group,}
\href{https://dx.doi.org/10.1007/BF02102809}{Commun. Math. Phys. {\bf 141}
  (1991) 441}.

\bibitem{Murray:2001xq}
M.~K.~Murray and D.~Stevenson,
{\em {Higgs fields, bundle gerbes and string structures},}
\href{https://dx.doi.org/10.1007/s00220-003-0984-4}{Commun. Math. Phys. {\bf
  243}  (2003) 541} [{\tt
  \href{https://www.arxiv.org/abs/math.DG/0106179}{math.DG/0106179}}].

\bibitem{Pressley:1988qk}
A.~Pressley and G.~Segal,
{\em Loop groups,} Oxford Mathematical Monographs, Oxford, 1988.

\bibitem{Mickelsson:1989hp}
J.~Mickelsson,
{\em Current algebras and groups,} Plenum Press, New York, 1989
[\href{https://dx.doi.org/10.1007/978-1-4757-0295-8}{doi}].

\bibitem{Murray:1987ua}
M.~K.~Murray,
{\em Another construction of the central extension of the loop group,}
\href{https://dx.doi.org/10.1007/BF01239026}{Commun. Math. Phys. {\bf 116}
  (1988)~73}.

\bibitem{Mickelsson:1987:173-183}
J.~Mickelsson,
{\em Kac--Moody groups, topology of the Dirac determinant bundle, and
  fermionization,}
\href{https://dx.doi.org/10.1007/BF01207361}{Commun. Math. Phys. {\bf 110}
  (1987) 173}.

\bibitem{Murray:2001eu}
M.~K.~Murray and D.~Stevenson,
{\em {Yet another construction of the central extension of the loop group},}
{National Research Symposium on Geometric Analysis and Applications Canberra,
  Australia, June 26-30, 2000}
[{\tt \href{https://www.arxiv.org/abs/math.DG/0101258}{math.DG/0101258}}].

\bibitem{Roberts:2022wwl}
D.~M.~Roberts,
{\em Explicit string bundles,}
notes for talk at the ``Workshop on Higher Gauge Theory and Higher
  Quantization,'' Heriot--Watt University, Edinburgh, June 26/27 2014
[{\tt \href{https://www.arxiv.org/abs/2203.04544}{2203.04544 [math.DG]}}].

\bibitem{Brown:1976:296-302}
R.~Brown and C.~B.~Spencer,
{\em $\scG$-groupoids, crossed modules and the fundamental groupoid of a
  topological group,}
\href{https://dx.doi.org/10.1016/1385-7258(76)90068-8}{Indag. Math. (Proc.)
  {\bf 79}  (1976) 296}.

\bibitem{Porst:0812.1464}
S.-S.~Porst,
{\em Strict 2-groups are crossed modules,}
{\tt \href{https://www.arxiv.org/abs/0812.1464}{0812.1464 [math.CT]}}.

\bibitem{Jurco:2005qj}
B.~Jur\v{c}o,
{\em Crossed module bundle gerbes; classification, string group and
  differential geometry,}
\href{https://dx.doi.org/10.1142/S0219887811005555}{Int. J. Geom. Meth. Mod.
  Phys. 08. {\bf 2011}  (2011) 1079} [{\tt
  \href{https://www.arxiv.org/abs/math.DG/0510078}{math.DG/0510078}}].

\bibitem{Howe:1997ue}
P.~S.~Howe, N.~D.~Lambert, and P.~C.~West,
{\em The self-dual string soliton,}
\href{https://dx.doi.org/10.1016/S0550-3213(97)00750-5}{Nucl. Phys. B {\bf 515}
   (1998) 203} [{\tt
  \href{https://www.arxiv.org/abs/hep-th/9709014}{hep-th/9709014}}].

\bibitem{Akyol:2012cq}
M.~Akyol and G.~Papadopoulos,
{\em {(1,0) superconformal theories in six dimensions and Killing spinor
  equations},}
\href{https://dx.doi.org/10.1007/JHEP07(2012)070}{JHEP {\bf 1207}  (2012) 070}
  [{\tt \href{https://www.arxiv.org/abs/1204.2167}{1204.2167 [hep-th]}}].

\bibitem{Saemann:2011nb}
C.~Saemann and M.~Wolf,
{\em {On twistors and conformal field theories from six dimensions},}
\href{https://dx.doi.org/10.1063/1.4769410}{J. Math. Phys. {\bf 54}  (2013)
  013507} [{\tt \href{https://www.arxiv.org/abs/1111.2539}{1111.2539
  [hep-th]}}].

\bibitem{Eastwood:1985aa}
M.~G.~Eastwood,
{\em The generalized Penrose--Ward transform,}
\href{https://dx.doi.org/10.1017/S030500410006271X}{Math. Proc. Camb. Phil.
  Soc. {\bf 97}  (1985) 165}.

\bibitem{Ward:1977ta}
R.~S.~Ward,
{\em {On self-dual gauge fields},}
\href{https://dx.doi.org/10.1016/0375-9601(77)90842-8}{Phys. Lett. A {\bf 61}
  (1977)~81}.

\bibitem{Hitchin:1982gh}
N.~J.~Hitchin,
{\em {Monopoles and geodesics},}
\href{https://dx.doi.org/10.1007/BF01208717}{Commun. Math. Phys. {\bf 83}
  (1982) 579}.

\bibitem{Witten:2003nn}
E.~Witten,
{\em {Perturbative gauge theory as a string theory in twistor space},}
\href{https://dx.doi.org/10.1007/s00220-004-1187-3}{Commun. Math. Phys. {\bf
  252}  (2004) 189} [{\tt
  \href{https://www.arxiv.org/abs/hep-th/0312171}{hep-th/0312171}}].

\bibitem{Popov:2004rb}
A.~D.~Popov and C.~Saemann,
{\em On supertwistors, the {P}enrose--{W}ard transform and {$\caN = 4$} super
  {Y}ang--{M}ills theory,}
\href{https://dx.doi.org/10.4310/ATMP.2005.v9.n6.a2}{Adv. Theor. Math. Phys.
  {\bf 9}  (2005) 931} [{\tt
  \href{https://www.arxiv.org/abs/hep-th/0405123}{hep-th/0405123}}].

\bibitem{Popov:2005uv}
A.~D.~Popov, C.~Saemann, and M.~Wolf,
{\em {The topological {B}-model on a mini-supertwistor space and supersymmetric
  {B}ogomolny monopole equations},}
\href{https://dx.doi.org/10.1088/1126-6708/2005/10/058}{JHEP {\bf 0510}  (2005)
  058} [{\tt \href{https://www.arxiv.org/abs/hep-th/0505161}{hep-th/0505161}}].

\bibitem{Isenberg:1978kk}
J.~Isenberg, P.~B.~Yasskin, and P.~S.~Green,
{\em Non-self-dual gauge fields,}
\href{https://dx.doi.org/10.1016/0370-2693(78)90486-0}{Phys. Lett. B {\bf 78}
  (1978) 462}.

\bibitem{Witten:1978xx}
E.~Witten,
{\em An interpretation of classical Yang--Mills theory,}
\href{https://dx.doi.org/10.1016/0370-2693(78)90585-3}{Phys. Lett. B {\bf 77}
  (1978) 394}.

\bibitem{Khenkin:1980ff}
G.~Khenkin and Y.~Manin,
{\em {Twistor description of classical Yang--Mills--Dirac fields},}
\href{https://dx.doi.org/10.1016/0370-2693(80)90178-1}{Phys. Lett. B {\bf 95}
  (1980) 405}.

\bibitem{Pool:1981aa}
R.~Pool,
{\em Some applications of complex geometry to field theory,}
PhD Thesis, Rice University Texas, 1981, available
  \href{https://www.proquest.com/docview/303131865}{online}.

\bibitem{Buchdahl:1985aa}
N.~P.~Buchdahl,
{\em Analysis on analytic spaces and non-self-dual Yang--Mills fields,}
\href{https://dx.doi.org/10.1090/S0002-9947-1985-0776387-3}{Trans. Amer. Math.
  Soc. {\bf 288}  (1985) 431}.

\bibitem{Popov:2021mfl}
A.~D.~Popov,
{\em A twistor space action for {Y}ang--{M}ills theory,}
\href{https://dx.doi.org/10.1103/PhysRevD.104.026015}{Phys. Rev. D {\bf 104}
  (2021) 026015} [{\tt \href{https://www.arxiv.org/abs/2103.11840}{2103.11840
  [hep-th]}}].

\bibitem{Saemann:2005ji}
C.~Saemann,
{\em On the mini-superambitwistor space and {$\caN=8$} super {Y}ang--{M}ills
  theory,}
\href{https://dx.doi.org/10.1155/2009/784215}{Adv. Math. Phys. {\bf 2009}
  (2009) 784215} [{\tt
  \href{https://www.arxiv.org/abs/hep-th/0508137}{hep-th/0508137}}].

\bibitem{Saemann:2012rr}
C.~Saemann, R.~Wimmer, and M.~Wolf,
{\em {A twistor description of six-dimensional $\caN=(1,1)$ super
  {Y}ang--{M}ills theory},}
\href{https://dx.doi.org/10.1007/JHEP05(2012)020}{JHEP {\bf 1205}  (2012)~20}
  [{\tt \href{https://www.arxiv.org/abs/1201.6285}{1201.6285 [hep-th]}}].

\bibitem{Penrose:1976js}
R.~Penrose,
{\em {Nonlinear gravitons and curved twistor theory},}
\href{https://dx.doi.org/10.1007/BF00762011}{Gen. Rel. Grav. {\bf 7}
  (1976)~31}.

\bibitem{Atiyah:1978wi}
M.~Atiyah, N.~J.~Hitchin, and I.~Singer,
{\em {Self-duality in four-dimensional Riemannian geometry},}
\href{https://dx.doi.org/10.1098/rspa.1978.0143}{Proc. Roy. Soc. Lond. A {\bf
  362}  (1978) 425}.

\bibitem{Merkulov:1992}
S.~A.~Merkulov,
{\em Supersymmetric nonlinear graviton,}
\href{https://dx.doi.org/10.1007/BF01077086}{Funct. Anal. Appl. {\bf 26}
  (1992)~72}.

\bibitem{Wolf:2007tx}
M.~Wolf,
{\em Self-dual supergravity and twistor theory,}
\href{https://dx.doi.org/10.1088/0264-9381/24/24/010}{Class. Quant. Grav. {\bf
  24}  (2007) 6287} [{\tt \href{https://www.arxiv.org/abs/0705.1422}{0705.1422
  [hep-th]}}].

\bibitem{Mason:2007ct}
L.~J.~Mason and M.~Wolf,
{\em Twistor actions for self-dual supergravities,}
\href{https://dx.doi.org/10.1007/s00220-009-0732-5}{Commun. Math. Phys. {\bf
  288}  (2009)~97} [{\tt \href{https://www.arxiv.org/abs/0706.1941}{0706.1941
  [hep-th]}}].

\bibitem{0264-9381-2-4-020}
C.~R.~LeBrun,
{\em Ambi-twistors and Einstein's equations,}
\href{https://dx.doi.org/10.1088/0264-9381/2/4/020}{Class. Quant. Grav. {\bf 2}
   (1985) 555}.

\bibitem{Merkulov:1992qa}
S.~A.~Merkulov,
{\em Simple supergravity, supersymmetric nonlinear gravitons and supertwistor
  theory,}
\href{https://dx.doi.org/10.1088/0264-9381/9/11/006}{Class. Quant. Grav. {\bf
  9}  (1992) 2369}.

\bibitem{Mason:2011nw}
L.~J.~Mason, R.~A.~Reid-Edwards, and A.~Taghavi-Chabert,
{\em {Conformal field theories in six-dimensional twistor space},}
\href{https://dx.doi.org/10.1016/j.geomphys.2012.08.001}{J. Geom. Phys. {\bf
  62}  (2012) 2353} [{\tt \href{https://www.arxiv.org/abs/1111.2585}{1111.2585
  [hep-th]}}].

\bibitem{Wolf:2010av}
M.~Wolf,
{\em {A first course on twistors, integrability and gluon scattering
  amplitudes},}
\href{https://dx.doi.org/10.1088/1751-8113/43/39/393001}{J. Phys. A {\bf 43}
  (2010) 393001} [{\tt \href{https://www.arxiv.org/abs/1001.3871}{1001.3871
  [hep-th]}}].

\bibitem{Kobayashi:1963}
S.~Kobayashi and K.~Nomizu,
{\em Foundations of differential geometry. Volume {I},} Interscience, New York,
  1963.

\bibitem{Trautman:1977im}
A.~Trautman,
{\em {Solutions of the Maxwell and {Y}ang--{M}ills equations associated with
  Hopf fibrings},}
\href{https://dx.doi.org/10.1007/BF01811088}{Int. J. Theor. Phys. {\bf 16}
  (1977) 561}.

\bibitem{Harnad:1980ct}
J.~P.~Harnad, J.~Tafel, and S.~Shnider,
{\em {Canonical connections on Riemannian symmetric spaces and solutions to the
  {E}instein--{Y}ang--{M}ills equations},}
\href{https://dx.doi.org/10.1063/1.524658}{J. Math. Phys. {\bf 21}  (1980)
  2236}.

\bibitem{Bais:1985ns}
F.~A.~Bais and P.~Batenburg,
{\em {Y}ang--{M}ills duality in higher dimensions,}
\href{https://dx.doi.org/10.1016/0550-3213(86)90228-2}{Nucl. Phys. B {\bf 269}
  (1986) 363}.

\bibitem{Ivanova:2009yi}
T.~A.~Ivanova, O.~Lechtenfeld, A.~D.~Popov, and T.~Rahn,
{\em {Instantons and Yang--Mills flows on coset spaces},}
\href{https://dx.doi.org/10.1007/s11005-009-0336-1}{Lett. Math. Phys. {\bf 89}
  (2009) 231} [{\tt \href{https://www.arxiv.org/abs/0904.0654}{0904.0654
  [hep-th]}}].

\bibitem{Witten:1987cg}
E.~Witten,
{\em {The index of the Dirac operator in loop space},}
in: ``Elliptic curves and modular forms in algebraic topology,'' Lectures Notes
  in Math.~vol 1326, Springer, 1988.

\bibitem{Killingback:1986rd}
T.~P.~Killingback,
{\em {World sheet anomalies and loop geometry},}
\href{https://dx.doi.org/10.1016/0550-3213(87)90229-X}{Nucl. Phys. B {\bf 288}
  (1987) 578}.

\bibitem{Stolz:2004aa}
S.~Stolz and P.~Teichner,
{\em What is an elliptic object?,}
in ``Topology, geometry and quantum field theory,'' volume~308 of LMS Lecture
  Note Series, p.~247--343. Cambridge University Press, 2004, available
  \href{https://math.berkeley.edu/~teichner/Papers/Oxford.pdf}{online}.

\bibitem{Carey:2004xt}
A.~L.~Carey, S.~Johnson, M.~K.~Murray, D.~Stevenson, and B.-L.~Wang,
{\em Bundle gerbes for Chern--Simons and Wess--Zumino--Witten theories,}
\href{https://dx.doi.org/10.1007/s00220-005-1376-8}{Commun. Math. Phys. {\bf
  259}  (2005) 577} [{\tt
  \href{https://www.arxiv.org/abs/math.DG/0410013}{math.DG/0410013}}].

\bibitem{Azcarraga:2011hqa}
J.~A.~d.~Azc\'arraga and J.~M.~Izquierdo,
{\em {Lie groups, Lie algebras, cohomology and some applications in physics},}
  Cambridge University Press, 1995.

\bibitem{Tuynman:1987ij}
G.~M.~Tuynman and W.~A. J.~J.~Wiegerinck,
{\em {Central extensions and physics},}
\href{https://dx.doi.org/10.1016/0393-0440(87)90027-1}{J. Geom. Phys. {\bf 4}
  (1987) 207}.

\bibitem{neeb2002central}
K.-H.~Neeb,
{\em Central extensions of infinite-dimensional Lie groups,}
\href{http://www.numdam.org/article/AIF_2002__52_5_1365_0.pdf}{Ann. Inst.
  Fourier {\bf 52}  (2002) 1365}.

\bibitem{Mackaay:2000ac}
M.~Mackaay and R.~Picken,
{\em {The holonomy of gerbes with connections},}
\href{https://dx.doi.org/10.1006/aima.2002.2085}{Adv. Math. {\bf 170}  (2002)
  287} [{\tt
  \href{https://www.arxiv.org/abs/math.DG/0007053}{math.DG/0007053}}].

\bibitem{Barrett:1991aj}
J.~W.~Barrett,
{\em {Holonomy and path structures in general relativity and Yang--Mills
  theory},}
\href{https://dx.doi.org/10.1007/BF00671007}{Int. J. Theor. Phys. {\bf 30}
  (1991) 1171}.

\end{thebibliography}
\end{document}